\documentclass[acmsmall,nonacm]{acmart} 

\usepackage[dvipsnames]{xcolor}
\newcommand{\Dee}{\mathcal{D}}
\usepackage{flushend}
\usepackage[a-1b]{pdfx}   
\usepackage{balance}

\usepackage{color}
\usepackage[labelfont=bf]{caption}
\usepackage{subcaption}
\usepackage{hyperref}
\usepackage{makecell}
\usepackage{fixltx2e}
\usepackage{enumitem}
\usepackage{rotating,multirow}
\usepackage{etoolbox,xspace}
\usepackage{algorithmicx}
\usepackage{algorithm}
\usepackage{fontawesome}
\usepackage{algpseudocode}
\usepackage{pgfplots}

\algtext*{EndWhile}
\algtext*{EndIf}
\algtext*{EndFunction}
\algtext*{EndFor}
\algrenewcommand\algorithmicrequire{\textbf{Input:}}
\algrenewcommand\algorithmicensure{\textbf{Output:}}

\usepackage{amsthm}

\usepackage{tabularx}
\usepackage{blindtext}
\usepackage{ragged2e}  
\usepackage{booktabs}  
\usepackage{textcomp}
\usepackage{url}
\usepackage{comment}
\usepackage{enumitem}
\usepackage{nicematrix}
\usepackage[simplified]{pgf-umlcd}
\usepackage{tikz}
\usepackage{ifthen}
\usepackage{diagbox}

\newcolumntype{Y}{>{\RaggedRight\arraybackslash}X}
\DeclareMathOperator*{\argmin}{arg\,min}

\newtheorem{definition}{Definition}
\newtheorem{theorem}{Theorem}
\newtheorem{lemma}{Lemma}
\usepackage{booktabs}
\usepackage{xspace}

\newenvironment{proof}{\paragraph{Proof:}}{\hfill$\square$}
\newcounter{example}
\renewcommand{\theexample}{\arabic{example}}
\newenvironment{example}{
        \vspace{1ex}
        \refstepcounter{example}
        {\noindent\bf \textit{Example} \theexample:}}
	{\vspace{1ex}}

\newcounter{runningexample}
\renewcommand{\therunningexample}{\arabic{runningexample}}
\newenvironment{runningexample}{
        \vspace{1ex}
        \refstepcounter{runningexample}
        {\noindent\bf Running Example \therunningexample:}}
	{\vspace{1ex}}

\newcommand{\squishlist}{
 \begin{list}{$\bullet$}
  { \setlength{\itemsep}{0pt}
     \setlength{\parsep}{1pt}
     \setlength{\topsep}{1pt}
     \setlength{\partopsep}{0pt}
     \setlength{\leftmargin}{1em}
     \setlength{\labelwidth}{1em}
     \setlength{\labelsep}{0.5em} } }

\newcommand{\squishend}{
  \end{list}
}

\definecolor{americanrose}{rgb}{1.0, 0.01, 0.24}
\definecolor{airforceblue}{rgb}{0.36, 0.54, 0.66}
\definecolor{ao(english)}{rgb}{0.0, 0.5, 0.0}
\definecolor{ao}{rgb}{0.0, 0.0, 1.0}

\newcommand{\aryan}[1]{\textcolor{blue}{Aryan: #1}}

\newcommand{\eat}[1]{}

\newcommand{\submit}[1]{}

\usepackage{xspace}
\newcommand{\stitle}[1]{\vspace{1mm}\noindent{\bf #1:}}

\newcommand{\Gee}{\mathcal{G}}
\newcommand{\Ze}{\mathbb{Z}}

\newcommand{\dee}{\mathcal{D}}
\newcommand{\eh}{\mathcal{H}}
\newcommand{\Aa}{\mathcal{A}}
\newcommand{\Es}{\mathcal{S}}

\newcommand{\opt}{\mathsf{opt}}
\newcommand{\ef}{\mathscr{f}}
\newcommand{\gi}{\mathscr{g}}

\newcommand{\Qu}{Q}

\renewcommand{\Re}{\mathbb{R}}%
\newcommand{\eps}{\varepsilon}

\renewcommand{\marginpar}[2][]{}

\newcommand{\prob}{fair minimum-weight cover\xspace}

\usepackage{bbm, dsfont}
\usepackage{esvect}

  \allowdisplaybreaks

  \newcommand{\groupsize}{\ell}

\AtBeginDocument{%
  }

\makeatletter

\newenvironment{proofoflemma}[1]{%
  \par\pushQED{\qed}%
  \normalfont\topsep6\p@\@plus6\p@\relax
  \trivlist
  \item[\hskip\labelsep\itshape Proof of Lemma~\ref{#1}.]%
}{%
  \popQED\endtrivlist\@endpefalse
}
\makeatother

\newcommand{\revone}[1]{\textcolor{black}{#1}}
\newcommand{\revtwo}[1]{\textcolor{black}{#1}}

\newcommand{\revtwofour}[1]{\textcolor{black}{#1}}

\newenvironment{envrevone}{\color{black}}{}
\newenvironment{envrevtwo}{\color{black}}{}
\newenvironment{envrevfour}{\color{black}}{}
\newenvironment{envrevtwofour}{\color{black}}{}

\begin{document}

\title{Weighted Set Multi-Cover on Bounded Universe and Applications in Package Recommendation}
\titlenote{This work was partially supported by NSF grant IIS-2348919.}


\author{Nima Shahbazi}
\orcid{0000-0001-7016-3807}
\authornote{The first two authors contributed equally to this research.}
\affiliation{%
  \institution{Department of Computer Science, University of Illinois Chicago}
  \city{Chicago}
  \state{Illinois}
  \country{USA}
}
\email{nshahb3@uic.edu}

\author{Aryan Esmailpour}
\orcid{0009-0000-3798-9578}
\authornotemark[2]
\affiliation{%
  \institution{Department of Computer Science, University of Illinois Chicago}
  \city{Chicago}
  \state{Illinois}
  \country{USA}
}
\email{aesmai2@uic.edu}

\author{Stavros Sintos}
\orcid{0000-0002-2114-8886}
\affiliation{%
  \institution{Department of Computer Science, University of Illinois Chicago}
  \city{Chicago}
  \state{Illinois}
  \country{USA}
}
\email{stavros@uic.edu}

\begin{abstract}
\begin{envrevtwofour}
The weighted set multi-cover problem is a fundamental generalization of set cover that arises in data-driven applications where one must select a small, low-cost subset from a large collection of candidates under coverage constraints. In data management settings, such problems arise naturally either as expressive database queries or as post-processing steps over query results, for example, when selecting representative or diverse subsets from large relations returned by database queries for decision support, recommendation, fairness-aware data selection, or crowd-sourcing.
While the general weighted set multi-cover problem is NP-complete, many practical workloads involve a \emph{bounded universe} of attributes or items that must be covered, leading to the Weighted Set Multi-Cover with Bounded Universe (WSMC-BU) problem, where the universe size is constant. Despite its relevance in large-scale data processing pipelines, little is known about efficient algorithms for this special case.
In this paper, we develop exact and approximation algorithms for WSMC-BU. We first discuss a dynamic programming algorithm that solves WSMC-BU exactly in $O(n^{\ell+1})$ time, where $n$ is the number of input sets and $\ell=O(1)$ is the universe size. We then present a $2$-approximation algorithm based on linear programming and rounding, running in $O(\mathcal{L}(n))$ time, where $\mathcal{L}(n)$ denotes the complexity of solving a linear program with $O(n)$ variables. To further improve efficiency for large datasets, we propose a faster $(2+\varepsilon)$-approximation algorithm with running time $O(n \log n + \mathcal{L}(\log W))$, where $W$ is the ratio of the total weight to the minimum weight, and $\varepsilon$ is an arbitrary constant specified by the user.
Our results provide the first practical constant-factor approximation algorithms for WSMC-BU, with approximation guarantees independent of the universe size. Extensive experiments on real and synthetic datasets demonstrate that our methods consistently outperform greedy and standard LP-rounding baselines in both solution quality and runtime, making them suitable for data-intensive selection tasks over large query outputs.
\end{envrevtwofour}
\end{abstract}

%

\maketitle

\newcommand{\sol}{\mathcal{H}}
\renewcommand{\prob}{\textsf{WSMC-BU}}

\section{Introduction}\label{sec:intro}

\emph{Set cover} is one of the most fundamental problems in computer science, with applications across diverse domains such as network design, information retrieval, resource allocation, sensor networks, VLSI
design, machine learning, crew scheduling, facility location, computational biology, and network security~\cite{revelle1976applications, vemuganti1998applications, rubin1973technique, cho2012chapter, ge2010application, hoffman1993solving, lourencco2001multiobjective}.
In databases, the set cover problem models several data management and query processing tasks, including pipelined filter optimization~\cite{munagala2005pipelined}, query routing in big data architectures~\cite{narayan2016efficient}, and package-to-group recommendations~\cite{serbos2017fairness,qi2016recommending}.
Formally, given a universe $\Gee$ of $\ell$ items and a family $\Dee$ of $n$ subsets of $\Gee$, the goal is to select the smallest number of sets from $\Dee$ whose union covers all items in $\Gee$. Despite being one of Karp’s 21 NP-complete problems, both a greedy algorithm and an LP-based rounding algorithm achieve $O(\log \ell)$-approximation~\cite{chvatal1979greedy}, which is known to be optimal~\cite{dinur2014analytical}.

Numerous variants of the set cover problem have been studied in both the theory and database communities. Examples include the \emph{set multi-cover problem}~\cite{dobson1982worst} 
the \emph{multi-set multi-cover problem}~\cite{hua2009exact} 
and the \emph{fair set cover problem}~\cite{dehghankar2025fair}
. 
For all these variants, weighted versions have also been investigated, where each set is associated with a weight and the objective is to minimize the total weight of the selected sets while satisfying the covering constraints.

In this work, we focus on the \emph{weighted set multi-cover problem}. More formally, let $\Gee$ be a universe of $\ell$ items where each item $g\in \Gee$ has a non-negative demand $\Qu_g$, and let $\Dee$ be a family of $n$ subsets of $\Gee$, where each set $t\in \Dee$ has an associated non-negative weight $w_t$. The goal is to select a collection of sets from $\Dee$ with minimum total weight such that every item $g\in\Gee$ is covered at least $\Qu_g$ times. This problem is NP-complete, yet, interestingly, it also admits a greedy $O(\log \ell)$-approximation algorithm~\cite{dobson1982worst}, where in each iteration it adds in the solution the set minimizing the ratio of its weight over the number of items with positive demands it covers.  Furthermore, the standard LP-rounding algorithm for the weighted set multi-cover problem returns an $O(\log \sum_{g\in \Gee}(Q_g))$-approximation solution~\cite{Vazirani2010-hn}.


\begin{envrevtwofour}
In data management settings, the weighted set multi-cover problem captures the algorithmic core of a class of expressive selection tasks over large relations. Conceptually, these tasks can be viewed as \emph{package recommendation} or \emph{constrained selection queries}~\cite{brucato2016scalable, brucato2017scalable, drosou2012disc}, where the goal is to retrieve a package (subset) of tuples that satisfies global coverage or representation constraints over attributes, while minimizing a cost function, such as total weight, resource usage, or acquisition cost. Such constraints go beyond what is directly expressible in standard SQL, and therefore are often handled either through specialized query operators or as post-processing steps over query results in data analytics pipelines. 

The weighted set multi-cover problem has a wide range of applications in data management, decision support, and data selection. 
To demonstrate this, we next describe some representative data management settings where this problem arises naturally. 

{\bf Fairness-aware data selection: \cite{Nargesian2021fair, stoyanovich2018fair, shahbazi2024fairness, chameleon}} A common problem in data management and ML pipelines is to construct a training dataset that satisfies specified representation requirements over demographic groups. 

\begin{example} \label{ex1}
    In the Chameleon system~\cite{chameleon}, they are given a dataset of face images, where each image belongs to one or more demographic groups (e.g., ``female'', ``Black'', ``young''). To ensure that the downstream model is not biased, someone must select sufficiently many images from each demographic group. Since an image can simultaneously belong to multiple groups (e.g., an image may be both ``female'' and ``young''), every selected image contributes coverage to multiple demographic requirements. The goal is to compute a small number of images to satisfy all requirements.
\end{example}

Similar problems appear in many other fairness applications. For instance, consider student admissions in a university, where the goal is to select a cohort of applicants while minimizing the total financial aid or scholarship budget. Each applicant is characterized by protected attributes such as gender and ethnicity, and institutions are often subject to diversity constraints. For instance, a Hispanic-Serving Institution in the U.S. may be required to admit at least $25\%$ Hispanic students, and, due to persistent gender imbalance in STEM fields, a university may additionally require that at least $25\%$ of admitted students be women. In this formulation, each applicant corresponds to a set with an associated cost (the offered financial aid), and each protected attribute category corresponds to an item with a specified demand.

Our techniques are applicable in all these examples, where the goal is to select a subset of data to satisfy selection requirements across different demographics. 


{\bf Crowd-Sourced Task Assignment / Data Enrichment: \cite{crowd6, Crowd2, Crowd3}} Generally, the problem can be formulated as follows: standard DB work on task assignment and crowdsourcing frames the problem as selecting workers (each with a cost and set of skills) to cover a set of tasks. This is exactly an instance of the weighted set multi-cover problem: Each worker defines a set that covers the skill/task types they can do,  each skill/task represents an item in the universe, each demand is the number of worker-assignments needed per task/type, and the weight is defined as the worker cost.
To highlight it further, we show the following example, with an overview shown in Figure~\ref{fig:example-figure}.

\begin{example}
\label{ex:CS}
In crowd-sourced data enrichment, the data is often collected from unreliable sources on the internet and needs verification using a human in the loop \cite{Marcus2011}. For this, companies often use expert pools like \textit{Amazon Mechanical Turk} to access specialized experts who possess varying proficiencies in domains such as medicine, law, and finance \cite{mturk}. Each expert has an hourly rate, and the goal is to systematically verify specific attributes—such as the biosafety level of a research lab or the regulatory compliance of a financial institution- to ensure they are accurate before they are finalized in a knowledge graph \cite{Dong2014}. Since these attributes often carry high stakes, the task requires a predefined level of redundancy where multiple independent experts must reach a consensus on the same data point to establish it as ground truth \cite{Sheng2008}. For example, a task may pose the specific question, ``Does the provided documentation confirm that this medical facility is licensed for radiology?'' This question is distributed to a set of multiple experts, and their responses are aggregated to ensure reliability; the answer is only integrated into the database if a strict consensus is met, such as through a simple majority vote or a requirement that at least $90\%$ of the experts agree on the response. This process involves identifying the required expertise for each attribute, evaluating the available solver pool for overlapping skill sets that can satisfy several domain requirements simultaneously, and selecting the most cost-effective group of individuals to meet the total redundancy constraints \cite{Chai2016, Trushkowsky2013}. The final goal of this pipeline is the production of an enriched knowledge graph, which serves as a highly reliable and structured database where raw, noisy information has been transformed into a network of verified entities and relationships. An overview of this pipeline is shown in Figure~\ref{fig:example-figure}.
\end{example}

{\bf Constrained package recommendations: \cite{serbos2017fairness,qi2016recommending}} 
Given a user's preferences, the goal is to recommend a package of items that minimizes cost while satisfying the specified constraints.

\begin{example}
\label{ex4}
Consider a recommendation application that has access to a dataset containing a collection of restaurants (e.g., the Yelp dataset \cite{yelp2025dataset}). Suppose a user specifies the following cuisine preferences: \emph{very interested} in Italian and omnivore cuisines, \emph{interested} in Japanese cuisine, and \emph{somewhat interested} in Mediterranean and vegan cuisines.
Each restaurant may offer multiple cuisines; for example, \emph{Orsa \& Winston} in Los Angeles serves Japanese, Italian, and omnivore dishes. In addition, each restaurant is associated with an average price per person.
The objective is to select a set of restaurants with a minimum total cost that includes at least 3 Italian, 2 Japanese, 1 Mediterranean, 3 omnivore, and 1 vegan restaurants.
\end{example}

\begin{example}
\label{ex3}
Consider a software company that must hire developers with experience in multiple programming languages. Specifically, the company requires at least $10$ developers with experience in Python, $7$ with experience in C, $5$ with experience in Java, and at least $3$ with experience in PHP. Each developer lists the programming languages they are proficient in, along with an associated salary. A developer who is proficient in multiple languages can simultaneously satisfy the requirements for each of those languages (for example, a developer proficient in both C and Python can count toward both the C and Python requirements). The objective is to select a set of developers with a minimum total salary that satisfies all the language requirements.
\end{example}

\end{envrevtwofour}

\begin{figure*}[t]
  \centering
  \includegraphics[scale = 0.4]{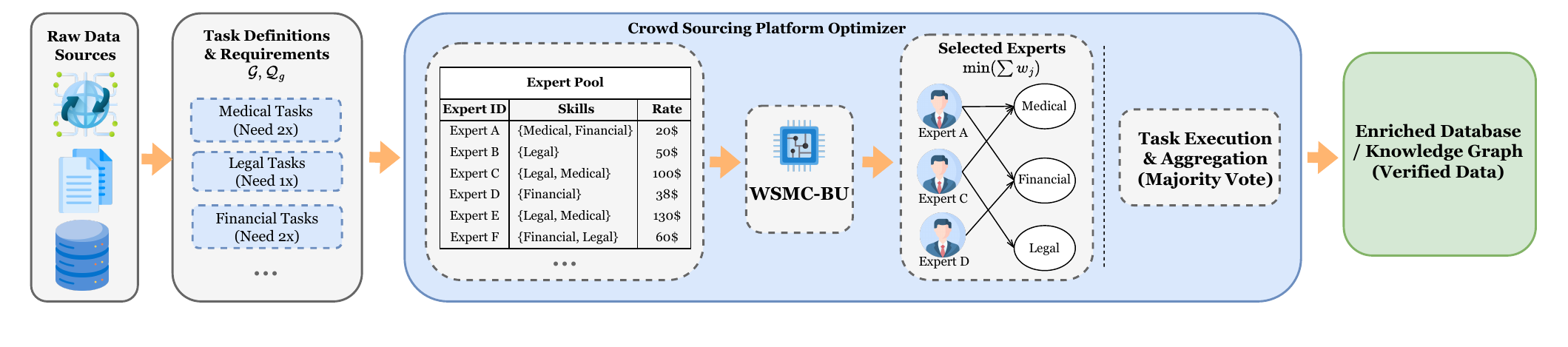}
  \vspace{-3.0em}
  \caption{\begin{envrevtwofour}
  Overview of crowd-sourced data enrichment using WSMC-BU.
  \end{envrevtwofour}}
  \label{fig:example-figure}
  \vspace{-1em}
\end{figure*}

\vspace{-0.2em}


All the examples above are instances of the weighted set multi-cover problem. 
\begin{envrevtwofour}
Since this problem is NP-complete, in the above applications, the existing literature often relies on a greedy approach to solve the problem \cite{chameleon, serbos2017fairness}. This is due to the simplicity of the greedy algorithm, while offering a reasonable approximation. Indeed, it is known that the approximation factor of the greedy algorithm cannot be improved in the general settings (unbounded universe) \cite{dobson1982worst}. However, we observe that a common feature of all these applications is that while the number of sets is large (e.g., images, students, restaurants, experts), the universe size $\ell=|\Gee|$ is relatively small (e.g., number of skills, tasks, demographic groups, or cuisines). Hence, it is reasonable to assume that in many data management and package recommendation tasks, the universe size is bounded by a constant, i.e., $\ell=O(1)$.
\end{envrevtwofour}
This motivates the study of the \emph{Weighted Set Multi-Cover problem on Bounded Universe}, or \prob\ for short. Its definition is identical to the weighted set multi-cover problem, except that $|\Gee|=O(1)$. 

Despite the many applications of \prob, research on this special case remains limited. Bredereck et al.~\cite{bredereck2020mixed}, showed that the weighted set multi-cover problem is fixed parameter tractable, when parameterized by the size of the universe, implying that \prob\ is not NP-hard. However, while being theoretically significant, their approach is of limited practical relevance:
the exact polynomials are not analyzed, and polynomial exponents appear to be significantly large, and to the best of our knowledge, no implementation exists due to its complexity. This motivates the search for simpler exact algorithms and for efficient approximation algorithms. Recall that for the weighted set multi-cover problem (unbounded universe), the greedy algorithm achieves an $O(\log \ell)$-approximation. 
This directly implies an $O(1)$-approximation for \prob\ in $O(n \log n)$ time. 
Similarly, the standard LP-rounding algorithm achieves an $O\!\left(\log \sum_{g \in \Gee} Q_g\right)$-approximation, but it is outperformed by the greedy algorithm in both approximation guarantee and running time.
\footnote{The standard LP formulation for the weighted set multi-cover problem, shown in LP~\eqref{standardLP}, serves as the baseline for our experiments.}
Beyond these results, no approximation algorithms are specifically known for \prob.

This leaves several fundamental open questions:  
Is there a simple exact algorithm for \prob\ that runs in polynomial time and is practical to implement?  
Is there a practical approximation algorithm with theoretical guarantees that outperforms the greedy algorithm both in theory and in practice?  
More specifically, is there an efficient truly constant-factor approximation for \prob\ whose guarantee is independent of the input parameters $\ell$ and $n$?
Can such an algorithm also outperform the greedy and the standard LP-rounding method in practice, both in efficacy (total weight of the returned sets) and efficiency (runtime)?
We answer all of these questions affirmatively.

\vspace{-0.6em}
\paragraph{Our contributions}
Our contributions are summarized as follows.


    \begin{itemize}[leftmargin=20pt]
        \item In Section~\ref{sec:opt}, we discuss exact polynomial time algorithms for the \prob\ problem. As a warm-up, we show a simple exact Dynamic Programming algorithm that runs in $O(n^{\ell+1})$ time.
        \item In Section~\ref{sec:approx}, we study approximation algorithms for \prob. 
First, in Section~\ref{subsec:biglp} we design a $2$-approximation algorithm that runs in $O(\mathcal{L}(n))$ time, where $\mathcal{L}(n)$ is the time to solve an LP with $n$ variables and $O(n)$ constraints.\footnote{Using the fastest currently known algorithm~\cite{jiang2021faster}, we have $\mathcal{L}(n)=O(n^{2.38})$.}.

Next, in Section~\ref{subsec:smalllp}, we modify our previous approximation algorithm to improve its running time. For any given constant $\eps>0$, we design a $(2+\eps)$-approximation algorithm in $O(n\log n + \mathcal{L}(\log W))$ time, where $W=\tfrac{\sum_{t\in \Dee}w_t}{\min_{t\in \Dee}w_t}$.

\item In Section~\ref{sec:exp}, we implement our algorithms and compare them against the Greedy and the standard LP-rounding baselines on real and synthetic datasets. Our approximation algorithm consistently produces solutions of smaller weight than baselines: in real datasets, baselines' solutions are up to $1.3$ times larger, and in synthetic datasets up to $2$ times larger. Moreover, the $(2+\eps)$-approximation algorithm typically runs faster than the LP baseline, and competes with the greedy algorithm while being faster in many cases.
\end{itemize}

An additional desirable property in applications is to avoid
\emph{over-satisfying} the demands of individual items. That is,
while every demand must be met, selecting sets that exceed the
required coverage by a large margin may lead to ``overfitting'' on
certain items, thereby wasting resources or biasing the solution
towards a few categories. 
\revtwo{The notion of oversatisfying demands refers to situations
where, in order to satisfy the demands of certain groups, the solution
is forced to include many elements belonging to another group
whose demand is low. This is not directly controlled
by a boundary on the number of items but is rather a consequence
of the structure of the sets. An example is shown in Appendix~\ref{appndx:experiments}.}
Ideally, algorithms should satisfy each demand closely, without introducing much unnecessary coverage.
As we show in our experimental evaluation (Section~\ref{sec:exp}),
our algorithms naturally achieve this property, producing solutions
that satisfy demands without significantly overshooting them, in contrast to the
greedy baseline, which often overcovers certain items.

\vspace{-0.6em}
\section{Preliminaries}
\label{sec:prelim}
\vspace{-0.45em}
\newcommand{\dom}{\mathsf{dom}}
\subsection{Problem definition}
First, we formally define the Weighted Set Multi-Cover problem with Bounded Universe (\prob) and give some useful definitions that we will use throughout the paper.

\vspace{-0.2em}
\begin{definition}[Weighted Set Multi-Cover problem with Bounded Universe --- \prob]
\label{probdef}
    Let $\Gee$ be a universe of $\ell=O(1)$ items, where each item $g\in \Gee$ is associated with non-negative demand $\Qu_g$, and let $\Dee$ be a family of $n$ sets where each set $t\in \Dee$ is defined as a subset of $\Gee$, i.e., $t\subseteq \Gee$, and it is associated with a non-negative weight $w_t$. The goal is to select a family of sets $\sol\subseteq \Dee$ \revone{(each set in $\Dee$ may be selected at most once in $\sol$)} with minimum total weight such that every item $g\in \Gee$ is covered at least $\Qu_g$ times.
    Formally, that is:
\begin{align}
   \label{Probob1} \text{minimize:} \quad & \sum_{t \in \eh} w_t \\
   \label{Probcon1} \text{subject to:} \quad 
    & |\{t\in \sol\mid g\in t\}|\geq \Qu_g, \quad \forall g \in \Gee\\
   \label{Probcon2} & \eh \subseteq \dee.
\end{align}
\end{definition}
\vspace{-0.5em}
Throughout the paper, for simplicity, we always assume $Q_g \leq n$, for all $g \in \Gee$, since otherwise, the instance becomes infeasible.
\revone{Note that $\Dee$ may contain multiple distinct elements that represent the same underlying set, possibly with identical or different weights. Each such element is treated as a separate copy. In the solution $\sol$, at most one instance of each copy may be selected; that is, no copy can be used more than once, but different copies containing the same set of items can be used simultaneously.}
For a positive integer $z$, we define $[z]=\{1,\ldots, z\}$.
\revtwo{
We summarize all commonly used notation in Table~\ref{Table:Notation}.}


An optimum solution to the \prob\ problem is a set of tuples $\sol^*$ that minimizes the objective function~\eqref{Probob1} satisfying the constraints~\eqref{Probcon1},~\eqref{Probcon2}.

An algorithm for the \prob\ problem is called an $\alpha$-approximation algorithm (for any $\alpha\geq 1$), if it returns a set $\sol'$ that satisfies the constraints~\eqref{Probcon1},~\eqref{Probcon2} and $\sum_{t'\in \sol'}w_{t'}\leq \alpha\cdot \sum_{t^*\in \sol^*}w_{t^*}$. The set $\sol'$ is also called an $\alpha$-approximation solution for the \prob\ problem.

\begin{table}[h]
\small
\centering
\begin{envrevtwo}
 \begin{tabular}{c|c} 
 \hline
 Notation & Definition\\\hline
 \hline
 $\Gee$ & Universe of items\\ \hline
 $\ell$ & Size of the universe which is constant\\ \hline
 $\Qu_g$ & Demand of item $g\in \Gee$\\ \hline
  $\Dee$ & Family of sets (subsets of $\Gee$)\\ \hline
  $n$ & Number of sets in $\Dee$\\ \hline
  $w_t$ & Weight of set $t\in \Dee$\\ \hline
  $W$ & $\frac{\sum_{t\in \Dee}w_t}{\min_{t\in \Dee} w_t}$\\ \hline
  $\mathcal{L}(X)$ & LP solve time with 
$X$ variables and $O(X)$ constraints \\ \hline
   $[X]$ & $\{1,\ldots, X\}$\\ \hline
   $B[H]$ & All sets in $\Dee$ that contain the subset of items $H\subseteq \Gee$\\ \hline
   $\eps$ & Arbitrary positive constant\\ \hline
\end{tabular}
\caption{\revtwo{Table of Notations}}\label{Table:Notation}
\end{envrevtwo}
\vspace{-2em}
\end{table}

\subsection{Overview}
\label{subsec:overview}
\vspace{-0.2em}
In the next sections, we present a simple exact algorithm and two more involved approximation algorithms for the \prob\ problem. We briefly show the high-level ideas of our methods

For the exact algorithm, we discuss the natural dynamic programming algorithm for the \prob\ problem.

Next, we study approximation algorithms.
We begin by formulating \prob\ as an Integer Program and relaxing it to an optimization problem that minimizes the sum of non-decreasing, convex, piecewise linear functions subject to linear constraints. We show that this relaxation can be solved exactly by formulating it as a Linear Program (LP). We solve this LP (using an LP-solver) and design a novel rounding technique that yields a $2$-approximation algorithm for the \prob\ problem. The running time of our algorithm is $O(\mathcal{L}(n))$, where $\mathcal{L}(n)$ is the time to solve an LP with $n$ variables and $O(n)$ constraints.

The LP we obtain is related to the standard LP~\eqref{standardLP} for the weighted set multi-cover problem, although it has a different formulation. Furthermore, while randomized rounding on the standard LP~\eqref{standardLP} yields an $O\!\left(\log \sum_{g\in \Gee} Q_g\right)$-approximation, our novel rounding technique can also be applied to the standard formulation and improves the guarantee to~$2$. The motivation for formulating our problem as the minimization of a sum of piecewise linear functions under linear constraints will become clear in the following paragraph.

Our algorithm is the first truly constant-factor approximation (independent of $\ell$) for the \prob\ problem. However, the running time is super-quadratic in~$n$. To address this, we further refine our approach. Instead of directly solving the relaxed optimization problem with the original piecewise linear functions, we approximate these functions within a $(1+\eps)$ factor by constructing new non-decreasing, convex, piecewise linear functions with fewer linear pieces. The motivation for introducing this formulation, rather than working directly with the standard LP, is that it allows us to describe the piecewise functions more compactly, leading to smaller LPs and thus faster running time. The resulting LP is solved optimally, and rounding its solution yields a $(2+\eps)$-approximation algorithm for the \prob\ problem with running time $O(n\log n + \mathcal{L}(\log W))$, where $W=\tfrac{\sum_{t\in \Dee}w_t}{\min_{t\in \Dee}w_t}$.

\section{Exact Algorithms}
\label{sec:opt}
As previously mentioned, \cite{bredereck2020mixed} showed that the Weighted Set Multi-Cover problem is fixed parameter tractable when parameterized by the size of the universe of elements to cover. This implies that when the size of the universe is a constant, there exists an algorithm for solving the Weighted Set Multi-Cover problem in polynomial time with respect to the size of the input, or equivalently, there exists a polynomial time algorithm for solving \prob\ problem. Although theoretically intriguing, their approach lacks practical applicability, as the exact polynomials are not analyzed, and both the constant factors and polynomial exponents appear to be significantly large.

In this section, as a warm-up, we present a simple exact algorithm to our problem using Dynamic Programming (DP). 
While the DP approach is not always faster than the optimum algorithm in~\cite{bredereck2020mixed}, it is significantly simpler and easier to implement. In contrast, we are not aware of an implementation of the optimum algorithm in~\cite{bredereck2020mixed}.
The DP algorithm runs in time exponential to $|\Gee|$. Since $|\Gee|$ is constant, the running time is polynomial on $n$.

We aim to develop an optimal DP-based strategy that minimizes the total weight of the \prob\ problem. This process involves a sequence of iterative steps in which the algorithm evaluates the contribution of each set by updating the residual demands and the DP value at each step. In each iteration, the optimal solution records the minimum weight needed to fulfill the demands for the given state, up to the $i$-th set (arbitrary order). Each state is defined as the current level of demands satisfied for each item in $\Gee$.

For simplicity, we present the algorithm to compute the weight of the optimum solution. It is trivial how to restore the actual optimum family of subsets that achieve the optimum weight by storing the history of updates.
Let $t_1,\ldots, t_n$ be an arbitrary ordering of the sets in $\Dee$.
We define the table $\mathsf{DP}$ with dimensions $(n+1)\times (n+1) \times \ldots \times (n+1)$ ($\ell+1$ times).
Let $\mathsf{DP}[i][q_1]\ldots[q_\ell]$ be the weight of the optimum solution of the \prob\ problem considering the first $i$ sets with demands $q_1, \ldots, q_\ell$ on items $g_1,\ldots, g_\ell\in \Gee$, respectively.
We first initialize all cells in table $\mathsf{DP}$ to $\infty$. 
By definition, we set $\mathsf{DP}[i][0]\ldots[0]=0$ for every $i\in\{0\}\cup [n]$.
For every $i\in[n]$, $q_1\in\{0\}\cup[n],\ldots, q_\ell\in\{0\}\cup[n]$, it holds,
\vspace{-0.2em}
\begin{align}
\label{eq:DP}
\notag& \mathsf{DP}[i][q_1]\ldots[q_\ell]=\min\{\mathsf{DP}[i-1][q_1]\ldots[q_\ell], w_{t_i} + \\& \!\mathsf{DP}[i\!-\!1][\max\{0,q_1\!-\!\mathbbm{1}\![g_1\!\in\! t_i]\}]\ldots[\max\{0,q_\ell\!-\!\mathbbm{1}\![g_\ell\!\in\!t_i]\}]\},
\end{align}
\vspace{-0.2em}
where $\mathbbm{1}[\textsf{condition}]$ is the indicator function that returns $1$ if the $\textsf{condition}$ holds, and $0$ otherwise.
In other words, in every step, we decide whether we include the set $t_i$ in the current solution.
The optimum weight among the first $i$ sets and demands $q_1,\ldots, q_\ell$ is the minimum between i) Do not take set $t_i$: the optimum weight among the first $i-1$ sets and demands $q_1,\ldots, q_\ell$, and ii) Take set $t_i$: weight of set $t_i$ plus the optimum weight among the first $i-1$ sets and demands $q_1-\mathbbm{1}[g_1\in t_i],\ldots, q_\ell-\mathbbm{1}[g_\ell\in t_i]$.
After filling all cells in table $\mathsf{DP}$, 
we compute the optimum weight as $\min_{j_1\geq \Qu_{g_1}, \ldots, j_\ell\geq \Qu_{g_\ell}}\mathsf{DP}[n][j_1]\ldots[j_\ell]$. 
By tracing the optimum path in the DP algorithm, we get the family $\sol$ of the optimum selected sets.
\revone{The pseudocode of the algorithm is presented in Algorithm~\ref{alg:dp} in Appendix~\ref{appndx:DPpseudocode}.}

The correctness of the algorithm follows from Equation~\eqref{eq:DP}.
For the running time, we note that $\mathsf{DP}$ has $O(n^{|\Gee|+1})$ cells. In each cell we spend $O(\ell)=O(1)$ time to set the weight according to Equation~\ref{eq:DP}.
Overall, the running time of the algorithm is $O(n^{|\Gee|+1})$.
We conclude with the following theorem.
\vspace{-0.5em}
\begin{theorem}
    There exists an exact algorithm for the \prob\ problem that runs in $O(n^{|\Gee|+1})$ time.
\end{theorem}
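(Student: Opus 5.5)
The plan is to prove the theorem by analyzing the dynamic program of Equation~\eqref{eq:DP}. I would first reduce to demands $Q_g \le n$, using the standing assumption that any instance violating it is infeasible. I would then give a cleaner semantics to the table entries: $\mathsf{DP}[i][q_1]\ldots[q_\ell]$ is the minimum of $\sum_{t\in \sol} w_t$ over all $\sol \subseteq \{t_1,\ldots,t_i\}$ with $|\{t\in \sol \mid g_j \in t\}| \ge q_j$ for every $j\in[\ell]$, and $\infty$ if no such $\sol$ exists. With this "at least $q_j$" reading, the clamping $\max\{0,\cdot\}$ is natural, and the optimum of the instance is simply $\mathsf{DP}[n][\Qu_{g_1}]\ldots[\Qu_{g_\ell}]$. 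The minimum over $j_k \ge \Qu_{g_k}$ returns the same value, because $\mathsf{DP}$ is monotone non-decreasing in each demand coordinate.

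Correctness is an induction on $i$. For the base case $i=0$, the only candidate is $\sol=\emptyset$, which has weight $0$ when all $q_j=0$ and is infeasible otherwise. The initialization matches this, since it sets $\mathsf{DP}[0][\mathbf{0}]=0$ and every other cell to $\infty$.

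For the inductive step, fix an optimal feasible $\sol \subseteq \{t_1,\ldots,t_i\}$ for the demands $(q_1,\ldots,q_\ell)$ and split into two cases. If $t_i \notin \sol$, then $\sol$ is feasible for the first $i-1$ sets with the same demands, so the first term of the minimum is at most $w(\sol)$. If $t_i \in \sol$, then $\sol\setminus\{t_i\}$ covers each $g_j$ at least $q_j - \mathbbm{1}[g_j\in t_i]$ times, and hence at least $\max\{0, q_j-\mathbbm{1}[g_j\in t_i]\}$ times. The second term is therefore at most $w_{t_i}+w(\sol\setminus\{t_i\}) = w(\sol)$.

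Conversely, each term of the minimum is realized by a feasible family, so the recurrence cannot undercut the true optimum. In the first term a family on $\{t_1,\ldots,t_{i-1}\}$ is trivially feasible for $i$. In the second, adding $t_i$ to a family that meets the reduced demands restores the original demands. Adding $t_i$ is legal because $t_i$ was not among the first $i-1$ sets, so each copy is used at most once, which respects the multiplicity convention of Definition~\ref{probdef}. Recovering the actual family $\sol$ rather than just its weight is done by storing, for each cell, which branch attained the minimum, and backtracking from the final cell in $O(n)$ steps.

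For the running time, the table has $(n+1)^{\ell+1} = O(n^{\ell+1})$ cells, since $\ell = O(1)$. Each cell is filled in $O(\ell)=O(1)$ time from cells with index $i-1$, so a row-by-row evaluation order is valid. Extracting the answer and backtracking add only lower-order cost, giving $O(n^{|\Gee|+1})$ overall.

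The only delicate point I expect is matching the exact semantics of the table to the clamped recurrence. If one reads the entries as "exactly $q_j$" rather than "at least $q_j$", the clamping at $0$ breaks the recurrence. That is why I would fix the "at least" invariant explicitly before running the induction; everything else is routine.
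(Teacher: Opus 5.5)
Your proposal is correct and follows the paper's proof: you use the same clamped recurrence~\eqref{eq:DP} over a table of $(n+1)^{\ell+1}$ cells, each filled in $O(\ell)=O(1)$ time. Your explicit ``at least $q_j$'' invariant and two-sided induction make rigorous what the paper states in one line (``correctness follows from Equation~\eqref{eq:DP}''), and they also show that the paper's final minimum over $j_k\ge \Qu_{g_k}$ equals the single entry $\mathsf{DP}[n][\Qu_{g_1}]\ldots[\Qu_{g_\ell}]$.
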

\vspace{-0.5em}

\begin{envrevone}
    \begin{runningexample}
        \label{runex0}
        We show the execution of our DP algorithm using a simple example. Assume an instance of the $\prob$ problem with $\Gee=\{g_1,g_2\}$. There exists two sets $A_1=A_2=\{g_1\}$ with weights $1, 8$, respectively. The demand of $g_1$ is $Q_{g_1}=2$. There exists two sets $A_3=A_4=\{g_2\}$ with weights $2, 9$, respectively. The demand of $g_1$ is $Q_{g_2}=2$. Finally, there exists two sets $A_5=A_6=\{g_1,g_2\}$ with weights $3, 5$, respectively. Initially $\mathsf{DP}[0][0][0]=0$ and all other cells of table $\mathsf{DP}$ are set to $\infty$. We show the $\mathsf{DP}$ table after the execution of the DP algorithm in Table~\ref{table:ex0}. The notation $(x,y)\rightarrow z$ in row $A_i$ denotes $\mathsf{DP}[i][x][y]=z$.
        For example consider $\mathsf{DP}[5][1][2]=\min\{\mathsf{DP}[4][1][2], 3+\mathsf{DP}[4][0][1]\}$. We have that $\mathsf{DP}[4][1][2]=12$, and $3+\mathsf{DP}[4][0][1]=5$, so $\mathsf{DP}[5][1][2]=5$. Indeed, the sets $A_3, A_5$ have total weight $2+3=5$ and they cover $g_1$ once and $g_2$ twice.
        The optimum weight is $\mathsf{DP}[6][2][2]=6$ using the sets $A_1, A_3, A_5$ with total weight $6$.
    \end{runningexample}
\end{envrevone}
\begin{table}[ht]
\vspace{-1em}
\centering
\begin{envrevone}
\begin{tabular}{|c|c|}
\hline
$0$ & $(0,0)\rightarrow 0$ \\ \hline

$A_1$ & $(0,0)\rightarrow 0$, $(1,0)\rightarrow 1$ \\ \hline

$A_2$ & $(0,0)\rightarrow 0$, $(1,0)\rightarrow 1$, $(2,0)\rightarrow 9$ \\ \hline

\multirow{2}{*}{$A_3$} &
$(0,0)\rightarrow 0$, $(1,0)\rightarrow 1$, $(2,0)\rightarrow 9$, $(1,1)\rightarrow 3$, $(2,1)\rightarrow 11$,\\ 
& $(0,1)\rightarrow 2$ \\ \hline

\multirow{2}{*}{$A_4$} &
$(0,0)\rightarrow 0$, $(1,0)\rightarrow 1$, $(2,0)\rightarrow 9$, $(1,1)\rightarrow 3$, $(2,1)\rightarrow 11$,\\ 
& $(0,1)\rightarrow 2$,  $(0,2)\rightarrow 11$, $(1,2)\rightarrow 12$, $(2,2)\rightarrow 20$\\ \hline

\multirow{2}{*}{$A_5$} &
$(0,0)\rightarrow 0$, $(1,0)\rightarrow 1$, $(2,0)\rightarrow 4$, $(1,1)\rightarrow 3$, $(2,1)\rightarrow 4$,\\ 
& $(0,1)\rightarrow 2$,  $(0,2)\rightarrow 5$, $(1,2)\rightarrow 5$, $(2,2)\rightarrow 6$\\ \hline

\multirow{2}{*}{$A_6$} &
$(0,0)\rightarrow 0$, $(1,0)\rightarrow 1$, $(2,0)\rightarrow 4$, $(1,1)\rightarrow 3$, $(2,1)\rightarrow 4$,\\ 
& $(0,1)\rightarrow 2$,  $(0,2)\rightarrow 5$, $(1,2)\rightarrow 5$, $(2,2)\rightarrow 6$\\ \hline

\end{tabular}
\caption{\label{table:ex0}\revone{The non-infinite entries of the $\mathsf{DP}$ table for the $\prob$ instance in Example~\ref{runex0}. The notation $(x,y)\rightarrow z$ in row $A_i$ denotes $\mathsf{DP}[i][x][y]=z$.}}
\end{envrevone}
\vspace{-1em}
\end{table}

\vspace{-1em}
\section{Approximation Algorithms}
\label{sec:approx}

While the straightforward DP algorithm from the previous section runs in polynomial time, it quickly becomes inefficient when $|\Gee|$ is a large constant. Consequently, we turn our attention to developing efficient approximation algorithms and demonstrate their theoretical guarantees and practicality.
We first show a $2$-approximation algorithm that runs in $O(\mathcal{L}(n))$ time, where $\mathcal{L}(x)$ is the running time to solve a Linear Program with $O(x)$ variables and $O(x)$ constraints. Then, we show a $(2+\eps)$-approximation algorithm that runs in $O(n\cdot \log n+\mathcal{L}(\log(W))$ time, where $\eps$ is an arbitrarily small constant given by the user, say $0.01$, and $W=\frac{\sum_{t\in \Dee}w_t}{\min_{t\in \Dee}w_t}$.

\vspace{-0.5em}
\subsection{Polynomial 2-approximation algorithm}\label{subsec:biglp}

In this section, we show an LP-based approximation algorithm for the \prob\ problem. We first model \prob\ as an Integer Program (IP), and then relax the constraints to allow fractional solutions. We then show that the relaxed program can be written as a Linear Program (LP). We use an LP-solver to compute the optimum solution of the LP, and we use rounding to get a $2$-approximation solution.
While we could directly formulate the \prob\ problem as an Integer Linear Program (ILP) and then solve its LP relaxation, we use a different formulation that will allow us to improve the running time of the algorithm in the next sections. We note that the main goal of the algorithm discussed in this subsection is to provide a foundation for the efficient algorithm discussed in Subsection~\ref{subsec:smalllp}.


We first partition the sets from $\Dee$ in buckets based on the items they contain. For every subset $H\subseteq \Gee$, let $B(H)=\{t\in \Dee\mid t=H\}$, i.e., the bucket $B(H)$ contains all sets in $\Dee$ that contain exactly the subset of items $H$. Let $\mathcal{B}=\{B(H)\mid H\subseteq \Gee\}$. Since $|\Gee|=\groupsize$, there are at most $2^\groupsize$ different buckets, $|\mathcal{B}|=O(2^\groupsize)=O(1)$.
For every subset $H \subseteq \Gee$, we sort the sets in $B(H)$ by their weights in ascending order and we define the function $f_{B(H)}(x)$ that returns the sum of the weights of the first $x$ sets in $B(H)$, for every $x \in \{0, 1, \dots, |B(H)|\}$. Note that $f_{B(H)}(0) = 0$, for all $H \subseteq \Gee$. For simplicity, we refer to $f_{B(H)}(\cdot)$ as $f_H(\cdot)$. In other words, $f_H(x)$ returns the cost of choosing $x$ sets from $B(H)$ with the minimum weight.


With these definitions, we can now model the \prob\ as an IP. For each $H\subseteq \Gee$, let $x_H$ be the integer variable that denotes the number of sets selected from $B(H)$ in the returned solution.
The \prob\ problem is equivalent to the following Integer Program.

\vspace{-1.1em}
\begin{align}
    \label{obfunc}\text{minimize:} \quad & \sum_{H \subseteq \Gee} f_H(x_H) \\
   \label{con1} \text{subject to:}\quad& \sum_{H \subseteq \Gee, \, g \in 
    H}x_H \geq Q_g, \quad \forall g \in \Gee \\
   \label{con2} & 0 \leq x_H \leq |B(H)|, \quad \forall H \subseteq \Gee \\
   \label{con3} & \quad x_H \in \Ze, \quad \forall H\subseteq \Gee.
\end{align}

Constraints~\eqref{con1} ensure that the demands of all items in $\Gee$ are satisfied. Constraints~\eqref{con2} ensure that we select at most $|B(H)|$ sets from each subset $H\subseteq \Gee$.
If we select $\sigma_H\in \Ze$ sets from the bucket $B(H)$, it is always optimal to use the sets with minimum weight possible, and since $f_H(\sigma_H)$ returns the total weight of the lightest $\sigma_H$ sets from $B(H)$, the objective function in~\eqref{obfunc} returns the total weight of the solution. Therefore, the problem formulation~\eqref{obfunc} correctly captures the \prob\ problem.

Next, we relax the IP~\eqref{obfunc} to allow fractional solutions, and then show a rounding scheme to round the fractional solution into an integral solution while preserving the accuracy.
In the next paragraphs, let $\mathsf{opt}$ be the value of the objective function in the optimum solution of IP~\eqref{obfunc}.


\paragraph{Relaxation} We note that every function $f_H:\mathbb{Z}\rightarrow \Re$ is only defined on integral values $\{0, 1, \dots, B(H)\}$. For each $H \subseteq \Gee$, we define a real function $\hat{f}_H:\Re\rightarrow\Re$ to evaluate fractional solutions:

\vspace{-1.2em}
$$\hat{f}_H(x) =\left(f_H(z+1)\!-\!f_H(z)\right)\cdot (x-z)+f_H(z), \quad\text{if } z\!\leq\!x\!\leq\!z\!+\!1,  z\!\in\!\mathbb{Z}.$$


Using the function $\hat{f}_H(\cdot)$, we define the following optimization problem.

\vspace{-1em}
\begin{align}
    \label{LPobfunc}\text{minimize:} \quad & \sum_{H \subseteq \Gee} \hat{f}_H(x_H) \\
    \notag\text{subject to:}\quad& \sum_{H \subseteq \Gee, g \in 
    H}x_H \geq Q_g, \quad \forall g \in \Gee, \\
    \notag& 0 \leq x_H \leq |B(H)|, \quad \forall H \subseteq \Gee \\
    \notag& \quad x_H \in \Re, \quad \forall H\subseteq \Gee.
\end{align}
Let $\hat{\mathsf{opt}}$ be the value of the objective function in the optimum solution of Problem~\eqref{LPobfunc}.

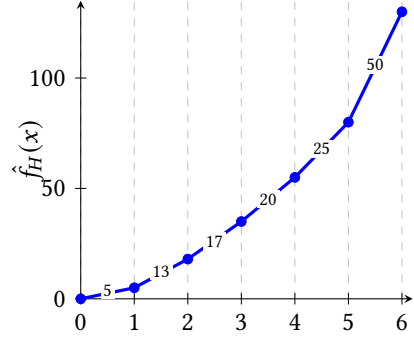
\begin{figure}[t]
\centering

\begin{minipage}{0.42\textwidth}
\centering

\small
\renewcommand{\arraystretch}{1.15}
\begin{tabular}{@{} c | c | c @{}}
\hline
\textbf{Expert \#} & \textbf{Skills} & \textbf{Hourly Rate} \\
\hline
1 & (Legal, Medical) & 5\$ \\ \hline
2 & (Legal, Medical) & 13\$ \\ \hline
3 & (Legal, Medical) & 17\$ \\ \hline
4 & (Legal, Medical) & 20\$ \\ \hline
5 & (Legal, Medical) & 25\$ \\ \hline
6 & (Legal, Medical) & 50\$ \\ \hline
\hline
\end{tabular}

\end{minipage}
\hfill
\begin{minipage}{0.55\textwidth}
\centering

\begin{tikzpicture}[scale=1.0]
  \begin{axis}[
      width=6cm, height=5.5cm,
      axis lines=left,
      xmin=0, xmax=6.2,
      ymin=0, ymax=135,
      xtick={0,1,2,3,4,5,6},
      xmajorgrids,
      grid style={dashed, very thin},
      xlabel={$x$}, ylabel={$\hat{f}_H(x)$},
      ylabel style={yshift=-15pt},
      tick style={black},
      clip=false
    ]

    \addplot[very thick, blue]
      coordinates {
        (0,0)
        (1,5)
        (2,18)
        (3,35)
        (4,55)
        (5,80)
        (6,130)
      };

    \addplot[only marks, mark=*, mark size=1.8pt, blue] 
      coordinates {
        (0,0)
        (1,5)
        (2,18)
        (3,35)
        (4,55)
        (5,80)
        (6,130)
      };

    \node[font=\scriptsize, fill=white, inner sep=1pt] at (axis cs:0.5,4) {5};
    \node[font=\scriptsize, fill=white, inner sep=1pt] at (axis cs:1.5,12.5) {13};
    \node[font=\scriptsize, fill=white, inner sep=1pt] at (axis cs:2.5,26) {17};
    \node[font=\scriptsize, fill=white, inner sep=1pt] at (axis cs:3.5,45) {20};
    \node[font=\scriptsize, fill=white, inner sep=1pt] at (axis cs:4.5,68) {25};
    \node[font=\scriptsize, fill=white, inner sep=1pt] at (axis cs:5.5,106) {50};

  \end{axis}
\end{tikzpicture}

\end{minipage}

\vspace{-2em}
\caption{\small Sample of experts with the same skill set. In this example, $H = \{Legal, Medical\}$, and the rate is considered as the weight. The figure shows the piecewise linear function $\hat{f}_H(x)$, based on the shown table.}
\label{fig:pl-table-stacked}
\vspace{-1.2em}

\end{figure}

An example of $\hat{f}_H(\cdot)$ is shown in Figure~\ref{fig:pl-table-stacked}. As we will show later, for every $H\subseteq \Gee$, function $\hat{f}_H(\cdot)$ is non-decreasing, convex and piecewise linear.
Hence, we argue that Problem~\eqref{LPobfunc} is an abbreviated representation of a pure LP.
There are ways to translate Problem~\ref{LPobfunc} into a pure Linear Program (see for example the $\lambda$ method from \cite{wolsey1999integer, lee2001polyhedral}). More specifically, for each $H\subseteq \Gee$, we define the binary variable $x_{H,i}$ which is $1$ if the set with the $i$-th smallest weight in $B(H)$ is selected in the solution, and $0$ otherwise. Let $w_{H,i}$ be the weight of the $i$-th smallest weight in $B(H)$.
We define the following LP.
    \begin{align}
    \label{FinalLPobfunc}\text{minimize:} \quad & \sum_{H \subseteq \Gee} \quad\sum_{i\in [|B(H)|]}w_{H,i}\cdot x_{H,i} \\
    \label{LPC1}\text{subject to:}\quad& \sum_{H \subseteq \Gee, \, g \in 
    H}\sum_{i\in[|B(H)|]}x_{H,i} \geq Q_g, \quad \forall g \in \Gee, \\
    \label{LPC2}& 0 \leq \sum_{i\in[|B(H)|]}x_{H,i} \leq |B(H)|, \quad \forall H \subseteq \Gee \\
    \label{LPC3}& \quad 0\leq x_{H,i}\leq 1, \quad \forall H\subseteq \Gee, i\in[|B(H)|].
\end{align}
We note that for every $H\subseteq \Gee$, there are $|B(H)|$ variables, and every set in $\Dee$ belongs to exactly one bucket $B(H)$. Hence the total number of variables are $n$. Since $|\Gee|=\ell=O(1)$ there are $O(1)$ non-trivial type~\eqref{LPC1} constraints. Furthermore, there are $O(2^\ell)=O(1)$ subsets of $\Gee$ so there are $O(1)$ non-trivial type~\eqref{LPC2} constraints. Finally, there are $O(n)$ trivial type~\eqref{LPC3} constraints.

From now on, given a set of real numbers $\{y_H \mid H \subseteq \Gee\}$, we use the notation $\vv{y}_{H \subseteq \Gee} \in \Re^{2^\ell}$ to denote the vector whose coordinate corresponding to a subset $H \subseteq \Gee$ takes the value $y_H$.

\revtwofour{In Appendix~\ref{appndx:lem1},} we show some properties of Problems~\eqref{obfunc},~\eqref{LPobfunc} and~\eqref{FinalLPobfunc}.

\begin{lemma}
\label{lem:1}
(i) $\hat{\mathsf{opt}}\leq \mathsf{opt}$. (ii) For every $H\subseteq \Gee$, the function $\hat{f}_H(\cdot)$ is a non-decreasing piecewise linear and convex function. (iii) Problem~\eqref{LPobfunc} is equivalent to LP~\eqref{FinalLPobfunc}.
\end{lemma}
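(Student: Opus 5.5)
I would prove the three parts in the order (ii), (i), (iii), because both (i) and (iii) rely on the structure of $\hat f_H$ established in (ii).

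For (ii), fix $H\subseteq\Gee$. On each interval $[z,z+1]$ with $z\in\{0,\dots,|B(H)|-1\}$, $\hat f_H$ is affine with slope $f_H(z+1)-f_H(z)=w_{H,z+1}$. This is the weight of the $(z+1)$-st lightest set in $B(H)$. The pieces agree at the integer breakpoints, since both equal $f_H(z)$ there, so $\hat f_H$ is continuous and piecewise linear. Its slopes $w_{H,1}\le w_{H,2}\le\cdots$ are non-negative because weights are non-negative, so $\hat f_H$ is non-decreasing. The slopes are also non-decreasing because $B(H)$ is sorted in ascending order of weight. A continuous piecewise linear function whose slopes are non-decreasing from left to right is convex; if needed, this is verified by writing $\hat f_H(x)=\max_z\{f_H(z)+w_{H,z+1}(x-z)\}$ on $[0,|B(H)|]$, which is a maximum of affine functions.

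For (i), take an optimal integral solution $\vv{x}^*_{H\subseteq\Gee}$ of IP~\eqref{obfunc}. It is feasible for Problem~\eqref{LPobfunc}, since that problem only drops the integrality constraint~\eqref{con3}. At integers, $\hat f_H(x^*_H)=f_H(x^*_H)$, so the objective values coincide and $\hat{\mathsf{opt}}\le\mathsf{opt}$.

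For (iii), I would build value-preserving maps in both directions.
\begin{itemize}
\item From Problem~\eqref{LPobfunc} to LP~\eqref{FinalLPobfunc}: given a feasible $x_H$, fill greedily. Set $x_{H,i}=1$ for $i\le\lfloor x_H\rfloor$, set $x_{H,\lfloor x_H\rfloor+1}=x_H-\lfloor x_H\rfloor$, and set the remaining variables to $0$. Then $\sum_i x_{H,i}=x_H$, so constraints~\eqref{LPC1} and~\eqref{LPC2} hold, and $\sum_i w_{H,i}x_{H,i}=\hat f_H(x_H)$ by the definition of $\hat f_H$.
\item From LP~\eqref{FinalLPobfunc} to Problem~\eqref{LPobfunc}: given a feasible $x_{H,i}$, set $x_H=\sum_i x_{H,i}$. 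This is feasible for Problem~\eqref{LPobfunc}.
\end{itemize}

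The main obstacle is the second direction, where we must show $\sum_i w_{H,i}x_{H,i}\ge\hat f_H(x_H)$. In words: among all vectors in $[0,1]^{|B(H)|}$ with a fixed sum $x_H$, the cheapest is the greedy prefix fill. I would prove this with an exchange argument. If some $i<j$ has $x_{H,i}<1$ and $x_{H,j}>0$, shift $\delta=\min\{1-x_{H,i},x_{H,j}\}$ of mass from index $j$ to index $i$. This keeps the sum fixed and changes the cost by $\delta(w_{H,i}-w_{H,j})\le 0$, because the weights are sorted. Repeating this terminates at the prefix fill without increasing the cost. Equivalently, this is the statement that $\hat f_H$ is the lower convex envelope, which follows from the sorted slopes in (ii).

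Combining the two directions, every feasible point of one program maps to a feasible point of the other with no larger objective value. Hence the optimal values coincide, and optimal solutions correspond through $x_H=\sum_i x_{H,i}$.
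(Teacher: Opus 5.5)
Your proposal is correct and follows essentially the same route as the paper: (i) uses agreement of $\hat f_H$ with $f_H$ at integers, (ii) uses the sorted slopes $w_{H,i}$, and (iii) uses the prefix-fill construction in one direction and the mass-shifting exchange argument in the other. Your exchange step is slightly cleaner than the paper's, because you prove the inequality $\sum_i w_{H,i}x_{H,i}\ge \hat f_H(x_H)$ for every feasible point rather than arguing by contradiction with optimality, which removes the need for the paper's separate tie case $w_{H,j'}=w_{H,j}$.
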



Using the observations from Lemma~\ref{lem:1}, we describe the algorithm as follows. We first form the LP~\eqref{FinalLPobfunc}. Using an LP solver, we get the optimum solution.
For each $H\subseteq \Gee$ and $i\in[|B(H)|]$, let $\hat{x}_{H,i}$ be the value of variable $x_{H,i}$ in the optimum solution. From the proof of Lemma~\ref{lem:1}, LP~\eqref{FinalLPobfunc} and Problem~\eqref{LPobfunc} are equivalent, so without loss of generality, assume that for each $H\subseteq \Gee$,  $\hat{x}_{H}$ is the value of variable $x_{H}$ in the optimum solution of Problem~\eqref{LPobfunc}.

\begin{envrevtwofour}
\begin{runningexample}
\label{exAlg}
    Assume an instance of the \prob\ problem with $\Gee=\{g_1,g_2\}$. For $H_1=\{g_1\}\in\Gee$ there exists three sets $\Gamma_1=\Gamma_2=\Gamma_3=\{g_1\}$ in $\Dee$ with weights $1, 3, 4$, respectively. The demand of $g_1$ is $\Qu_{g_1}=3$. For $H_2=\{g_2\}\in \Gee$ there exists two sets $\Delta_1=\Delta_2=\{g_2\}$ in $\Dee$ with weights $2, 8$ respectively. The demand of $g_2$ is $\Qu_{g_2}=2$.
    Finally, for $H_{1,2}=\{g_1,g_2\}\in \Gee$ there exist three sets $E_1=E_2=E_3=\{g_1,g_2\}$ in $\Dee$ with weights $4, 5, 6$, respectively. 

    We first present the formulation of Problem~\ref{LPobfunc} under the setting described above.
    In the toy example, Problem~\ref{LPobfunc} is defined over $3$ variables, i.e., one for every subset of $\Gee$. Namely, we have the variables $x_{H_1}$, $x_{H_2}$, and $x_{H_{1,2}}$.
    
    In order to compute the objective functions we should define $\hat{f}_H(x)$ for every $H\subseteq \Gee$.
    For $H_1=\{g_1\}\subset \Gee$, $f_{H_1}(0)=0$, $f_{H_1}(1)=1$, $f_{H_1}(2)=4$, and $f_{H_1}(3)=8$, so the function $\hat{f}_{H_1}(x)$ is defined as follows:
    i) If $0\leq x<1$, then $\hat{f}_{H_1}(x)=x$, ii) if $1\leq x<2$, then $\hat{f}_{H_1}(x)=3(x-1)+1=3x-2$, and iii) if $2\leq x\leq 3$, then $\hat{f}_{H_1}(x)=4(x-2)+4=4x-4$.
    For $H_2=\{g_2\}\subset \Gee$, then $f_{H_2}(0)=0$, $f_{H_2}(1)=2$, and $f_{H_2}(2)=10$, so the function $\hat{f}_{H_2}(x)$ is defined as follows:
    i) If $0\leq x<1$, then $\hat{f}_{H_2}(x)=2x$, and ii) if $1\leq x\leq 2$, then $\hat{f}_{H_2}(x)=8(x-1)+2=8x-6$.
    For $H_{1,2}=\{g_1,g_2\}= \Gee$, then $f_{H_{1,2}}(0)=0$, $f_{H_{1,2}}(1)=4$, $f_{H_{1,2}}(2)=9$, and $f_{H_{1,2}}(3)=15$, so the function $\hat{f}_{H_{1,2}}(x)$ is defined as follows:
    i) If $0\leq x<1$, then $\hat{f}_{H_{1,2}}(x)=4x$, ii) if $1\leq x<2$, then $\hat{f}_{H_{1,2}}(x)=5(x-1)+4=5x-1$, and iii) if $2\leq x\leq 3$, then $\hat{f}_{H_{1,2}}(x)=6(x-2)+9=6x-3$.

    Next, we show the constraints of Problem~\ref{LPobfunc}.
    For $g_1\in \Gee$, we must satisfy $x_{H_1}+x_{H_{1,2}}\geq 3$, while for $g_2\in \Gee$ we must satisfy $x_{H_2}+x_{H_{1,2}}\geq 2$. Finally, we must have $0\leq x_{H_1}\leq 3$, $0\leq x_{H_2}\leq 2$, and $0\leq x_{H_{1,2}}\leq 3$.

    The first phase of our algorithm computes the optimum solution to this optimization problem using an \textsf{LP}-solver.
    
\end{runningexample}

\end{envrevtwofour}

\paragraph{Rounding scheme} 
Next, we present the rounding scheme of our algorithm. While our rounding algorithm is theoretically efficient, as shown in Section~\ref{sec:exp}, in practice the LP formulated in the previous paragraphs almost always produces small fractional residuals, so our rounding scheme remains efficient in practical settings.

Let $\sol$ be an initially empty set denoting the family of sets we choose in our solution. We define $\bar{x}_H = \lfloor \hat{x}_H \rfloor$ for all $H \subseteq \Gee$, and from $B(H)$, we add the $\bar{x}_H$ sets with the smallest weight to the solution set $\sol$. By rounding down the values $\hat{x}_H$, some demands may remain unsatisfied. Next, we describe how we add more sets to the set $\sol$ to make sure all the demands are satisfied.

Let $\bar{B}(H) = B(H) - \sol$, be the family of unused sets in each bucket for all $H \subseteq \Gee$, and let $r = \left\lceil\sum_{H \subseteq \Gee}(\hat{x}_H - \bar{x}_H)\right\rceil$. From each bucket $\bar{B}(H)$, we add at most $r$ more sets to the solution set. To find the sets we add to the solution, we brute force all the possible ways of choosing between $0$ to $r$ sets from each bucket $\bar{B}_H$, and among the possible family of sets that satisfy all the demands, we choose the sets having the minimum total weight and add them to the solution set $\sol$. We will later show why it is sufficient to consider at most $r$ additional sets from each bucket to satisfy all the remaining demands. The pseudo-code of our algorithm is shown in Algorithm~\ref{alg:lprounding}.

\revtwofour{In Appendix~\ref{appndx:thm2},} we prove the following theorem.
\begin{theorem}
\label{thm:1}
    Given an instance $(\Dee, \Gee)$ of the \prob\ problem with $|\Dee|=n$ and $|\Gee|=O(1)$, there exists a $2$-approximation algorithm for the \prob\ problem that runs in $O(\mathcal{L}(n))$ time.
\end{theorem}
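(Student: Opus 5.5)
The plan is to analyze the rounding scheme of Algorithm~\ref{alg:lprounding} in three parts: feasibility, cost, and running time. Throughout, $\hat{x}_H$ is the optimal solution of Problem~\eqref{LPobfunc}, which Lemma~\ref{lem:1}(iii) lets us read off from LP~\eqref{FinalLPobfunc}, and $\bar{x}_H=\lfloor \hat{x}_H\rfloor$.

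\textbf{Step 1: bounding $r$ and the residual demands.} Since each fractional part $\hat{x}_H-\bar{x}_H$ lies in $[0,1)$ and there are at most $2^\ell$ buckets, we get $r\le 2^\ell=O(1)$. For each $g\in\Gee$ define the residual demand
\[
d_g=\max\Bigl\{0,\;Q_g-\sum_{H\ni g}\bar{x}_H\Bigr\}.
\]
Feasibility of $\hat{x}$ gives $d_g\le \sum_{H\ni g}(\hat{x}_H-\bar{x}_H)\le r$. The bound $d_g\le r$ uses that $d_g$ is an integer, so it is at most the ceiling of that sum.

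\textbf{Step 2: feasibility.} Taking one extra set from every bucket with $\hat{x}_H>\bar{x}_H$ corresponds to $\lceil \hat{x}_H\rceil$. This stays within $|B(H)|$ because $\hat{x}_H\le |B(H)|$, and it covers every demand. Each such bucket gets at most $1\le r$ extra sets, so the brute-force search space contains a feasible completion and the algorithm always returns a valid cover.

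\textbf{Step 3: cost bound.} The cost splits into two parts.
\begin{itemize}
\item \emph{Rounded-down part.} It costs $\sum_H f_H(\bar{x}_H)$. Since $\hat{f}_H$ agrees with $f_H$ on integers and is non-decreasing (Lemma~\ref{lem:1}(ii)), this is at most $\sum_H\hat{f}_H(\hat{x}_H)=\hat{\opt}\le\opt$ by Lemma~\ref{lem:1}(i).
\item \emph{Completion part.} It suffices to exhibit one candidate in the search space of weight at most $\opt$. Let $x^*$ be an optimal integral solution of IP~\eqref{obfunc}, and set $y_H=\min\{r,\max\{0,x^*_H-\bar{x}_H\}\}$. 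Take the $y_H$ lightest sets of $\bar{B}(H)$; these are the sets of rank $\bar{x}_H+1,\dots,\bar{x}_H+y_H$ in $B(H)$.
\end{itemize}

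We check that this candidate covers the residual demands. Fix $g$. If some $H\ni g$ is truncated, i.e.\ $y_H=r$, then $g$ receives $r\ge d_g$ extra coverage. Otherwise $\bar{x}_H+y_H\ge x^*_H$ for every $H\ni g$, so the coverage of $g$ is at least $Q_g$.

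For its weight, the sets of ranks $\bar{x}_H+1,\dots,\bar{x}_H+y_H$ are each no heavier than the sets of ranks $\bar{x}_H+1,\dots,x^*_H$. Hence the weight is at most $\sum_H\bigl(f_H(x^*_H)-f_H(\bar{x}_H)\bigr)^+\le\sum_H f_H(x^*_H)=\opt$. The brute-force minimum is no larger, so the total cost is at most $2\,\opt$.

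\textbf{Step 4: running time.} Building the buckets and sorting takes $O(n\log n)$ time, which is within $O(\mathcal{L}(n))$. Solving LP~\eqref{FinalLPobfunc}, with $n$ variables and $O(n)$ constraints, takes $O(\mathcal{L}(n))$. Converting $\hat{x}_{H,i}$ to $\hat{x}_H=\sum_i\hat{x}_{H,i}$ takes $O(n)$. The brute force enumerates $(r+1)^{2^\ell}=O(1)$ combinations, each checked in $O(1)$ time using prefix sums over the sorted buckets.

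The main obstacle I expect is Step 3's completion bound. The subtle point is that the search is restricted to at most $r$ sets per bucket, while the optimal solution's excess $x^*_H-\bar{x}_H$ may exceed $r$. The truncation argument, together with the integrality bound $d_g\le r$ from Step 1, is exactly what resolves it.
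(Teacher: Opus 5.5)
Your proof is correct and is essentially the paper's argument: your completion $y_H=\min\{r,\max\{0,x^*_H-\bar{x}_H\}\}$ is exactly $e'_H-\bar{x}_H$ for the paper's $e'_H=\min\{\max\{\bar{x}_H,\opt_H\},\bar{x}_H+r\}$, and your coverage case split (some $H\ni g$ truncated at $r$ versus none) is the paper's case analysis showing that $e'$ satisfies all demands. Your cost split $\sum_H f_H(\bar{x}_H)+\sum_H\bigl(f_H(x^*_H)-f_H(\bar{x}_H)\bigr)^+\le 2\,\opt$ is likewise the paper's bound $f_H(e_H)\le f_H(\bar{x}_H)+f_H(\opt_H)$, and your separate feasibility check in Step 2 is harmless but redundant, since the Step 3 candidate already certifies feasibility.
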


\begin{envrevtwofour}
    \noindent\textbf{Running Example~\ref{exAlg} (cont.)}
    Next, we show an execution of our rounding scheme using the toy example we defined in Running Example~\ref{exAlg}. For simplicity, assume that the optimum solution found by solving Problem~\ref{LPobfunc} is $\hat{x}_{H_1}=1.5$, $\hat{x}_{H_2}=0.5$, and $\hat{x}_{H_{1,2}}=1.5$. We note that this is not the optimum solution, however we will use this feasible solution to demonstrate our rounding scheme. We have $\bar{x}_{H_1}=1$, $\bar{x}_{H_2}=1$, and $\bar{x}_{H_{1,2}}=0$. In the set $\mathcal{H}$ we add the set $H_1$ with the smallest weight, which is $\Gamma_1$ and the set $H_{1,2}$ with the smallest weight, which is $E_1$. Hence $\mathcal{H}=\{\Gamma_1,E_1\}$. We also have $\bar{B}(H_1)=\{\Gamma_2,\Gamma_3\}$, $\bar{B}(H_2)=\{\Delta_1,\Delta_2\}$, and $\bar{B}(H_{1,2})=\{E_2,E_3\}$. By the definition of $r$ we get $r=\lceil 1.5\rceil=2$. Next, we brute force all combinations choosing between $0$ and $r=2$ sets from each bucket $\bar{B}(H_1)$, $\bar{B}(H_2)$, $\bar{B}(H_{1,2})$, and among the family of sets that satisfy the demands we choose the set with the minimum total weight. For example, the family that contains the pre-selected sets in $\mathcal{H}$ along with $2$ sets from $\bar{B}(H_1)$ (sets $\Gamma_2, \Gamma_3$), and no set from $\bar{B}(H_{1,2})$ and $\bar{B}(H_2)$ does not satisfy the  demands. Notice that while the demand of $g_2$ is $\Qu_{g_2}=2$ there is only one set in $\mathcal{H}\cup \{\Gamma_2\}\cup\{\Gamma_3\} =\{\Gamma_1, \Gamma_2, \Gamma_3, E_1\}$ that contains $g_2$. 
    On the other hand, the family that contains the pre-selected sets in $\mathcal{H}$ along with $1$ set from $\bar{B}(H_1)$ (set $\Gamma_2$), $1$ set from $\bar{B}(H_{1,2})$ (set $E_2$), and no set from $\bar{B}(H_2)$ satisfy the demands. Notice that in $\mathcal{H}\cup \{\Gamma_2\}\cup\{E_2\} =\{\Gamma_1,\Gamma_2, E_1, E_2\}$ there exist $4>\Qu_{g_1}$ sets that contain $g_1$ and $2\geq \Qu_{g_2}$ sets that contain $g_2$. However, the total weight of the selected family of sets $\{\Gamma_1,\Gamma_2, E_1, E_2\}$ is $13$, which is not the minimum. By trying all combinations, we find the the optimum family is the one that contains the pre-selected sets in $\mathcal{H}$ along with $1$ set from $\bar{B}(H_1)$ (set $\Gamma_2$), $1$ set from $\bar{B}(H_2)$ (set $\Delta_1$), and no set from $\bar{B}(H_{1,2})$, i.e, the family $\mathcal{H}\cup \{\Gamma_2\}\cup\{\Delta_1\} =\{\Gamma_1, \Gamma_2, E_1, \Delta_1\}$ satisfies all the demands and its total weight is $10$.
\end{envrevtwofour}

\begin{algorithm}[t]
    \caption{LP-based $2$-approximation Algorithm}
    \label{alg:lprounding}
    \begin{algorithmic}[1]
        \Require $\Dee, \Gee, \{Q_g\mid g\in \Gee\}$
        \Ensure Family of selected sets $\sol$
        \State Form and solve LP~\eqref{FinalLPobfunc}
        \State Convert the solution from LP~\eqref{FinalLPobfunc} to a solution for Problem~\ref{LPobfunc}
        \State For each $H\subseteq \Gee$, let $\bar{x}_H$ be the value of the variable $x_{H}$ in the optimum solution of Problem~\ref{LPobfunc}
        \State $\sol \gets \emptyset$
        \ForAll{$H \subseteq \Gee$}
            \State $\bar{x}_H \gets \lfloor \hat{x}_H \rfloor$
            \State Add $\bar{x}_H$ sets from $B(H)$ with minimum weight to $\sol$
        \EndFor
        \ForAll{$H \subseteq \Gee$}
            \State $\bar{B}(H) \gets B(H) \setminus \sol$
        \EndFor
        \State $r \gets \left\lceil\sum_{H \subseteq \Gee}(\hat{x}_H - \bar{x}_H)\right\rceil$
        \State $W' \gets \infty$, $\sol' \gets \emptyset$
        \ForAll{integer vectors $\vv{y}_{H \subseteq \Gee}$ such that \\ $0 \leq y_H \leq \min\{r, |\bar{B}(H)|\}$}
            \State $S \gets \emptyset$
            \ForAll{$H \subseteq \Gee$}
                \State Add $y_H$ sets with minimum weight from $\bar{B}(H)$ to $S$
            \EndFor
            \ForAll{$g \in \Gee$}
                \State $q_g \gets \sum_{H \ni g} \bar{x}_H + y_H$
            \EndFor
            \If{$q_g \geq Q_g$ for all $g \in \Gee$}
                \State $w \gets$ total weight of $S$
                \If{$w < W'$}
                    \State $W' \gets w$, $\sol' \gets S$
                \EndIf
            \EndIf
        \EndFor
        \State $\sol \gets \sol \cup \sol'$
        \State \Return $\sol$
    \end{algorithmic}
\end{algorithm}

\subsection{Fast $(2 + \epsilon)$-approximation}\label{subsec:smalllp}
In this section, we show how we can use a slightly different LP formulation to achieve a much faster algorithm. As discussed, the running time bottleneck of Algorithm~\ref{alg:lprounding} is solving the LP~\eqref{FinalLPobfunc}. The main complexity of the LP formulation comes from the number of variables, equivalently, the number of linear pieces in functions $\hat{f}_H(\cdot)$ from Problem~\eqref{LPobfunc}. As discussed in Subsection~\ref{subsec:biglp}, the total number of different linear pieces across all $H \subseteq \Gee$ is $O(n)$. This makes the size of LP~\eqref{FinalLPobfunc} to be $O(n)$, and solving this program takes super-quadratic time in terms of the number of sets $n$.
The idea of the algorithm we discuss in this section is to describe each function $\hat{f}_H(\cdot)$ with a smaller number of linear pieces bounding the error. This will allow us to significantly reduce the size of the LP instance and solve it efficiently.


Recall that every function $\hat{f}_H$ is convex, non-decreasing, and piecewise linear. We start by describing a more general result of approximating any continuous, non-decreasing, convex function (not necessarily piecewise linear) with a continuous, non-decreasing, convex, and piecewise linear function. Then, we use this result to approximate the piecewise linear function $\hat{f}_H(\cdot)$ with another continuous, non-decreasing, convex, and piecewise linear function having less number of linear pieces.

\begin{figure}[t]
\centering
\begin{tikzpicture}[scale=0.8]]
  \begin{axis}[
      width=6cm, height=6cm,
      axis lines=left,
      xmin=0, xmax=5.2,
      ymin=0, ymax=27,
      xtick={0,1,2,3,4, 5},
      xlabel={$x$}, ylabel={$\ef(x), \gi(x)$},
      ylabel style={yshift=-15pt},
      xlabel style={xshift=70pt, yshift = 15pt},
      tick style={black},
      clip=false,
      legend style={draw=none, font=\scriptsize, at={(0.97,0.03)}, anchor=south east}
    ]

    \addplot[blue, domain=0:5, samples=200] {x^2 + 1};
    \addlegendentry{$\ef(x)=x^2 + 1$}

    \addplot[red, dashed]
      coordinates {
        (0,1)
        (1.73,4)
        (3.87,16)
        (5, 26)
      };
    \addlegendentry{$\gi(x), \epsilon = 3$}

    \addplot[only marks, mark=*, mark size=1.2pt, red]
      coordinates {
        (0,1)
        (1.73,4)
        (3.87,16)
        (5, 26)
      };

  \end{axis}
\end{tikzpicture}
\vspace{-1em}
\caption{\small Approximation of the $\ef(x) = x^2 + 1$ curve by a piecewise linear function $\gi(\cdot)$, generated by the described algorithm, where $\epsilon = 3$. The function $\gi(\cdot)$ approximates the values of the function $\ef(\cdot)$ within a factor of $4$.
}
\label{fig:convex-approx}
\vspace{-1.5em}
\end{figure}
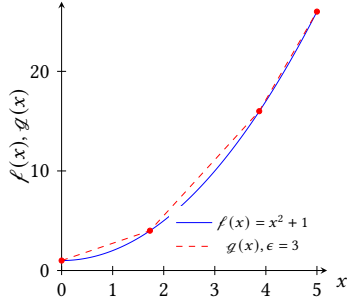

\paragraph{Approximating functions by piecewise linear functions}
Let $\ef: [\alpha, \beta] \to [\gamma, \delta]$ be any continuous, convex, and non-decreasing function such that $0 < \alpha \leq \beta$ and $0 < \gamma \leq \delta$, and let $\eps>0$ be any constant. We describe an algorithm to approximate the function $\ef(\cdot)$ by a piecewise linear function $\gi(\cdot)$, such that $\ef(x) \leq \gi(x) \leq (1 + \epsilon)\ef(x)$, for any value of $\alpha \leq x \leq \beta$.

We start by $x_1 = \alpha$ and get $\ef(x_1)$. Then, we find the largest value $x'_1$, such that $\ef(x'_1) \leq (1+\epsilon)\ef(x_1)$ and $x'_1 \leq \beta$. Since the function $\ef(\cdot)$ is non-decreasing and continuous it is sufficient to only consider $x_1'>x_1$.
Let $\mathsf{LargestVal}_{\ef}(x_1, (1+\eps)\ef(x_1))$ be a procedure over the function $\ef$ that returns such value $x_1'$.
Next, we set $x_2 = x'_1$, and recursively repeat the same process using $x_2$ instead of $x_1$, calling $\mathsf{LargestVal}_{\ef}(x_2, (1+\eps)\ef(x_2))$.
More generally, at the $i$'th step, we find the largest value $x'_i \leq \beta$, such that $\ef(x'_i) \leq (1+\epsilon)\ef(x_i)$, and then set $x_{i + 1} = x'_i$. 
We continue this process until $x_s = \beta$ for some step $s$.
Finally, we set the function $\gi(\cdot)$ to contain $s - 1$ linear pieces where the $i$-th piece of $\gi(\cdot)$ is the line segment connecting the points $(x_i, \ef(x_i))$ and $(x_{i+1}, \ef(x_{i+1}))$, for every $i\in[s-1]$. An illustration is shown in Figure~\ref{fig:convex-approx}. We prove the next lemma in Appendix~\ref{appndx:sec4b}.

\begin{lemma}\label{lem:approx}
    Given a parameter $\eps>0$, any continuous, convex and non-decreasing function $\ef(\cdot): [\alpha, \beta] \to [\gamma, \delta]$, where $0<\alpha\leq \beta$ and $0<\gamma\leq \delta$, can be approximated by a continuous, convex non-decreasing piecewise linear function $\gi(\cdot): [\alpha, \beta] \to [\gamma, \delta]$ having $O(\log(\frac{\delta}{\gamma})\log^{-1}(1+\eps))$ pieces, such that for every $\alpha \leq x \leq \beta$, it holds that $\ef(x) \leq \gi(x) \leq (1 + \epsilon) \ef(x)$. Moreover, the function $\gi(\cdot)$ can be constructed in $O(\log(\frac{\delta}{\gamma})\cdot\log^{-1}(1+\eps)\cdot T_{\ef}^>\cdot T_{\ef}^=)$ time, where $T_{\ef}^=$ is the time needed to calculate $\ef(x)$, for any $\alpha \le x \leq \beta$, and $T_{\ef}^>$ is the time needed to compute the smallest value $x'\geq \bar{x}$ such that $\ef(x')\leq (1+\eps)\ef(\bar{x})$, for any  values $\bar{x}, x'$ satisfying $\alpha\leq \bar{x}\leq x'\leq\beta$.
\end{lemma}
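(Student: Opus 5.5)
We analyze the greedy construction described just before the lemma, with breakpoints $\alpha=x_1<x_2<\dots<x_s=\beta$. The plan is to establish, in order, the shape properties of $\gi(\cdot)$, the sandwich inequality $\ef\le\gi\le(1+\eps)\ef$, the bound on the number of pieces, and finally the construction time.

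\textbf{Shape of $\gi$.} By construction $\gi$ interpolates $\ef$ at the breakpoints, so it is continuous and piecewise linear. It maps into $[\gamma,\delta]$ because its values are convex combinations of values of $\ef$. Monotonicity follows since each piece joins $(x_i,\ef(x_i))$ to $(x_{i+1},\ef(x_{i+1}))$ with $\ef(x_i)\le\ef(x_{i+1})$, so every slope is nonnegative. For convexity, let $m_i=(\ef(x_{i+1})-\ef(x_i))/(x_{i+1}-x_i)$. By the three-chord (slope monotonicity) property of convex functions, $m_i\le m_{i+1}$ for consecutive chords. A continuous piecewise linear function with nondecreasing slopes is convex.

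\textbf{Sandwich bound.} Fix $x\in[x_i,x_{i+1}]$. Convexity of $\ef$ says the chord lies above the graph, so $\ef(x)\le\gi(x)$. For the upper bound, $\gi(x)\le\max\{\ef(x_i),\ef(x_{i+1})\}=\ef(x_{i+1})\le(1+\eps)\ef(x_i)\le(1+\eps)\ef(x)$. Here the middle inequality is the defining property of $x_{i+1}=x'_i$, and the last uses monotonicity together with $x\ge x_i$. Positivity $\gamma>0$ guarantees the multiplicative statements are meaningful.

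\textbf{Number of pieces.} This is the main obstacle, since we must show progress at each step. The key claim is that whenever $x_{i+1}<\beta$, we have $\ef(x_{i+1})=(1+\eps)\ef(x_i)$. Indeed, if $\ef(x_{i+1})<(1+\eps)\ef(x_i)$ with $x_{i+1}<\beta$, continuity would give a slightly larger admissible point, contradicting maximality. This requires $\mathsf{LargestVal}$ to return the true supremum, which is attained by continuity.

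Hence $\ef(x_i)=(1+\eps)^{i-1}\ef(x_1)\ge(1+\eps)^{i-1}\gamma$ for all $i<s$. Combined with $\ef\le\delta$, this gives $s-2\le\log_{1+\eps}(\delta/\gamma)$. So the number of pieces is $s-1=O(\log(\delta/\gamma)\log^{-1}(1+\eps))$. The same claim also shows the process terminates: if $\ef$ is constant near the end, the largest admissible point is simply $\beta$.

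\textbf{Running time.} Each step makes one call to $\mathsf{LargestVal}$, costing $T^>_{\ef}$. Evaluating $\ef$ at the new breakpoint, and the threshold $(1+\eps)\ef(x_i)$ it needs, costs $T^=_{\ef}$. We bound the per-step cost generously by $T^>_{\ef}\cdot T^=_{\ef}$, for instance if the search itself performs evaluations. Multiplying by the number of steps gives the stated bound.
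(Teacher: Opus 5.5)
Your proposal is correct and follows essentially the same route as the paper. Both use the chord-above-graph property for $\ef\le\gi$, the chain $\gi(x)\le\ef(x_{i+1})\le(1+\eps)\ef(x_i)\le(1+\eps)\ef(x)$, three-chord slope monotonicity for convexity, a geometric-growth count of breakpoints, and a per-step cost bounded by $T_{\ef}^>\cdot T_{\ef}^=$.

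Your piece count is more careful than the paper's. The paper asserts $s=\lceil\log(\delta/\gamma)\log^{-1}(1+\eps)\rceil$ ``by the construction'' and implicitly assumes $\ef(\alpha)=\gamma$ and $\ef(\beta)=\delta$. Your continuity/maximality argument that $\ef(x_{i+1})=(1+\eps)\ef(x_i)$ at every non-final step, combined with $\ef(x_1)\ge\gamma$, is exactly what justifies that bound.
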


\paragraph{Our algorithm}
We move back to our original problem. For all $H \subseteq \Gee$, we keep the first linear piece, while for the rest pieces we use Lemma~\ref{lem:approx} to approximate $\hat{f}_H(\cdot)$. Let $\hat{g}_H(\cdot)$ be the resulting piecewise linear function we obtain from Lemma~\ref{lem:approx} along with the first linear piece from $\hat{f}_H(\cdot)$.
By Lemma~\ref{lem:approx}, for all $x \in [0,|B(H)|]$, it holds that $\hat{f}_H(x) \leq \hat{g}_H(x) \leq (1 + \epsilon)\hat{f}_H(x)$, where $\eps>0$ is a constant chosen by the user. We then define the optimization problem, similarly to Problem~\ref{LPobfunc} as shown in Subsection~\ref{subsec:biglp}, but we use the functions $\hat{g}_H(\cdot)$ instead of $\hat{f}_H(\cdot)$. 
\begin{align}
    \label{smallProb}\text{minimize:} \quad & \sum_{H \subseteq \Gee} \hat{g}_H(x_H) \\
    \notag\text{subject to:} \quad &\sum_{H \subseteq \Gee, \, g \in 
    H}x_H \geq Q_g, \quad \forall g \in \Gee, \\
   \notag & 0 \leq x_H \leq |B(H)|, \quad \forall H \subseteq \Gee \\
   \notag & \quad x_H \in \Re.
\end{align}
Using our machinery from Subsection~\ref{subsec:biglp} along with the proof of Lemma~\ref{lem:1}, we can formulate Problem~\eqref{smallProb} as an LP.
We then solve the new LP to get the fractional solution. Finally, we round its solution using the same rounding scheme as in Algorithm~\ref{alg:lprounding}.

\begin{envrevtwofour}
    \noindent\textbf{Running Example~\ref{exAlg} (cont.)} We continue Example~\ref{exAlg} executing our new algorithm. For every subset $H=\{H_1,H_2,H_{1,2}\}$, we construct the function $\hat{g}_H(x)$. For simplicity we focus on $H_1=\{g_1\}$, however the same approach works for the rest subsets. Recall the definition of $\hat{f}_{H_1}(x)$ which is a piecewise linear function with three pieces as shown in Figure~\ref{fig:piecewiseEstimation}. For $x\in [0,1]$ by the description of our algorithm we have $\hat{g}_{H_1}(x)=\hat{f}_{H_1}(x)=x$. Then for the rest of the pieces we execute the algorithm from Lemma~\ref{lem:approx}. Assume that $\eps=7$. We have $\alpha=1, \beta=3, \gamma=1,$ and $\delta=8$. Starting from $x_1=1$ we compute the largest $x_1'\leq 3$ such that $\hat{f}_{H_1}(x_1')\leq (1+\eps)\hat{f}_{H_1}(x_1)=8\cdot\hat{f}_{H_1}(x_1)$ and $x_1'\leq 3$. We observe that $x_1'=3$ so $\hat{g}_{H_1}(x)$ for $x\in[1,8]$ is defined as the linear segment that connects $(1,1)$ and $(3,8)$, as shown with the dashed segment in Figure~\ref{fig:piecewiseEstimation}. Repeating the same procedure for every subset $H$ of $\Gee$ we define all functions $\hat{g}_H(x)$. 
    The constraints of Problem~\ref{smallProb} are the same with the constraints of Problem~\ref{LPobfunc}.
\end{envrevtwofour}

\begin{figure}[t]
    \centering
    \includegraphics[width=0.45\linewidth]{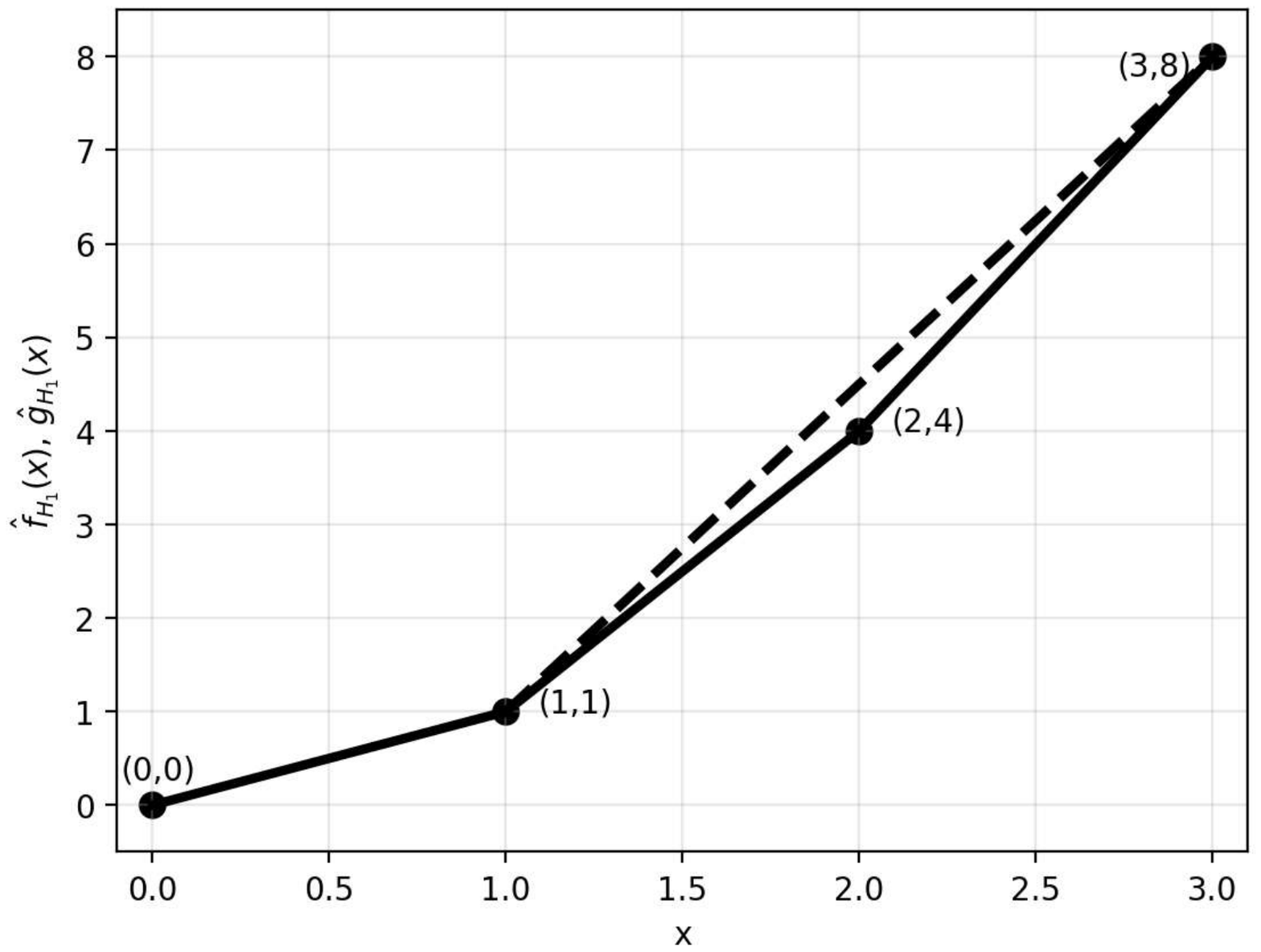}
    \vspace{-2em}
    \caption{\revtwofour{The function $\hat{f}_{H_1}(x)$ is a piecewise linear function consisting of three solid segments. For $\eps = 7$, the function $\hat{g}_{H_1}(x)$ is a piecewise linear function with two segments: for $x \in [0,1]$, $\hat{g}_{H_1}(x) = \hat{f}_{H_1}(x)$, while for $x \in [1,3]$, $\hat{g}_{H_1}(x)$ is given by the dashed linear segment connecting $(1,1)$ and $(3,8)$.}}
    \label{fig:piecewiseEstimation}
\end{figure}

\paragraph{Analysis}
For every $H\subseteq \Gee$, let $\tilde{x}_H$ be the integer value of the variable $x_H$ that is returned by the rounding scheme.
Recall that $\mathsf{opt}$ is the value of the optimum solution in IP~\eqref{obfunc}.
The next lemmas are shown in Appendix~\ref{appndx:sec4b}.
\begin{lemma}
\label{lem:ApproxFactor}
    $\sum_{H\subseteq \Gee}f_H(\tilde{x}_H)\leq 2\cdot(1+\eps)\cdot\opt$.
\end{lemma}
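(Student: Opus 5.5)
The plan is to split the cost of the rounded solution into two parts. The first is the cost of the ``floor'' part, where each bucket $B(H)$ contributes its $\bar{x}_H=\lfloor \hat{x}_H\rfloor$ lightest sets. The second is the cost of the sets added by the brute-force completion step. I will bound the first by $(1+\eps)\opt$ and the second by $\opt$; the sum is at most $(2+\eps)\opt\le 2(1+\eps)\opt$. Here $\hat{x}_H$ now denotes the optimal solution of Problem~\eqref{smallProb}, and $x^*_H$ the integral optimum of IP~\eqref{obfunc}, so $\sum_H f_H(x^*_H)=\opt$.

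\emph{Step 1 (fractional cost).} The vector $\vv{x^*}_{H\subseteq\Gee}$ is feasible for Problem~\eqref{smallProb}, since the constraints are identical to those of Problem~\eqref{LPobfunc}. On integer points $\hat{f}_H$ coincides with $f_H$. On $[0,1]$ we have $\hat{g}_H=\hat{f}_H$, because the first piece is kept. On $[1,|B(H)|]$, Lemma~\ref{lem:approx} applies because the domain and range there are bounded away from $0$; if $f_H(1)=0$, I would treat zero-weight sets separately. Hence $\hat{f}_H\le \hat{g}_H\le (1+\eps)\hat{f}_H$ on the whole domain. Optimality of $\hat{x}$ then gives
\[
\sum_H \hat{f}_H(\hat{x}_H)\le \sum_H \hat{g}_H(\hat{x}_H)\le \sum_H \hat{g}_H(x^*_H)\le (1+\eps)\sum_H f_H(x^*_H)=(1+\eps)\opt .
\]
Since $\hat{f}_H$ is non-decreasing (Lemma~\ref{lem:1}(ii)), the floor part costs $\sum_H f_H(\bar{x}_H)=\sum_H\hat{f}_H(\bar{x}_H)\le (1+\eps)\opt$.

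\emph{Step 2 (a cheap admissible completion).} This is the main step. The algorithm returns the cheapest feasible completion with $0\le y_H\le \min\{r,|\bar{B}(H)|\}$, so it suffices to exhibit one such completion of cost at most $\opt$. I would take
\[
y_H=\min\{\max\{0,\,x^*_H-\bar{x}_H\},\ r\}.
\]
We have $y_H\le |\bar{B}(H)|$ because $x^*_H\le |B(H)|$. To check feasibility, fix $g\in\Gee$ and consider two cases.
\begin{itemize}
\item If $y_H=r$ for some $H\ni g$: the fractional parts $\hat{x}_{H'}-\bar{x}_{H'}$ over $H'\ni g$ sum to at most $r$. Therefore $\sum_{H'\ni g}\bar{x}_{H'}\ge \sum_{H'\ni g}\hat{x}_{H'}-r\ge Q_g-r$, and the extra $r$ sets from bucket $H$ cover the remaining demand of $g$.
\item Otherwise, for every $H\ni g$ we have $\bar{x}_H+y_H\ge x^*_H$. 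The coverage of $g$ is then at least that of the optimum, hence at least $Q_g$.
\end{itemize}

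\emph{Step 3 (cost of the completion).} The $y_H$ sets added from $\bar{B}(H)$ are the lightest remaining ones, i.e., the sets at sorted positions $\bar{x}_H+1,\dots,\bar{x}_H+y_H$. All of these positions are at most $x^*_H$, so their total weight is at most $f_H(x^*_H)$. Summing over $H$, the completion costs at most $\opt$, and the chosen completion $\sol'$ costs no more. Combining with Step 1, $\sum_H f_H(\tilde{x}_H)\le (1+\eps)\opt+\opt\le 2(1+\eps)\opt$.

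The step needing the most care is Step 2. The feasibility argument depends on the fact that the total fractional mass over buckets containing $g$ is bounded by $r$, which is exactly why the brute force caps each $y_H$ at $r$. The degenerate case $r=0$ is immediate: then all $y_H=0$ and $\bar{x}=\hat{x}$ is already feasible.
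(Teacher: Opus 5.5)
Your proof is correct and follows the paper's argument. Your completion $y_H=\min\{\max\{0,x^*_H-\bar{x}_H\},r\}$ is exactly the paper's $e'_H-\bar{x}_H$, your two-case feasibility check is its Lemma~\ref{lem:helper2}, and your Step~1 writes out the ``straightforward extension'' of Theorem~\ref{thm:1} that the paper only asserts; splitting the cost into floor part plus completion also makes explicit that the $(1+\eps)$ loss falls only on the floor part, so you get $(2+\eps)\opt$ directly without rescaling $\eps$.
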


Notice that if we run the algorithm from Lemma~\ref{lem:approx}, setting $\eps\leftarrow \eps/2$, we will get directly $\sum_{H\subseteq \Gee}f_H(\tilde{x}_H)\leq (2+\eps)\opt$.

\begin{lemma}
\label{lem:runtimefaster}
    \revone{For any arbitrary constant $\eps>0$,} the algorithm runs in $O(n\log n + \mathcal{L}(\log (W)))$ time, where $W=\frac{\sum_{t\in \Dee}w_t}{\min_{t\in \Dee}w_t}$.
\end{lemma}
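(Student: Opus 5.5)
We account for the running time phase by phase: preprocessing, constructing the functions $\hat{g}_H(\cdot)$, building and solving the LP for Problem~\eqref{smallProb}, and rounding.

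\textbf{Preprocessing.} We scan $\Dee$ once and place each set into its bucket $B(H)$. Each set is a subset of a universe of size $\ell=O(1)$, so this takes $O(n)$ time. We then sort every bucket by weight, in $O(n\log n)$ total time. Next we compute prefix sums inside each bucket in $O(n)$ time, so that $f_H(x)$, and hence $\hat{f}_H(x)$, can be evaluated in $O(1)$ time for any $x$.

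\textbf{Constructing $\hat{g}_H(\cdot)$.} For each of the $O(2^\ell)=O(1)$ buckets we keep the first linear piece of $\hat{f}_H(\cdot)$ and apply Lemma~\ref{lem:approx} on $[1,|B(H)|]$.
\begin{itemize}
\item Take $\gamma=\hat{f}_H(1)=w_{H,1}\geq\min_{t\in\Dee}w_t$ and $\delta=\hat{f}_H(|B(H)|)\leq\sum_{t\in\Dee}w_t$. Then $\delta/\gamma\leq W$, and the number of pieces is $O(\log W/\log(1+\eps))=O(\log W)$, since $\eps$ is a constant.
\item $T^=_{\ef}=O(1)$ by the prefix sums.
\item $T^>_{\ef}=O(\log n)$: the breakpoint values $f_H(z)$ are monotone in $z$, so we binary search over integer $z$ for the last breakpoint with $f_H(z)\leq(1+\eps)\hat{f}_H(\bar{x})$, then solve a linear equation within that piece.
\end{itemize}
Lemma~\ref{lem:approx} therefore gives construction time $O(\log W\cdot\log n)$.

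\textbf{Bounding the construction cost.} I expect the main obstacle to be that $T^>\cdot T^=$ multiplies the piece count, so this term must be shown subsumed by the other costs. If $\log W\leq n$, then $\log W\log n=O(n\log n)$. If $\log W>n$, the pieces of $\hat{g}_H$ may be taken to be at most the $|B(H)|\leq n$ pieces of $\hat{f}_H$: each endpoint $x_i$ of $\hat{g}_H$ can be snapped to a breakpoint, because $\hat{f}_H$ is linear between integer breakpoints. The count is then $O(\min\{n,\log W\})$ and the cost is within $O(n\log n)$. Alternatively, the $\log n$ term can be absorbed into $\mathcal{L}(\log W)$, since any LP solve takes at least linear time in its size.

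\textbf{LP formulation and solve.} We translate Problem~\eqref{smallProb} into an LP exactly as in Lemma~\ref{lem:1}(iii), with one variable per linear piece of each $\hat{g}_H$. This gives $O(2^\ell\log W)=O(\log W)$ variables, together with $O(1)$ nontrivial constraints of types~\eqref{LPC1} and~\eqref{LPC2} plus box constraints, i.e.\ $O(\log W)$ constraints in total. Solving costs $\mathcal{L}(\log W)$. Converting the LP optimum back to values $\hat{x}_H$ takes time linear in the LP size.

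\textbf{Rounding.} We reuse the analysis from Theorem~\ref{thm:1}, which this proposal assumes bounds the rounding independently of the LP solve. We have $r\leq 2^\ell=O(1)$, so the brute force enumerates $O((r+1)^{2^\ell})=O(1)$ vectors. Because buckets are sorted and prefix sums exist, evaluating each vector costs $O(1)$. Selecting the $\bar{x}_H$ lightest sets and outputting the solution costs $O(n)$.

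Summing the four phases gives $O(n\log n+\mathcal{L}(\log W))$.
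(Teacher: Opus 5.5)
Your phase decomposition matches the paper's: bucketing and sorting in $O(n\log n)$, $\delta/\gamma\le W$ giving $O(\log W)$ pieces per bucket, an LP with $O(\log W)$ variables and constraints, and a constant-size enumeration in the rounding. The real difference is how you implement $\mathsf{LargestVal}$. You answer each call by binary search in $O(\log n)$, and this cost multiplies with the number of steps, giving $O(\log W\cdot\log n)$. The paper instead observes that the points $x_i$ produced by Lemma~\ref{lem:approx} move monotonically to the right. So a single left-to-right sweep over the sorted breakpoints of $\hat f_H$ serves all calls, with each breakpoint passed once, for $O(\log W+|B(H)|)$ per bucket. That bound is additive, holds in every regime, and applies to $\hat g_H$ exactly as the algorithm builds it, so the paper needs no case analysis.

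Your case analysis is the weak part of the proposal. The case $\log W\le n$ is fine. For $\log W>n$, the snapping argument has three problems:
\begin{itemize}
\item Snapping changes $\hat g_H$, so you must show the new function still satisfies $\hat f_H\le\hat g_H\le(1+\eps)\hat f_H$. Rounding breakpoints up can skip a breakpoint of $\hat f_H$ and violate this. Rounding down to $\lfloor x_i\rfloor$ does work, but it needs a proof.
\item You must generate the snapped breakpoints directly. Running the original $O(\log W)$ steps and snapping afterwards still costs $O(\log W\cdot\log n)$.
\item Your fallback is invalid. $\mathcal{L}(m)=\Omega(m)$ does not dominate $m\log n$, and when $n<\log W$ you would need roughly $\mathcal{L}(m)=\Omega(m\log m)$, which nothing guarantees.
\end{itemize}
A simple repair keeps $\hat g_H$ unchanged. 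Before binary searching, check in $O(1)$ whether $(1+\eps)\hat f_H(x_i)\le\hat f_H(\lfloor x_i\rfloor+1)$, and if so, solve inside the current piece. A binary search is then triggered only when a step crosses a breakpoint of $\hat f_H$, which happens at most $|B(H)|$ times. The construction therefore costs $O(\log W+|B(H)|\log n)$ per bucket, and the claimed bound follows without a case split.
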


Putting everything together, we conclude to the next theorem.

\begin{theorem}
\label{thm:2}
    Given an instance $(\Dee, \Gee)$ of the \prob\ problem with $|\Dee|=n$ and $|\Gee|=O(1)$, and an arbitrary constant $\eps>0$, there exists a $(2+\eps)$-approximation algorithm for the \prob\ problem that runs in $O(n\log n+\mathcal{L}(\log(W)))$ time, where $W=\tfrac{\sum_{t\in \Dee}w_t}{\min_{t\in \Dee}w_t}$.
\end{theorem}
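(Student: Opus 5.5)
Theorem~\ref{thm:2} follows by combining Lemma~\ref{lem:ApproxFactor} and Lemma~\ref{lem:runtimefaster}, with the accuracy parameter rescaled. Given the user's $\eps>0$, I run the algorithm of Subsection~\ref{subsec:smalllp} with parameter $\eps'=\eps/2$. Concretely, each $\hat{f}_H(\cdot)$, restricted to the part after its first linear piece, is approximated via Lemma~\ref{lem:approx} with parameter $\eps'$. Then Problem~\eqref{smallProb} is solved as an LP, and its solution is rounded with the scheme of Algorithm~\ref{alg:lprounding}.

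\textbf{Feasibility.} The rounding scheme returns a family satisfying all demands. The argument is the one behind Theorem~\ref{thm:1}, which I take as given. The optimum $\hat{x}$ of Problem~\eqref{smallProb} satisfies the same covering constraints as in Problem~\eqref{LPobfunc}. Taking $\lceil \hat{x}_H\rceil\le \bar{x}_H+r$ sets from each bucket already meets every demand, using $\lceil\hat{x}_H\rceil\le|B(H)|$. So the brute force over $0\le y_H\le\min\{r,|\bar B(H)|\}$ always finds a feasible completion.

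\textbf{Approximation.} Lemma~\ref{lem:ApproxFactor} with $\eps'$ gives
\[
\sum_H f_H(\tilde{x}_H)\le 2(1+\eps/2)\,\opt=(2+\eps)\,\opt .
\]
The left-hand side is exactly the weight of the returned family, since within each bucket the lightest $\tilde{x}_H$ sets are chosen.

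\textbf{Running time.} Since $\eps'$ is still a constant, Lemma~\ref{lem:runtimefaster} applies unchanged and gives $O(n\log n+\mathcal{L}(\log W))$. The $O(n\log n)$ term covers sorting each bucket and building prefix sums of $f_H$, which let us evaluate $\hat{f}_H$ and perform the $\mathsf{LargestVal}$ searches by binary search. The resulting LP has $O(2^\ell\log W/\log(1+\eps'))=O(\log W)$ variables.

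\textbf{Cost of the brute-force step.} This is the one point to double-check, since the brute force enumerates $(r+1)^{2^\ell}$ vectors. Each coordinate $\hat{x}_H-\bar{x}_H$ is less than $1$, so $r\le 2^\ell$. The enumeration therefore costs $O(1)$ candidate vectors. Each candidate is evaluated in $O(1)$ time using the prefix sums of the remaining sets in each bucket.

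There is essentially no obstacle beyond these points; the theorem is the conjunction of the two lemmas after the substitution $\eps\leftarrow\eps/2$.
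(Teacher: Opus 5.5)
Your proposal is correct and follows the paper's argument: Theorem~\ref{thm:2} is obtained by combining Lemma~\ref{lem:ApproxFactor}, run with $\eps\leftarrow\eps/2$ so that $2(1+\eps/2)=2+\eps$, with Lemma~\ref{lem:runtimefaster}, and the rounding step costs $O(1)$ because $r\le 2^\ell$. The only difference is an implementation detail: you answer each $\mathsf{LargestVal}$ query by binary search over prefix sums, whereas the paper makes one monotone sweep over the breakpoints of $\hat{f}_H$, which costs $O(\log W+|B(H)|)$ per bucket and avoids the extra $\log n$ factor per query.
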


\vspace{-1.3em}
\section{Evaluation}
\label{sec:exp}
\vspace{-0.2em}
\subsection{Experiment Plan}
We aim to evaluate our algorithms with respect to both effectiveness and efficiency. The effectiveness experiments assess the quality of the solutions obtained, specifically in terms of total solution weight, while ensuring that the required demands are satisfied. We further assess the extent to which the obtained solution deviates from the specified demand requirements.
The efficiency experiments, on the other hand, measure the computational performance of each algorithm, i.e., the time required to obtain a solution. For each experiment, we evaluate the impact of varying the following parameters:  
number of sets in the dataset ($n$), number of items of interest ($\ell$), \revtwo{and the distribution and magnitude of the demands ($\Qu$)}.
In our experiments, we evaluate the performance of our $2$-approximation and $(2+\epsilon)$-approximation algorithms, and compare them against two baselines: 1) the greedy approach and 2) RR-LP, the classical LP formulation for covering problems followed by a randomized rounding scheme. \revone{We also compare our main results against the exact DP algorithm in Appendix~\ref{appndx:experiments}}. Across all settings, each experiment is repeated 30 times, and the reported results correspond to the average values. 

\vspace{-0.8em}
\subsection{Experiment Setting}
\subsubsection{Setup}
The experiments were conducted on an Apple M2 Pro processor, 16 GB memory, running macOS. The algorithms were implemented in Python 3.12. We use \textsf{Gurobi}~\cite{gurobi} LP solver in the implementation of our algorithms.\footnote{Code available at: https://anonymous.4open.science/r/WSMC-BU-7B4B.}
\subsubsection{Datasets}
To evaluate our algorithms in practical settings, we employ the following four real-world datasets:
\begin{itemize}[leftmargin=*]
    \item \textsf{Census~\cite{us_census_1990}:} This dataset contains a one percent sample of the {\em Public Use Microdata Samples} person records derived from the complete 1990 census dataset. It includes around 2.5 million records with 68 categorical attributes. We preprocess the dataset by transforming each attribute value into a distinct binary attribute. 
    Additionally, we assign a randomly generated weight between 1 and 1000 to each tuple.
    \item \textsf{Music~\cite{magellandata}:} This dataset is part of the Magellan data repository and comprises approximately 56,000 songs from the Amazon Music dataset. Each song is associated with a set-valued attribute called Genre. We used the Genre attribute to represent the sets in our problem. The duration of each song, measured in seconds, is used as the weight associated with the corresponding tuple. Notably, these weights are fairly similar across entries.
    \item \textsf{Stack Overflow~\cite{stackoverflow2024survey}:} This dataset is derived from the 2024 Stack Overflow Annual Survey of Developers and contains rich categorical information on technologies, roles, education, and demographics. We selected 15 attributes from the dataset and transformed them into 114 binary attributes. Each tuple is additionally assigned a randomly generated weight between 1 and 1000.

    \item \textsf{Yelp~\cite{yelp2025dataset}:} The Yelp business dataset contains approximately 150,000 business entries. From the Categories attribute, we extract items and generate distinct binary attributes corresponding to the unique category values. We then select the 90 most frequent attributes across all entries. Additionally, we use the Number of Reviews column as the weight attribute.
\end{itemize}
\vspace{-2mm}
\subsubsection{Parameter Defaults and Ranges}\label{sec:params}
The default demand distribution is defined as a randomly generated integer uniformly sampled from the range 1 to 10. 
\revtwo{In the experiments evaluating the impact of demands, we generate each group’s demands uniformly at random within a specified window, and we increase the window range in successive powers of two, ranging from $2^0$ to $2^{12}$.} 
In cases where the generated demand cannot be satisfied by the data, we assign the item its maximum feasible demand value (the total number of sets that contain this item). The default setting assigns the number of items $\ell$ to 20 and the number of sets $n$ to 50,000. The parameter $\epsilon$ in our $(2+\epsilon)$-approximation algorithm is fixed to $0.2$ across all settings.
In experiments assessing the impact of $n$, the number of sets is varied from 1,024 up to the full size of the dataset. In experiments evaluating the impact of $\ell$, the number of items is varied from 3 up to the total number of items present in the dataset.

\subsection{Implementation}
\subsubsection{Baselines}
There is relatively little prior work directly addressing the special case of the weighted set multi-cover problem we studied in this paper. To the best of our knowledge, the greedy algorithm is the main practical baseline, as it is the most commonly used algorithm in this area. As another baseline, we also evaluate the classical randomized rounding algorithm used for the set cover problem and its variants. Further details for each of the two baselines are discussed below.

\stitle{Greedy} The greedy baseline extends the well-known approximation algorithm for the unweighted Set Cover problem, which at each step selects the set covering the largest number of uncovered elements, achieving an $O(\ln(\ell))$ approximation. Dobson \cite{dobson1982worst} generalized this approach to the weighted Set Multi-Cover problem by introducing a cover-to-weight ratio, defined as the number of elements with positive residual demand in a set divided by its weight, and iteratively selecting the set with the highest ratio. A straightforward implementation requires $O(\ell n^2)$ time, as ratios must be recomputed for all remaining sets at each step. By using a priority queue and updating only the necessary values, the running time can be reduced to $O(\ell n \log(n))$, which becomes near-linear $O(n \log(n))$ when the number of items $\ell$ is constant, while preserving the same $O(\ln(\ell)) = O(1)$ approximation guarantee. We note that, indeed, for the sake of a fair comparison, our greedy baseline is implemented using the faster approach.

\stitle{RR-LP} The RR-LP baseline implements the classical LP formulation for the covering problems, followed by a randomized rounding scheme. In this approach, for each set $t \in \Dee$, a variable $x_t$ is introduced, and the LP formulation is as follows:
\begin{align}
    \label{standardLP}\text{minimize:} \quad & \sum_{t \in \Dee} w_{t}\cdot x_t \\
    \notag\text{subject to:} \quad &\sum_{t \in \Dee, \, g \in 
    t}x_t \geq Q_g, \quad \forall g \in \Gee, \\
   \notag & 0 \leq x_t \leq 1, \quad \forall t \in \Dee.
\end{align}
After solving the above LP, each tuple $t \in \Dee$ is chosen in the returned solution with probability $\min\{1,\ln(n) \cdot x_t\}$. It is known \cite{Vazirani2010-hn} that this algorithm returns an $O(\ln(n))$-approximation solution to \prob. The runtime of this baseline is equal to the runtime of solving the LP instance, and hence it runs in $\mathcal{L}(n)=O(n^{2.38})$.

\vspace{-0.5em}
\subsubsection{Our algorithms}
We implement the two algorithms described in Sections \ref{subsec:biglp} and \ref{subsec:smalllp}. Once the LP instance is solved, we round all values down and compute the residual demands that remain after rounding. The final solution is then obtained by adding extra tuples as specified in Algorithm \ref{alg:lprounding}. In the rare event that the total residual demand is large after the rounding, we replace this procedure with a greedy algorithm to fulfill the remaining item demands. The key distinction between the $2$-approximation algorithm from Section \ref{subsec:biglp} and the $(2+\epsilon)$-approximation algorithm from Section \ref{subsec:smalllp} lies in the significantly smaller LP instance generated and solved in the latter. To assess the impact of the optimizations introduced in the $(2+\epsilon)$-approximation algorithm on execution time, we evaluate both algorithms in the experiments. However, the $(2+\epsilon)$-approximation algorithm is considered our main result.
\revone{In Appendix~\ref{appndx:experiments}, we evaluate the exact DP algorithm. Since the DP algorithm is computationally expensive, the related experiments are evaluated over smaller data as described in Appendix~\ref{appndx:experiments}, and the primary goal is to assess the quality of the solutions produced by our approaches against the best possible (optimal) solution.}

\begin{figure*}[!tb] 
\centering
    \includegraphics[width=0.5\textwidth]{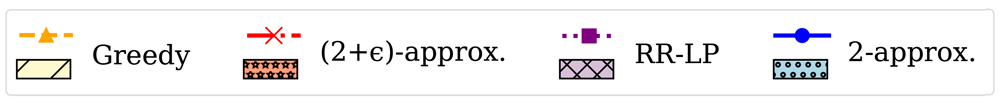}\\
    \begin{minipage}[t]{0.32\linewidth}
        \centering
        \includegraphics[width=\textwidth]{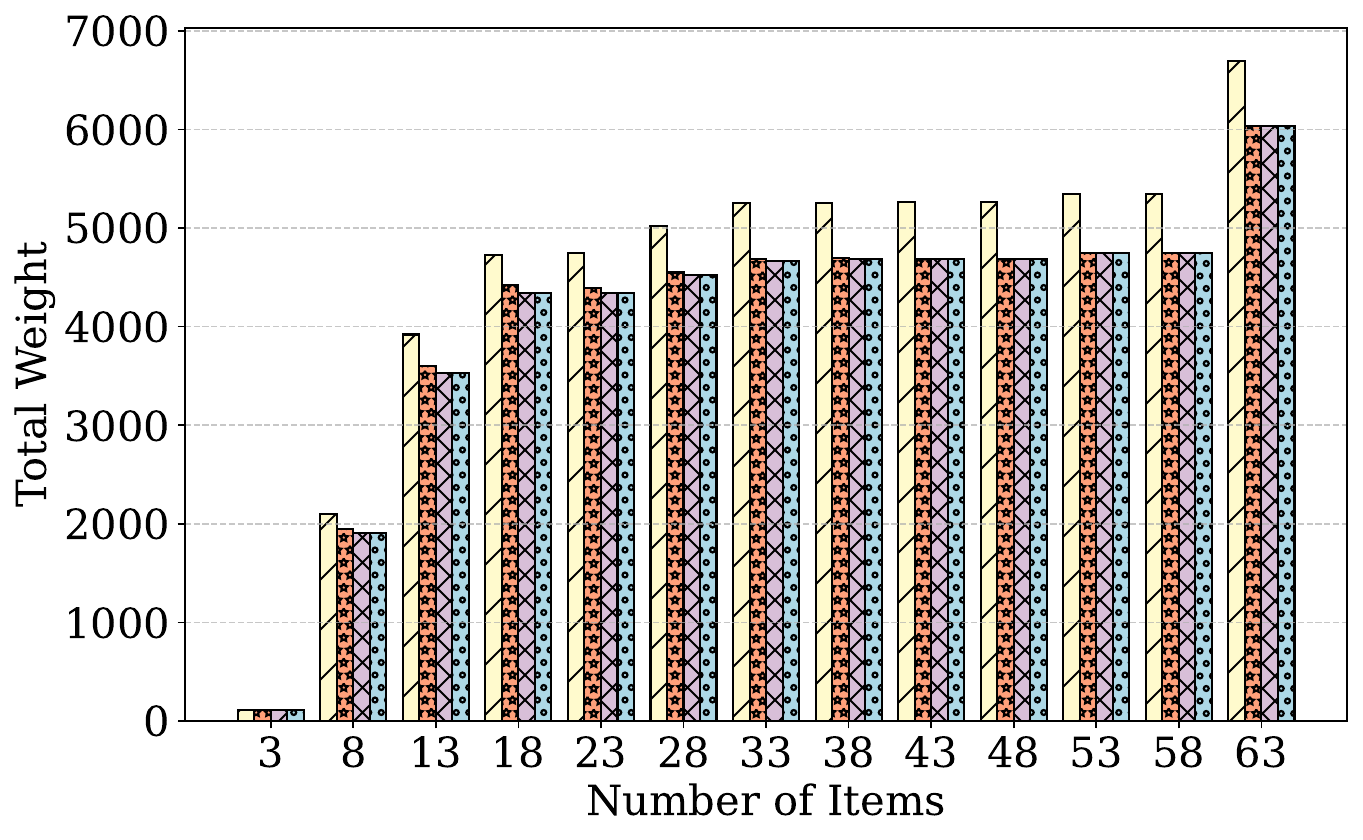}
        \vspace{-2.5em}
        \caption{\small Impact of varying number of items $\ell$ on the solution’s total weight, {\sf Census}}
        \label{fig:weight_varying_l_census}
    \end{minipage}
    \hfill
    \begin{minipage}[t]{0.32\linewidth}
        \centering
        \includegraphics[width=\textwidth]{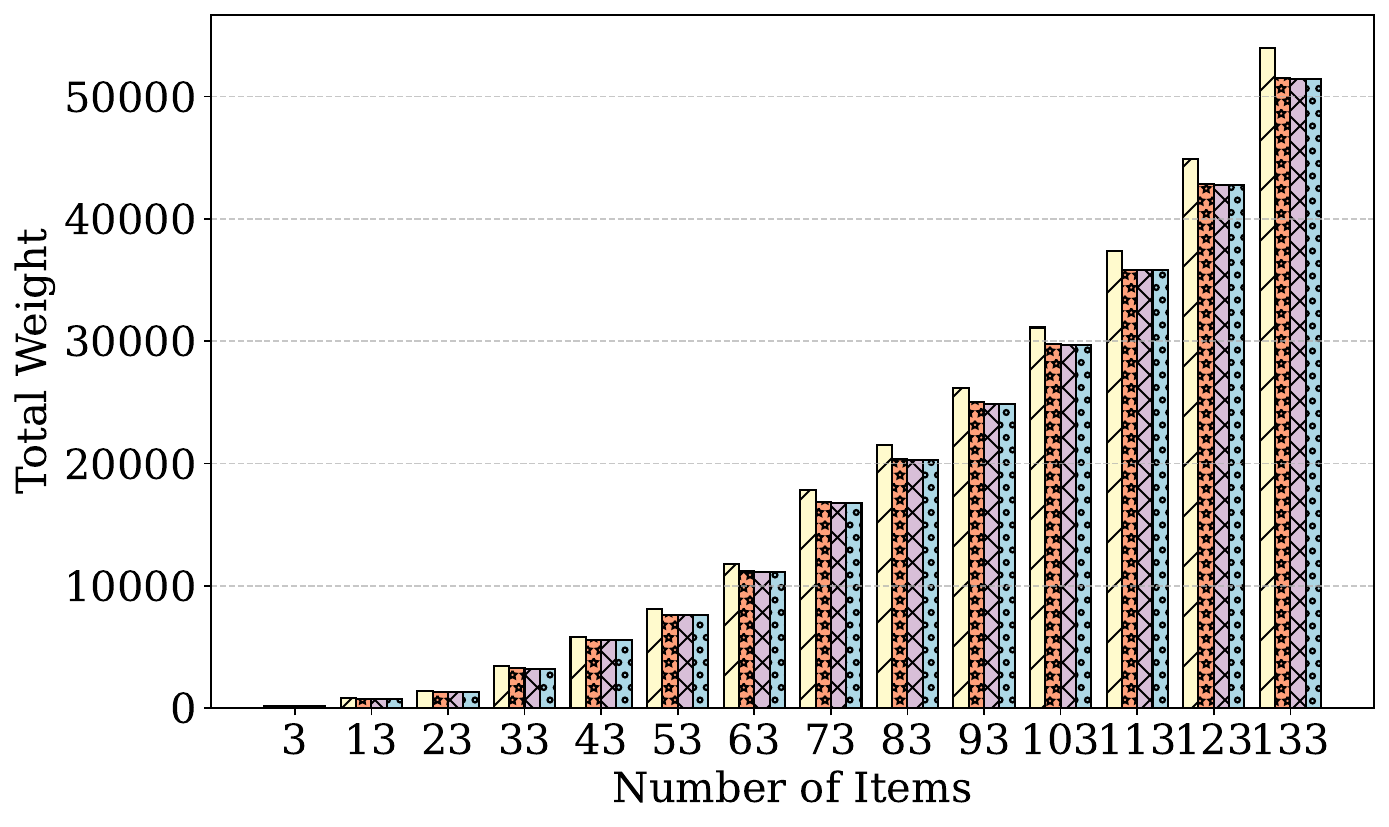}
        \vspace{-2.5em}
        \caption{\small Impact of varying number of items $\ell$ on the solution’s total weight, {\sf Music}}
        \label{fig:weight_varying_l_music}
    \end{minipage}
    \hfill
    \begin{minipage}[t]{0.32\linewidth}
        \centering
        \includegraphics[width=\textwidth]{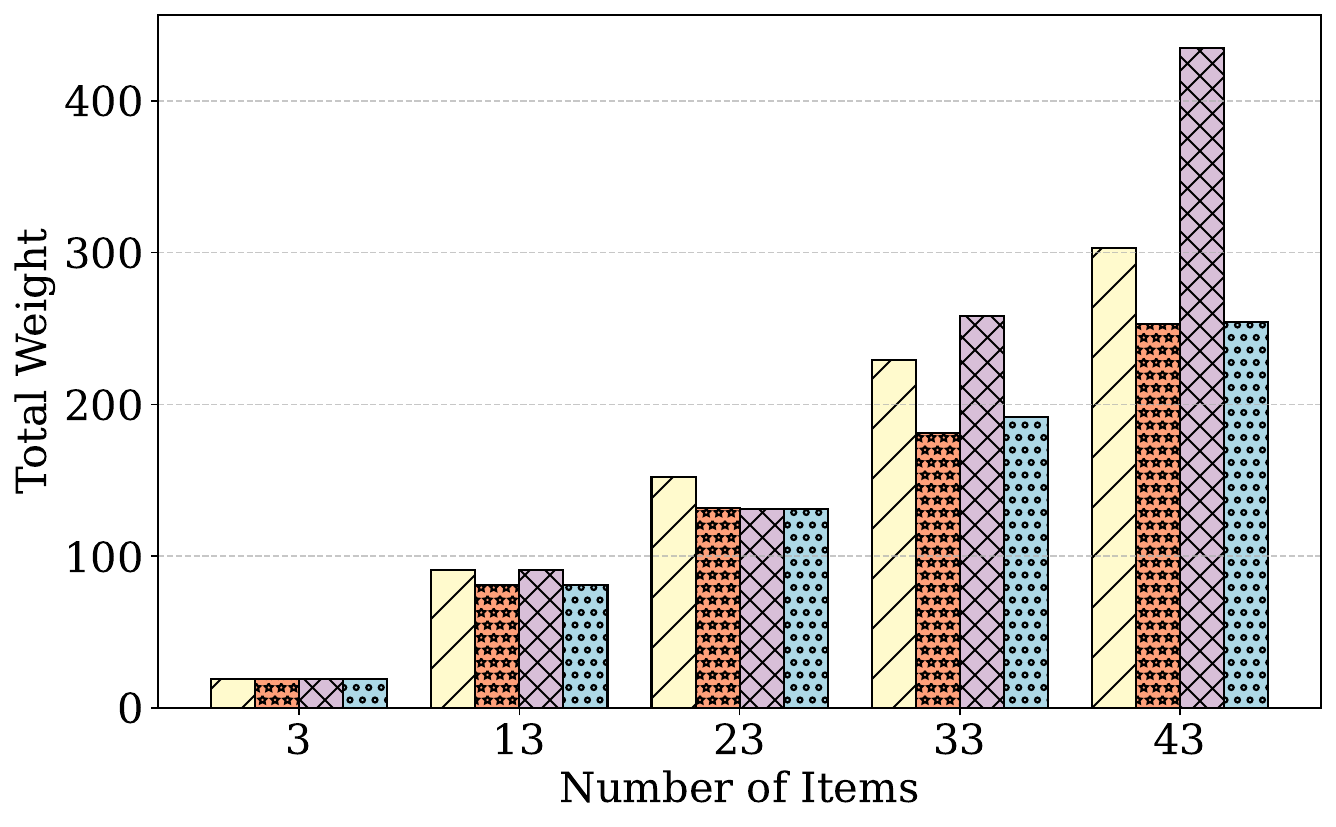}
        \vspace{-2.5em}
        \caption{\small Impact of varying number of items $\ell$ on the solution’s total weight, {\sf Stack Overflow}}
        \label{fig:weight_varying_l_stack}
    \end{minipage}
    \hfill
    \begin{minipage}[t]{0.32\linewidth}
        \centering
        \includegraphics[width=\textwidth]{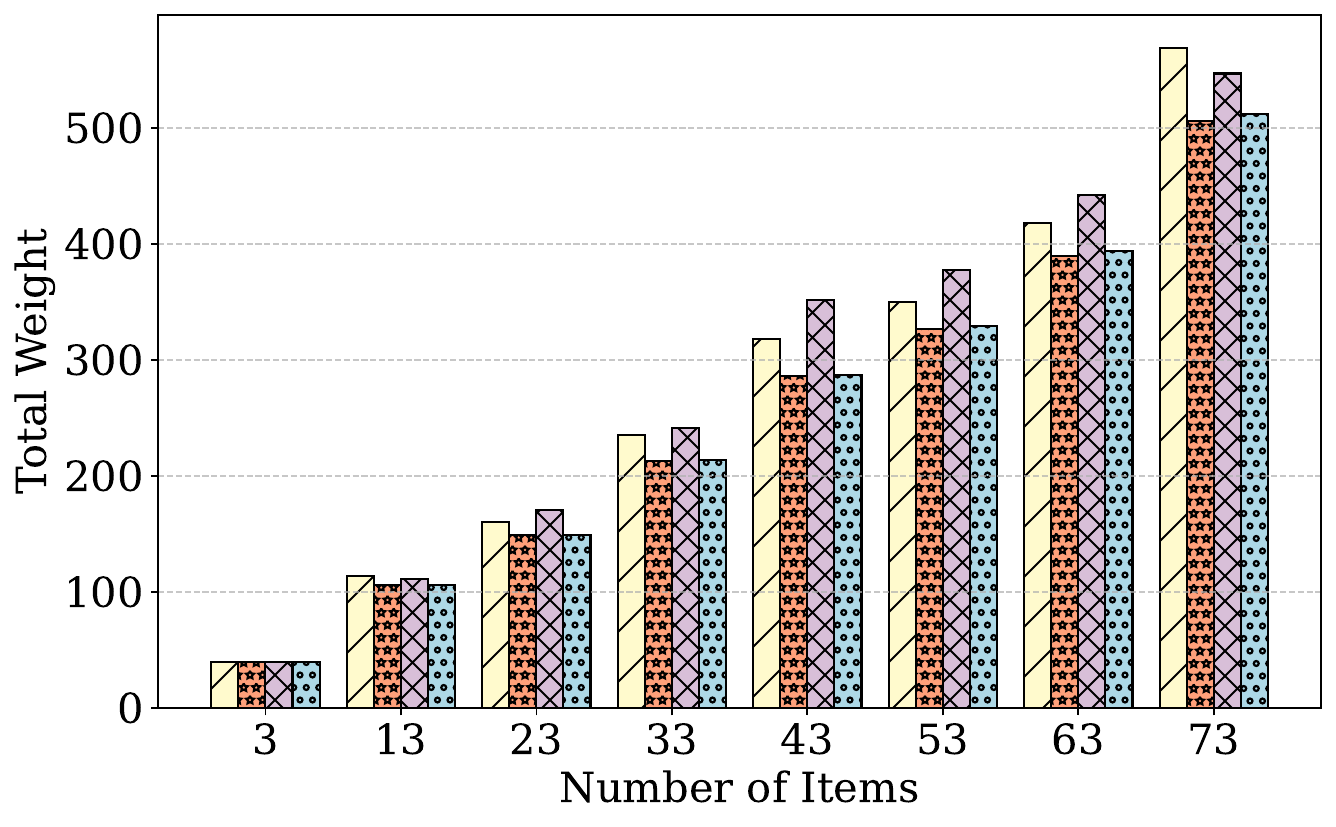}
        \vspace{-2.5em}
        \caption{\small Impact of varying number of items $\ell$ on the solution’s total weight, {\sf Yelp}}
        \label{fig:weight_varying_l_yelp}
    \end{minipage}
    \hfill
    \begin{minipage}[t]{0.32\linewidth}
        \centering
        \includegraphics[width=\textwidth]{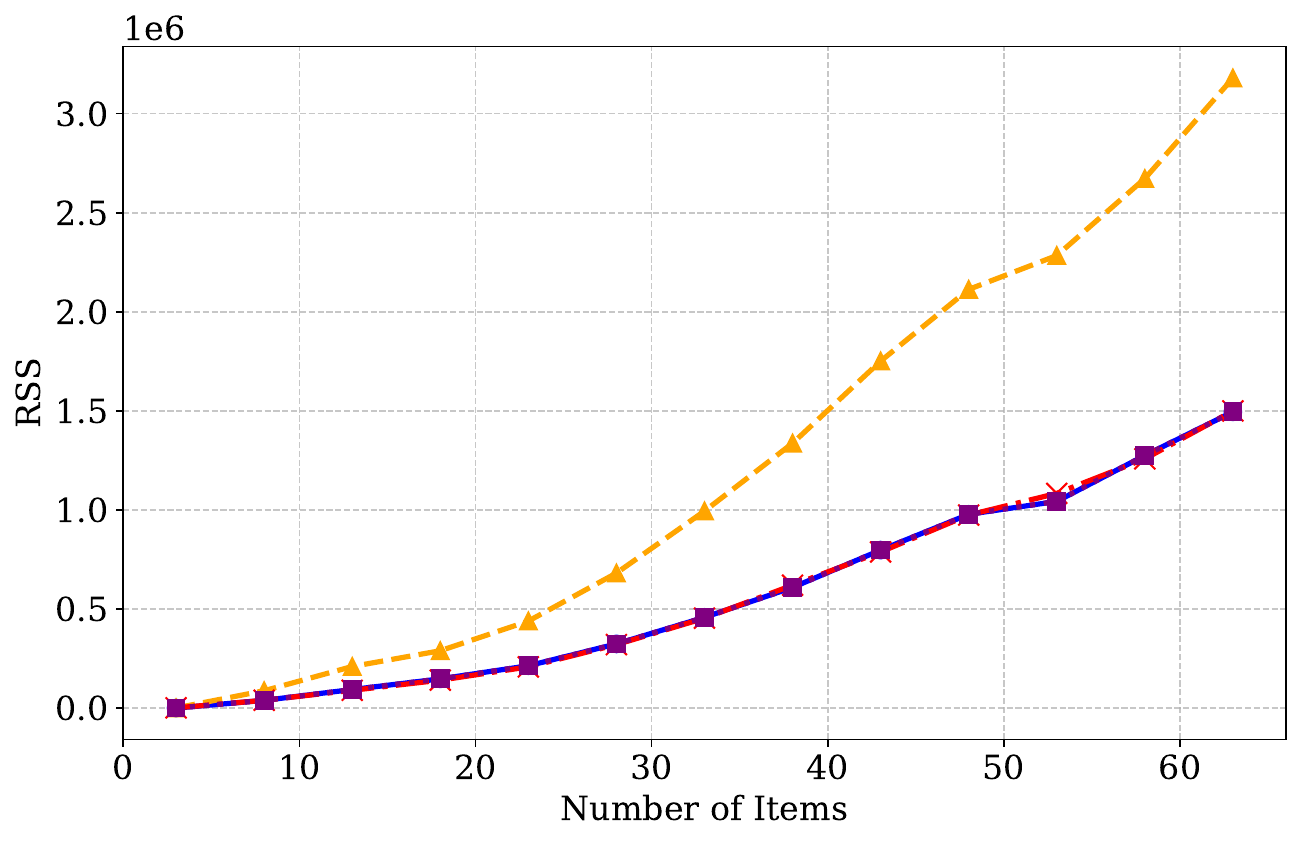}
        \vspace{-2.5em}
        \caption{\small Impact of varying number of items $\ell$ on deviation from demands, {\sf Census}}
        \label{fig:rss_varying_l_census}
    \end{minipage}
    \hfill
    \begin{minipage}[t]{0.32\linewidth}
        \centering
        \includegraphics[width=\textwidth]{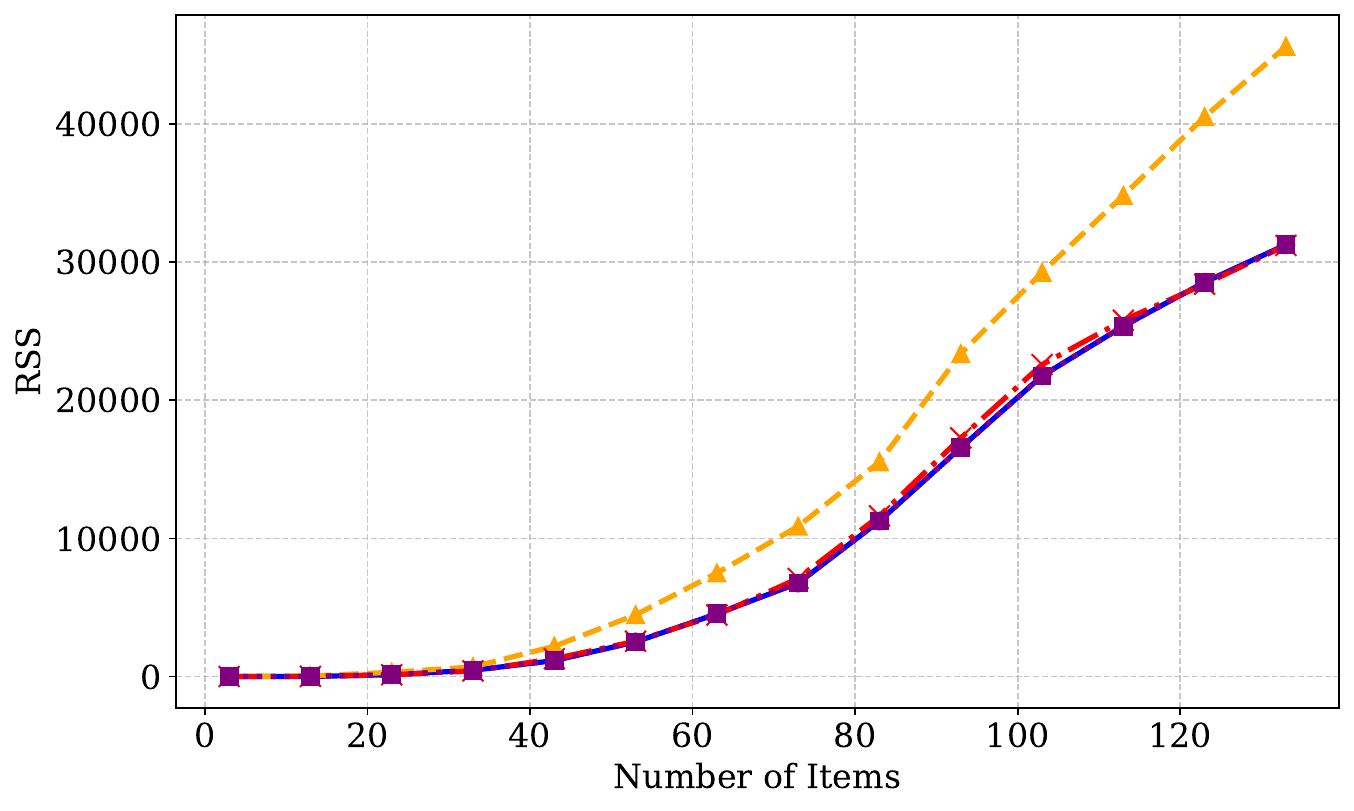}
        \vspace{-2.5em}
        \caption{\small Impact of varying number of items $\ell$ on deviation from demands, {\sf Music}}
        \label{fig:rss_varying_l_music}
    \end{minipage}
    \hfill
    \begin{minipage}[t]{0.32\linewidth}
        \centering
        \includegraphics[width=\textwidth]{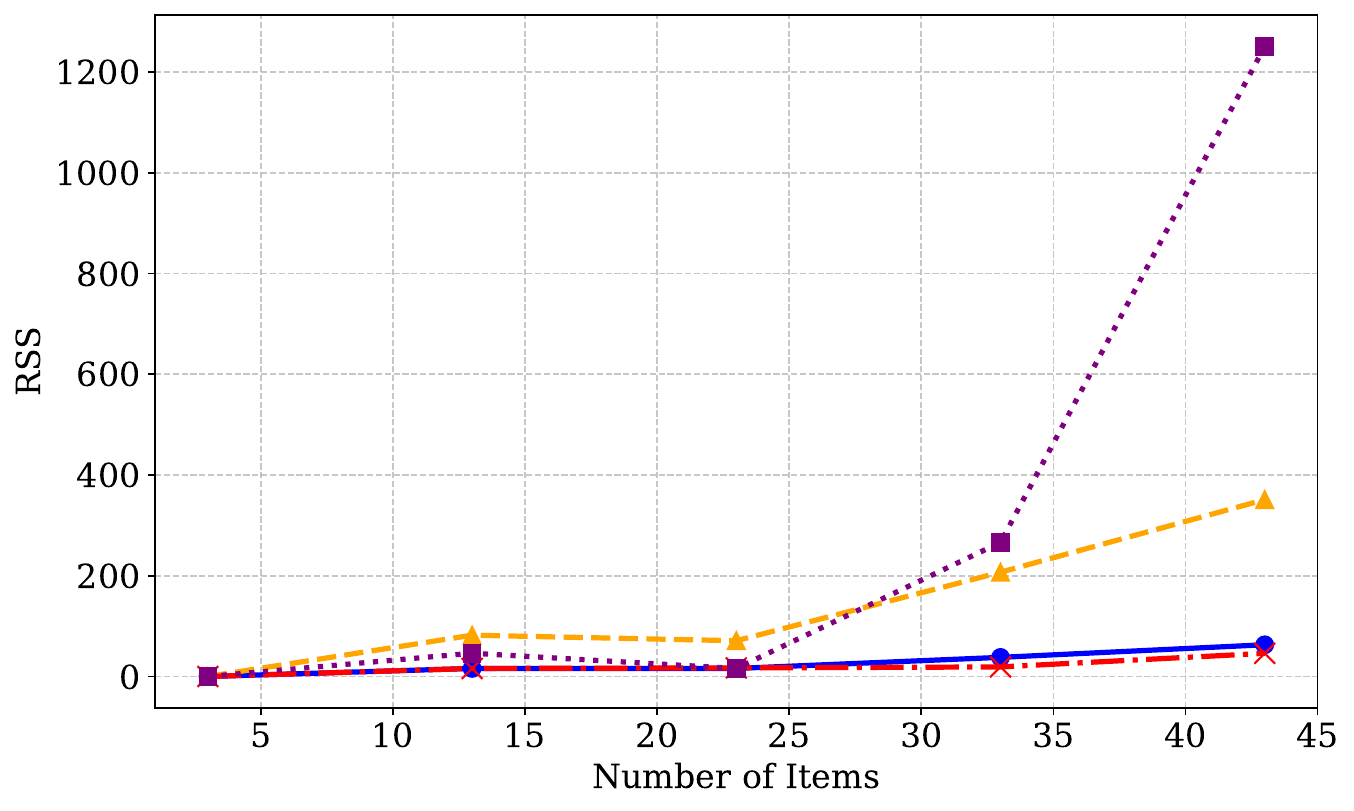}
        \vspace{-2.5em}
        \caption{\small Impact of varying number of items $\ell$ on deviation from demands, {\sf Stack Overflow}}
        \label{fig:rss_varying_l_stack}
    \end{minipage}
    \hfill
    \begin{minipage}[t]{0.32\linewidth}
        \centering
        \includegraphics[width=\textwidth]{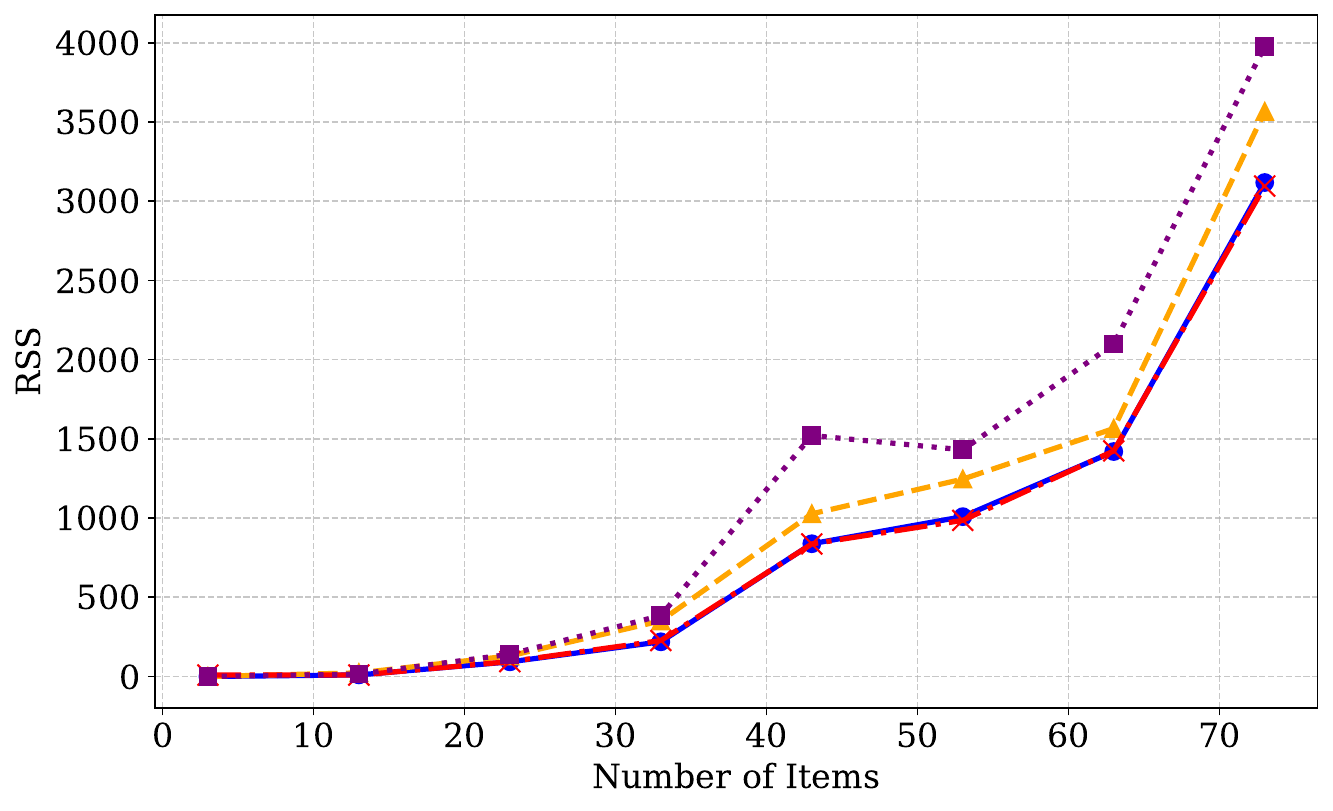}
        \vspace{-2.5em}
        \caption{\small Impact of varying number of items $\ell$ on deviation from demands, {\sf Yelp}}
        \label{fig:rss_varying_l_yelp}
    \end{minipage}
    \hfill
    \begin{minipage}[t]{0.32\linewidth}
        \centering
        \includegraphics[width=\textwidth]{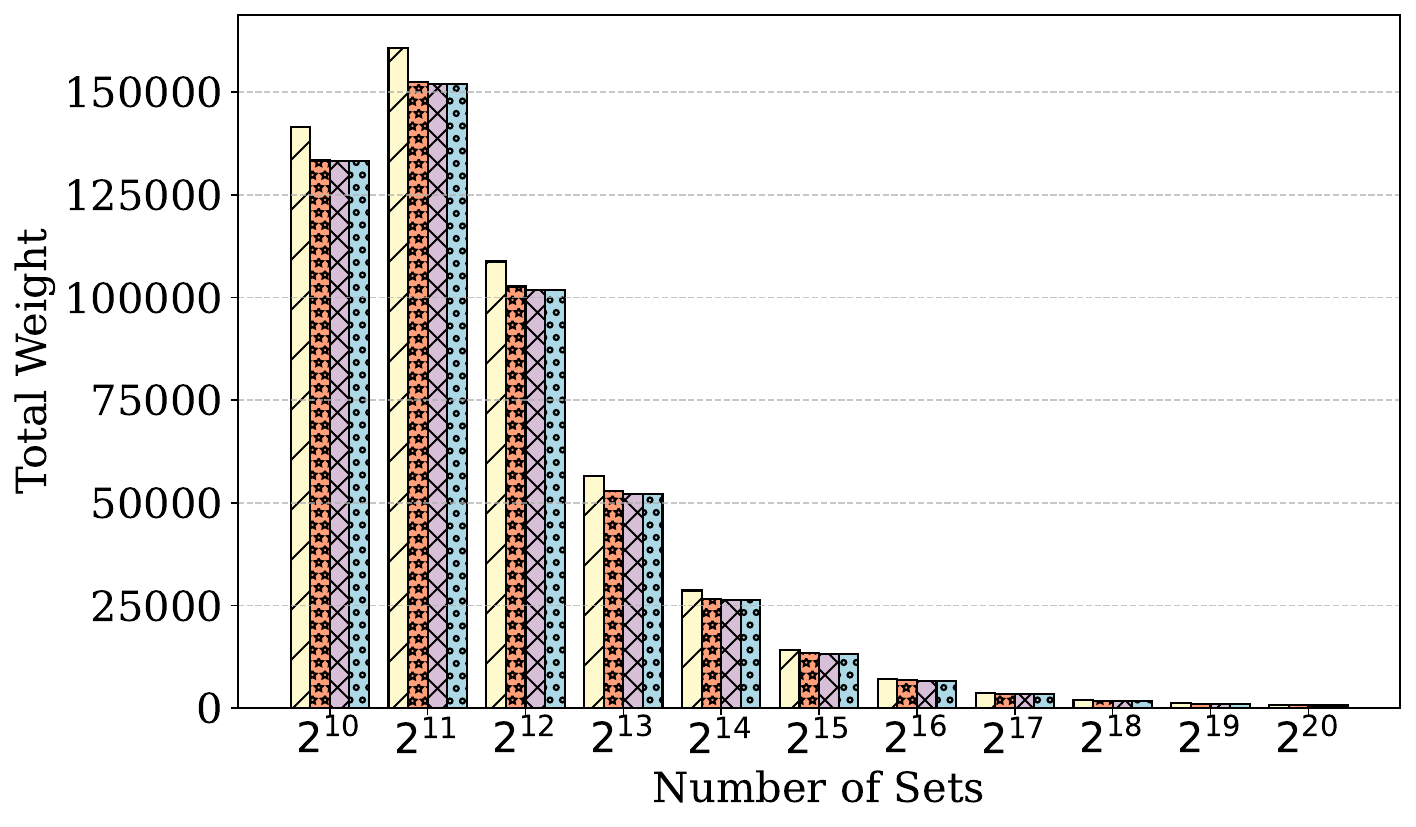}
        \vspace{-2.5em}
        \caption{\small Impact of varying number of sets $n$ on the solution’s total weight, {\sf Census}}
        \label{fig:weight_varying_n_census}
    \end{minipage}
    \hfill
    \begin{minipage}[t]{0.32\linewidth}
        \centering
        \includegraphics[width=\textwidth]{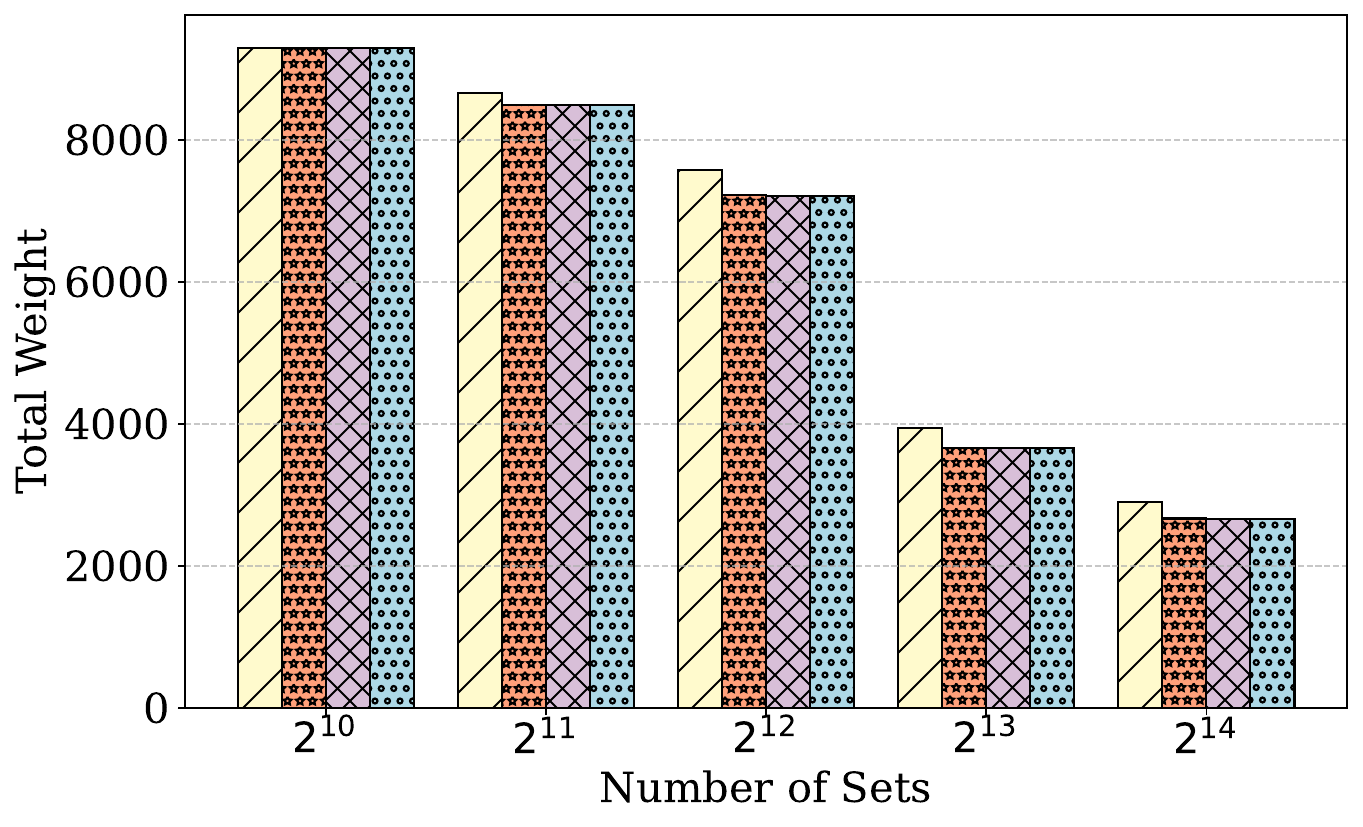}
        \vspace{-2.5em}
        \caption{\small Impact of varying number of sets $n$ on the solution’s total weight, {\sf Music}}
        \label{fig:weight_varying_n_music}
    \end{minipage}
    \hfill
    \begin{minipage}[t]{0.32\linewidth}
        \centering
        \includegraphics[width=\textwidth]{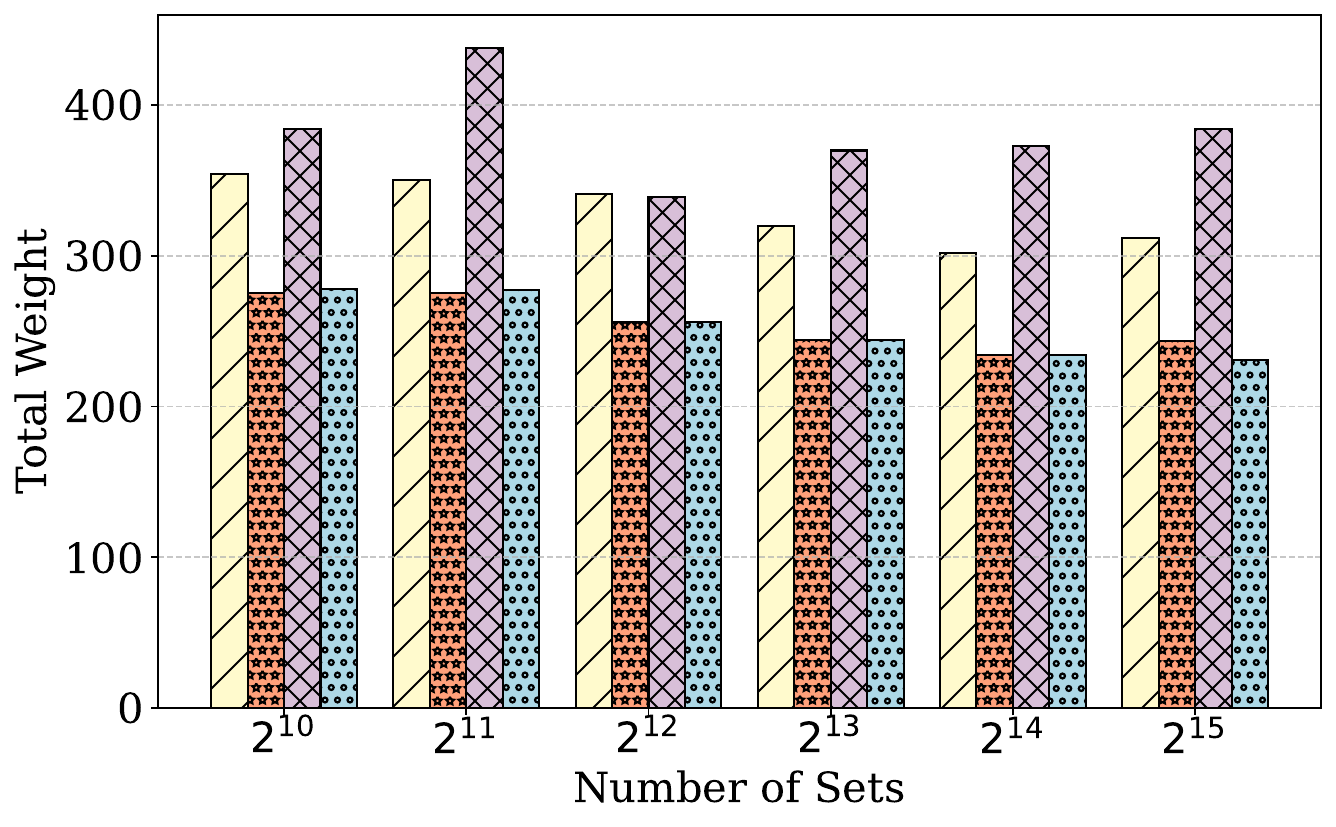}
        \vspace{-2.5em}
        \caption{\small Impact of varying number of sets $n$ on the solution’s total weight, {\sf Stack Overflow}}
        \label{fig:weight_varying_n_stack}
    \end{minipage}
    \hfill
    \begin{minipage}[t]{0.32\linewidth}
        \centering
        \includegraphics[width=\textwidth]{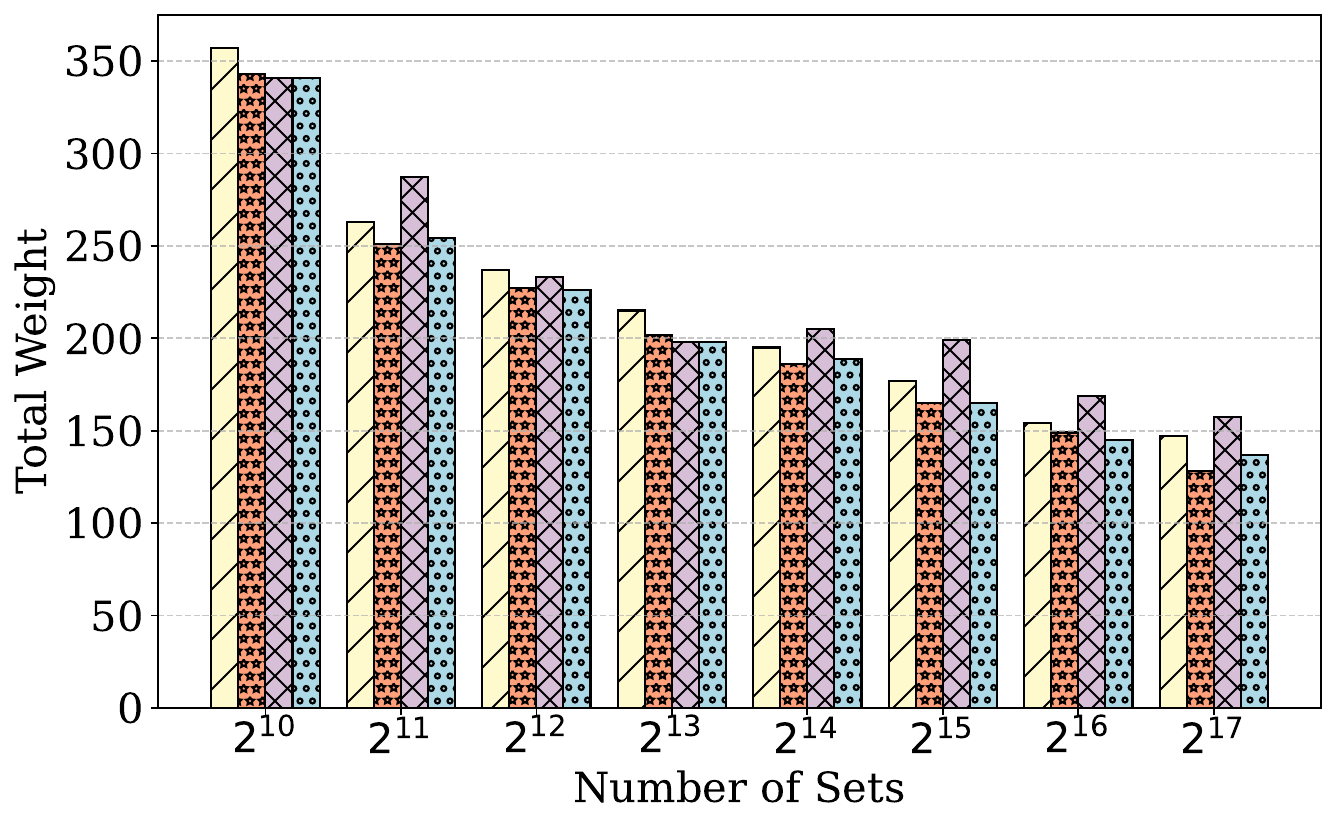}
        \vspace{-2.5em}
        \caption{\small Impact of varying number of sets $n$ on the solution’s total weight, {\sf Yelp}}
        \label{fig:weight_varying_n_yelp}
    \end{minipage}
\end{figure*}

\begin{figure*}
\centering
\includegraphics[width=0.65\textwidth]{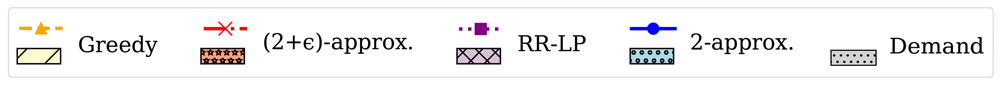}\\
    \begin{minipage}[t]{0.32\linewidth}
        \centering
        \includegraphics[width=\textwidth]{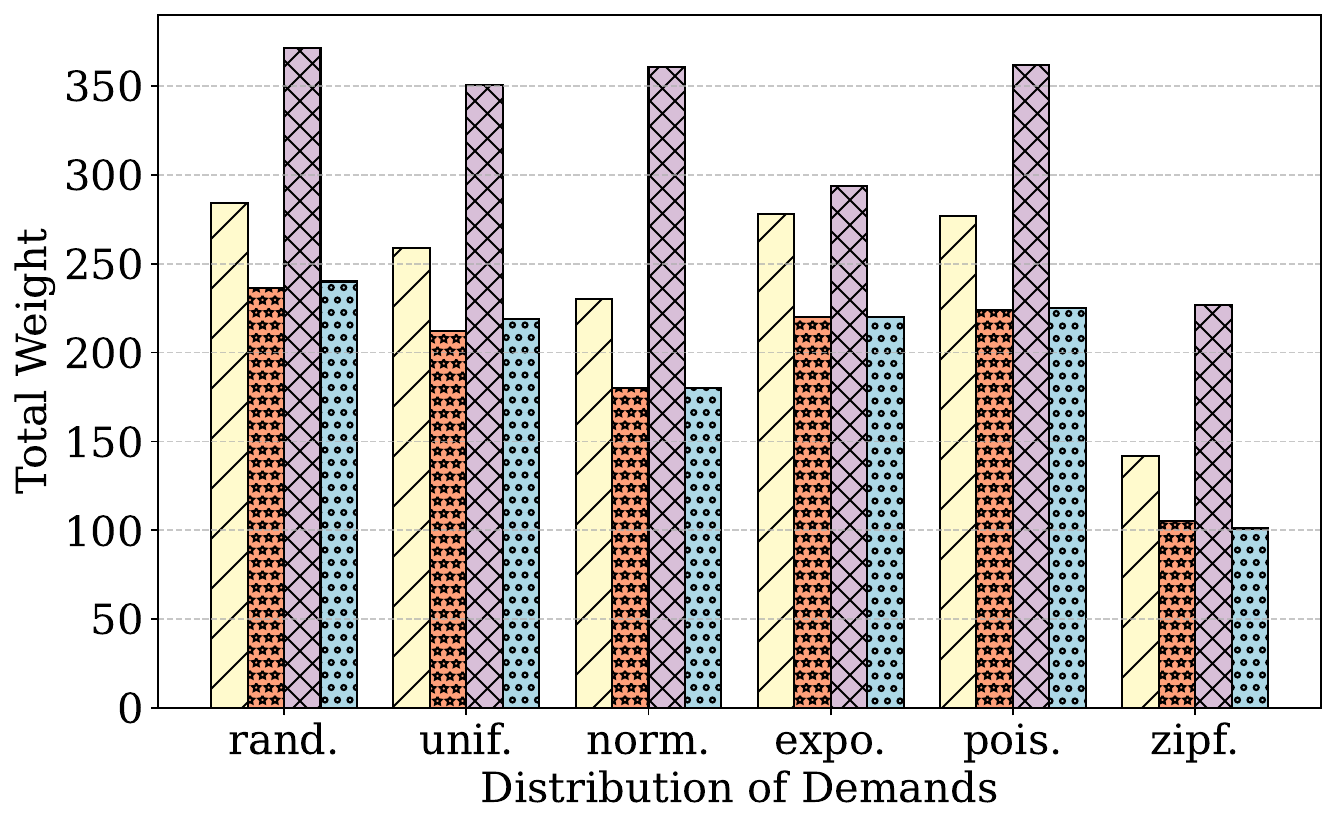}
        \vspace{-2.5em}
        \caption{\small Impact of demands distribution on the solution’s total weight, {\sf Stack Overflow}}
        \label{fig:weight_varying_dist_stack}
    \end{minipage}
    \hfill
    \begin{minipage}[t]{0.32\linewidth}
        \centering
        \includegraphics[width=\textwidth]{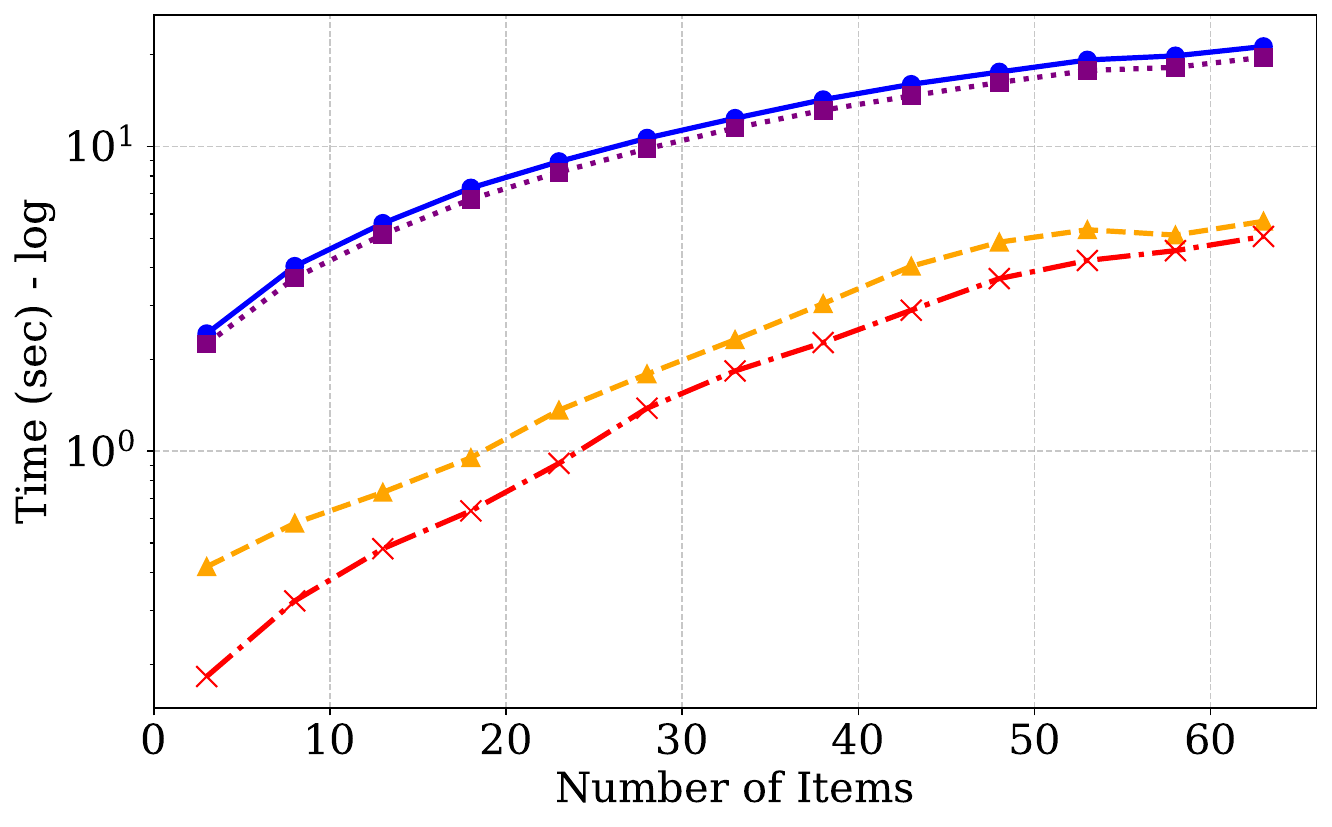}
        \vspace{-2.5em}
        \caption{\small Impact of varying number of items $\ell$ on the running time, {\sf Census}}
        \label{fig:time_varying_l_census}
    \end{minipage}
    \hfill
    \begin{minipage}[t]{0.32\linewidth}
        \centering
        \includegraphics[width=\textwidth]{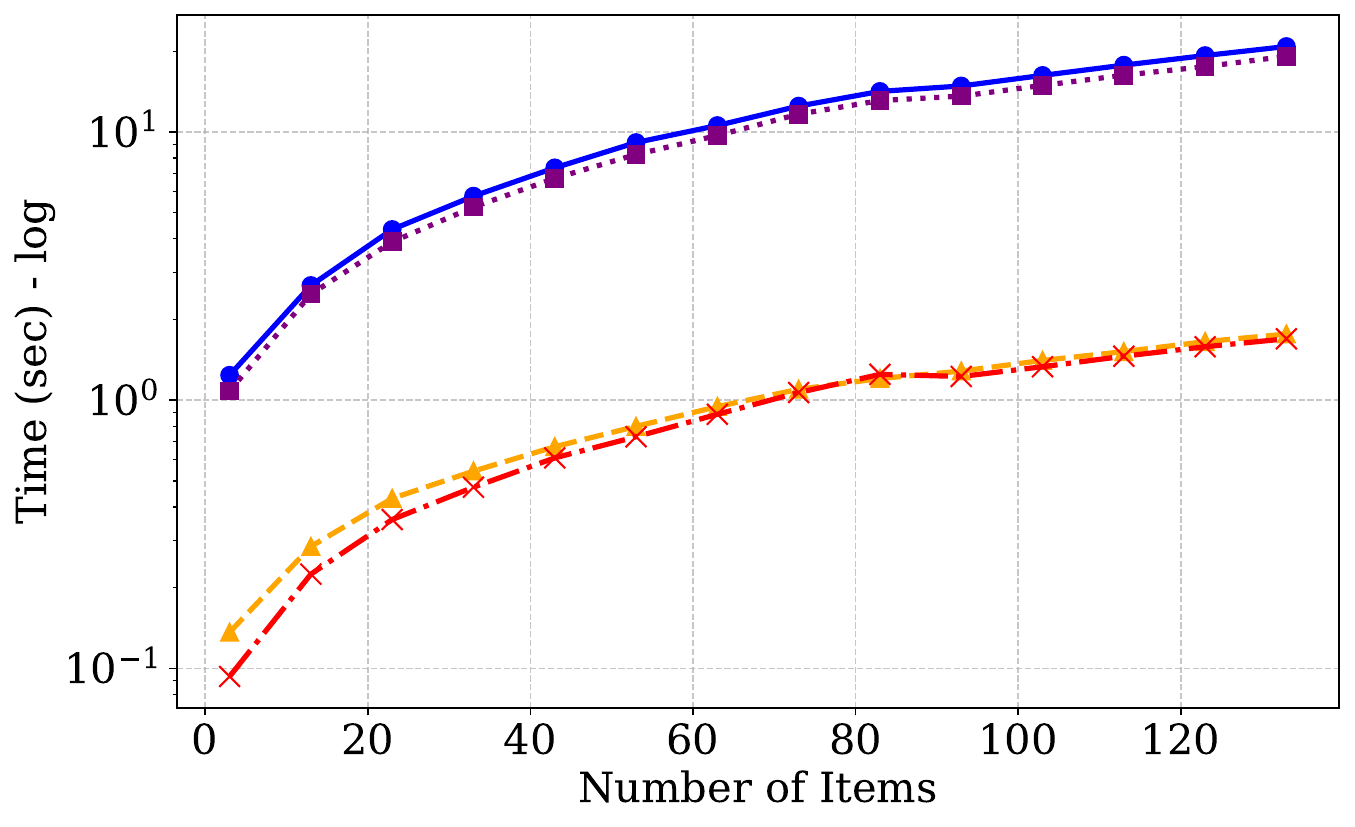}
        \vspace{-2.5em}
        \caption{\small Impact of varying number of items $\ell$ on the running time, {\sf Music}}
        \label{fig:time_varying_l_music}
    \end{minipage}
    \hfill
    \begin{minipage}[t]{0.32\linewidth}
        \centering
        \includegraphics[width=\textwidth]{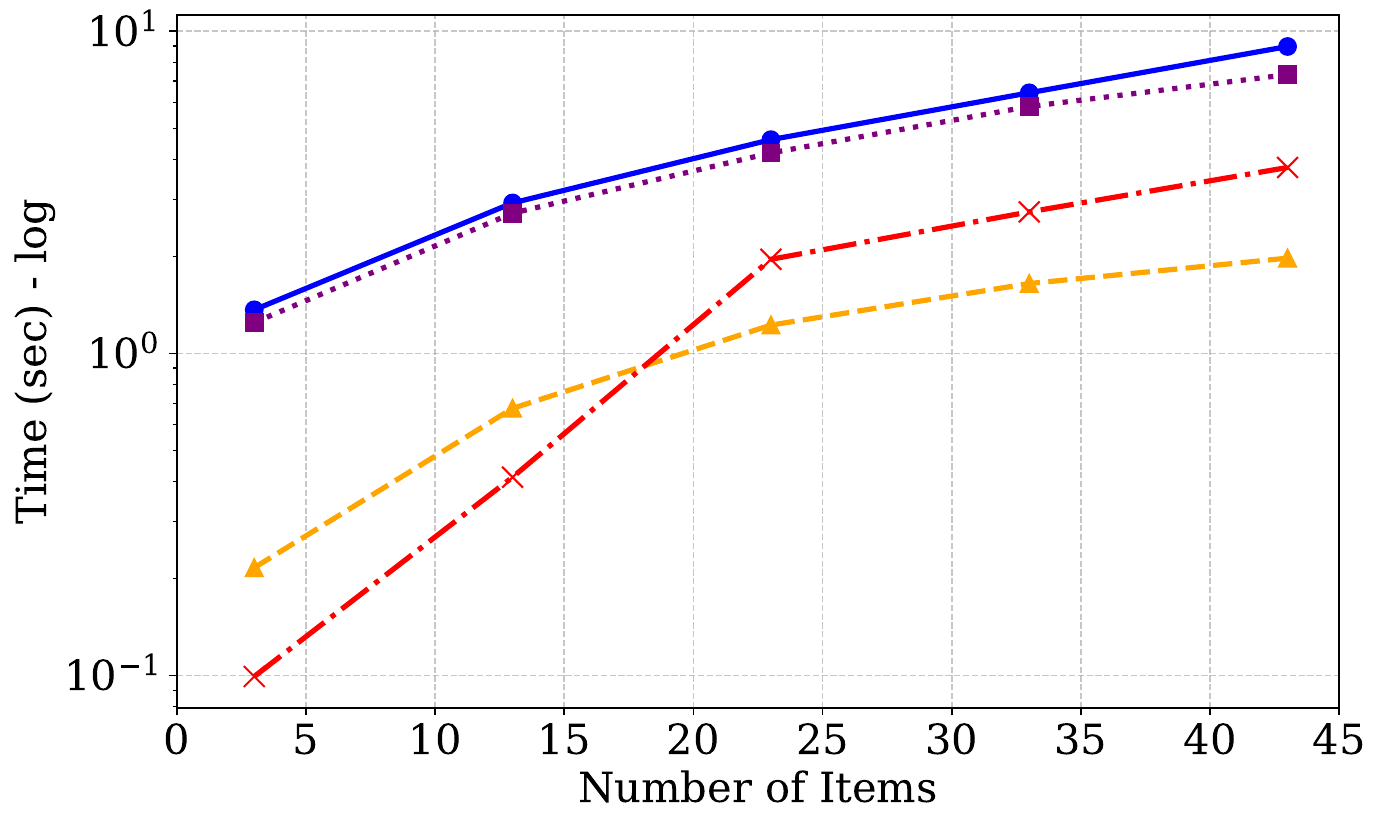}
        \vspace{-2.5em}
        \caption{\small Impact of varying number of items $\ell$ on the running time, {\sf Stack Overflow}}
        \label{fig:time_varying_l_stack}
    \end{minipage}
    \hfill
    \begin{minipage}[t]{0.32\linewidth}
        \centering
        \includegraphics[width=\textwidth]{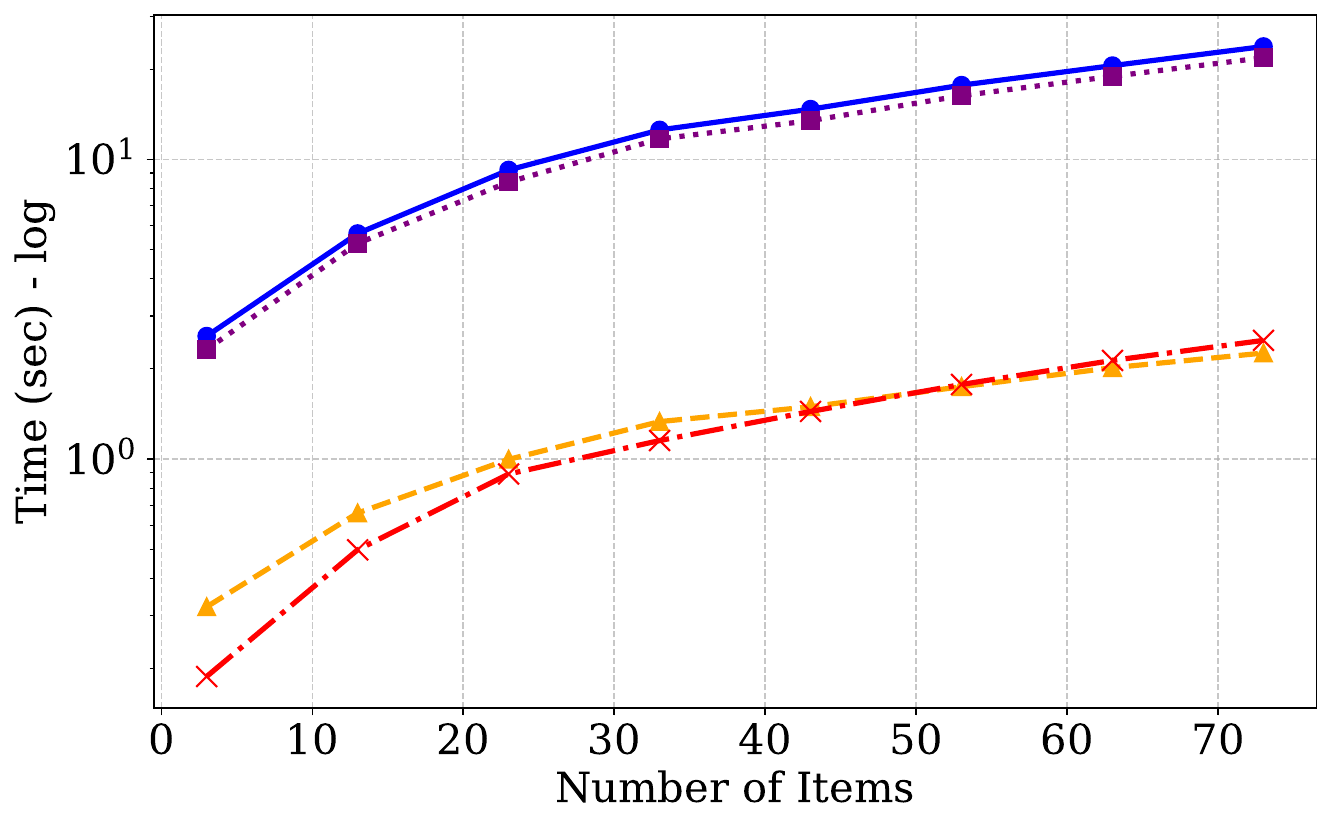}
        \vspace{-2.5em}
        \caption{\small Impact of varying number of items $\ell$ on the running time, {\sf Yelp}}
        \label{fig:time_varying_l_yelp}
    \end{minipage}
    \hfill
    \begin{minipage}[t]{0.32\linewidth}
        \centering
        \includegraphics[width=\textwidth]{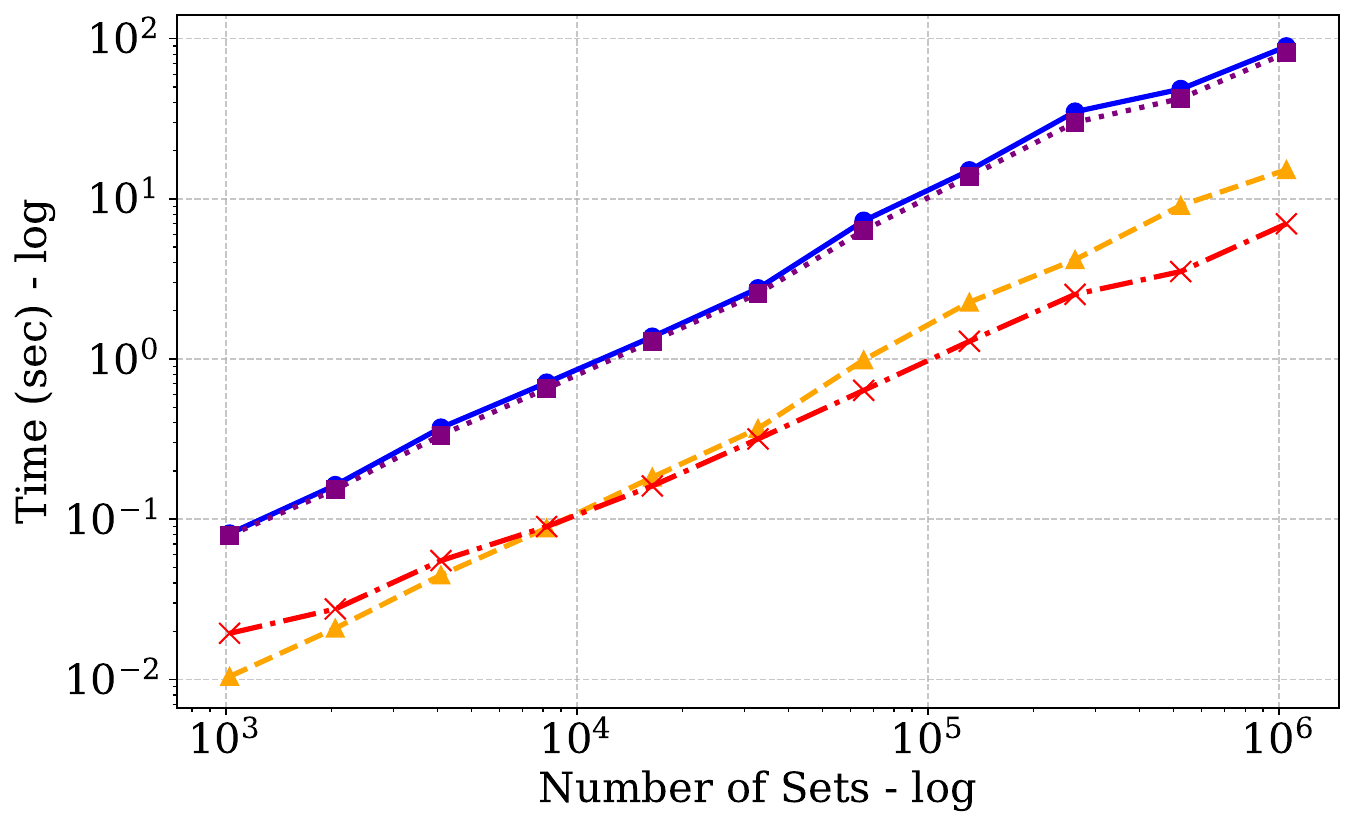}
        \vspace{-2.5em}
        \caption{\small Impact of varying number of sets $n$ on the running time, {\sf Census}}
        \label{fig:time_varying_n_census}
    \end{minipage}
    \hfill
    \begin{minipage}[t]{0.32\linewidth}
        \centering
        \includegraphics[width=\textwidth]{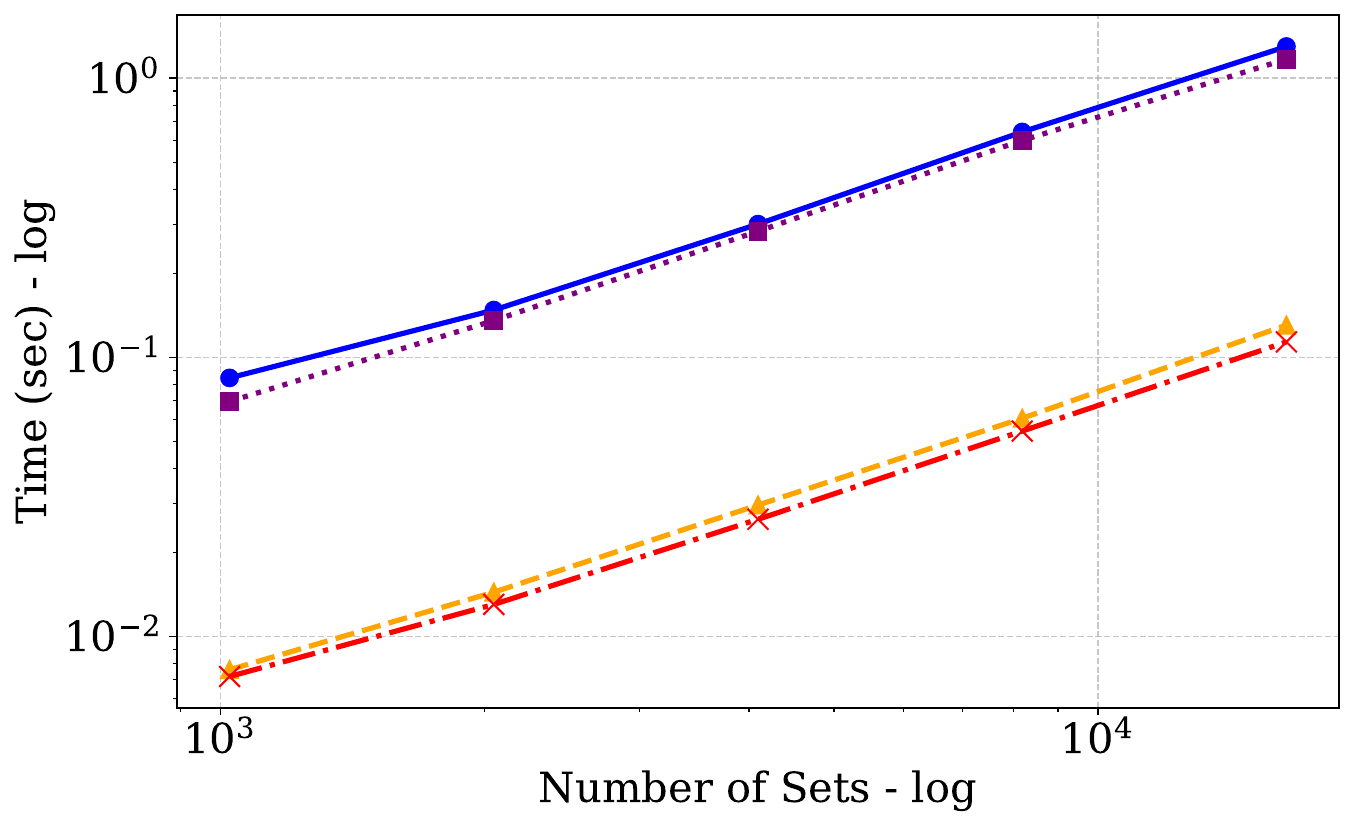}
        \vspace{-2.5em}
        \caption{\small Impact of varying number of sets $n$ on the running time, {\sf Music}}
        \label{fig:time_varying_n_music}
    \end{minipage}
    \hfill
    \begin{minipage}[t]{0.32\linewidth}
        \centering
        \includegraphics[width=\textwidth]{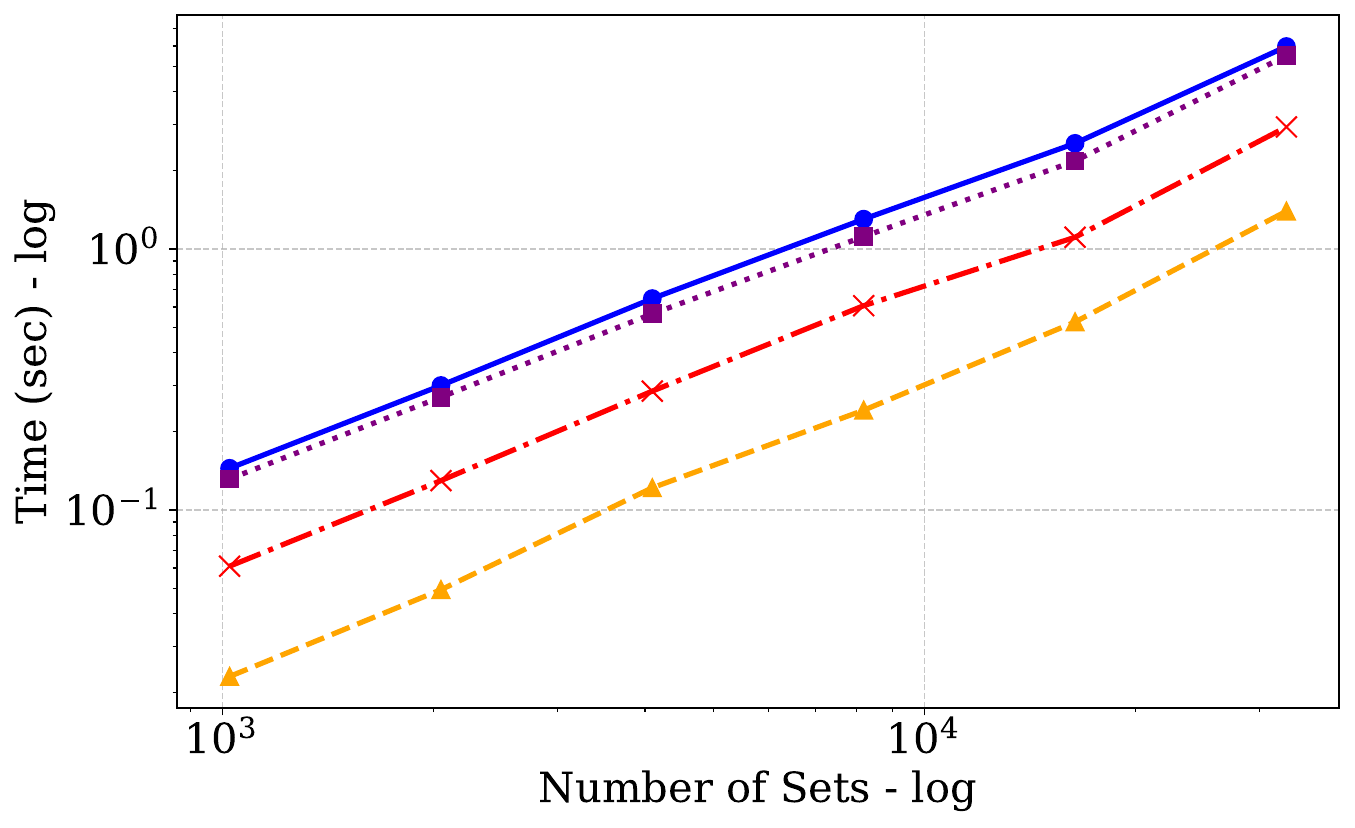}
        \vspace{-2.5em}
        \caption{\small Impact of varying number of sets $n$ on the running time, {\sf Stack Overflow}}
        \label{fig:time_varying_n_stack}
    \end{minipage}
    \hfill
    \begin{minipage}[t]{0.32\linewidth}
        \centering
        \includegraphics[width=\textwidth]{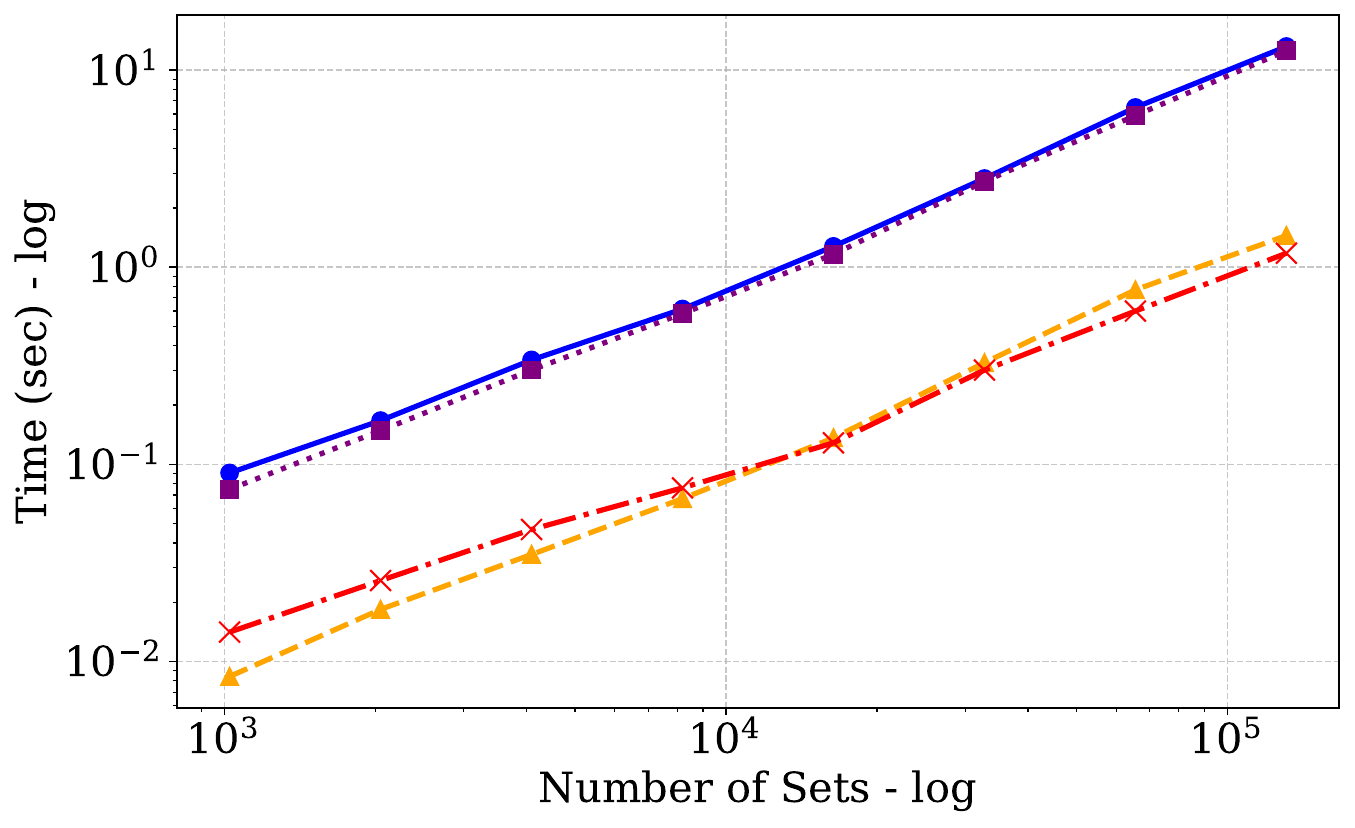}
        \vspace{-2.5em}
        \caption{\small Impact of varying number of sets $n$ on the running time, {\sf Yelp}}
        \label{fig:time_varying_n_yelp}
    \end{minipage}
    \hfill
    \begin{minipage}[t]{0.32\linewidth}
        \centering
        \includegraphics[width=\textwidth]{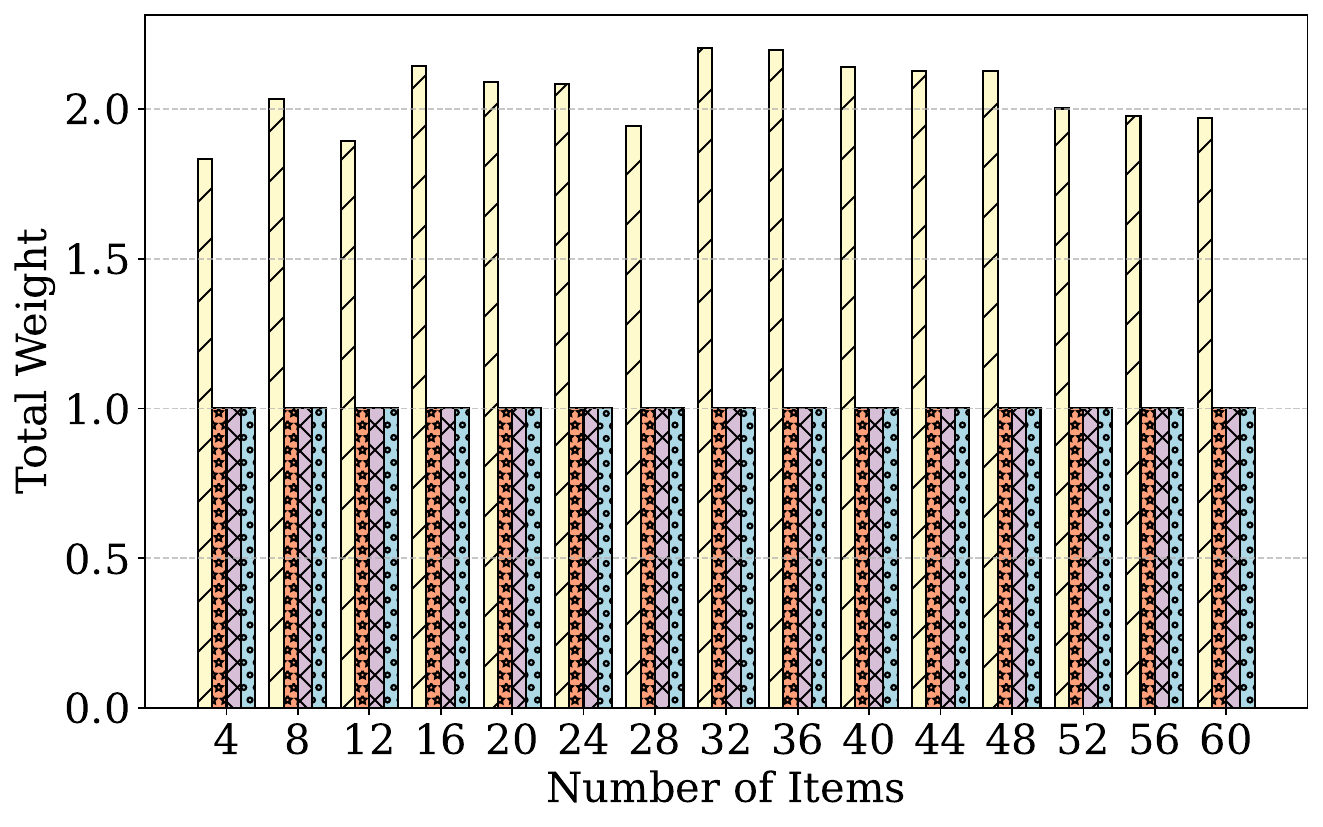}
        \vspace{-2.5em}
        \caption{\small Impact of varying number of items $\ell$ on the solution’s total weight, {\sf Synthetic}}
        \label{fig:case_study_1}
    \end{minipage}
    \hfill
    \begin{minipage}[t]{0.32\linewidth}
        \centering
        \includegraphics[width=\textwidth]{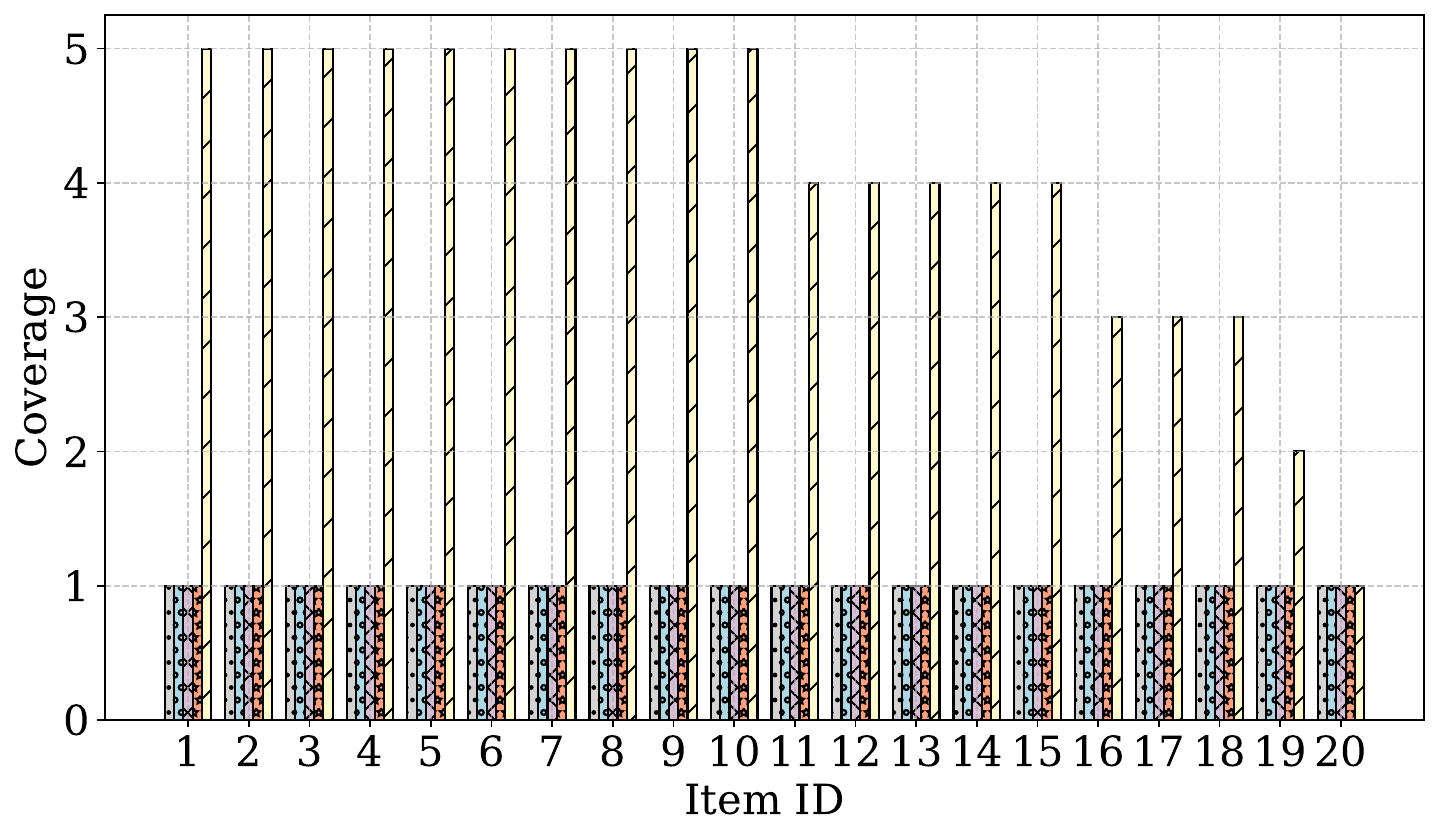}
        \vspace{-2.5em}
        \caption{\small Comparison of the actual demands vs. the item coverage, {\sf Synthetic}, $\ell=20$\protect\footnotemark.}
        \label{fig:case_study_2}
    \end{minipage}
    \hfill
    \begin{minipage}[t]{0.32\linewidth}
        \centering
        \includegraphics[width=\textwidth]{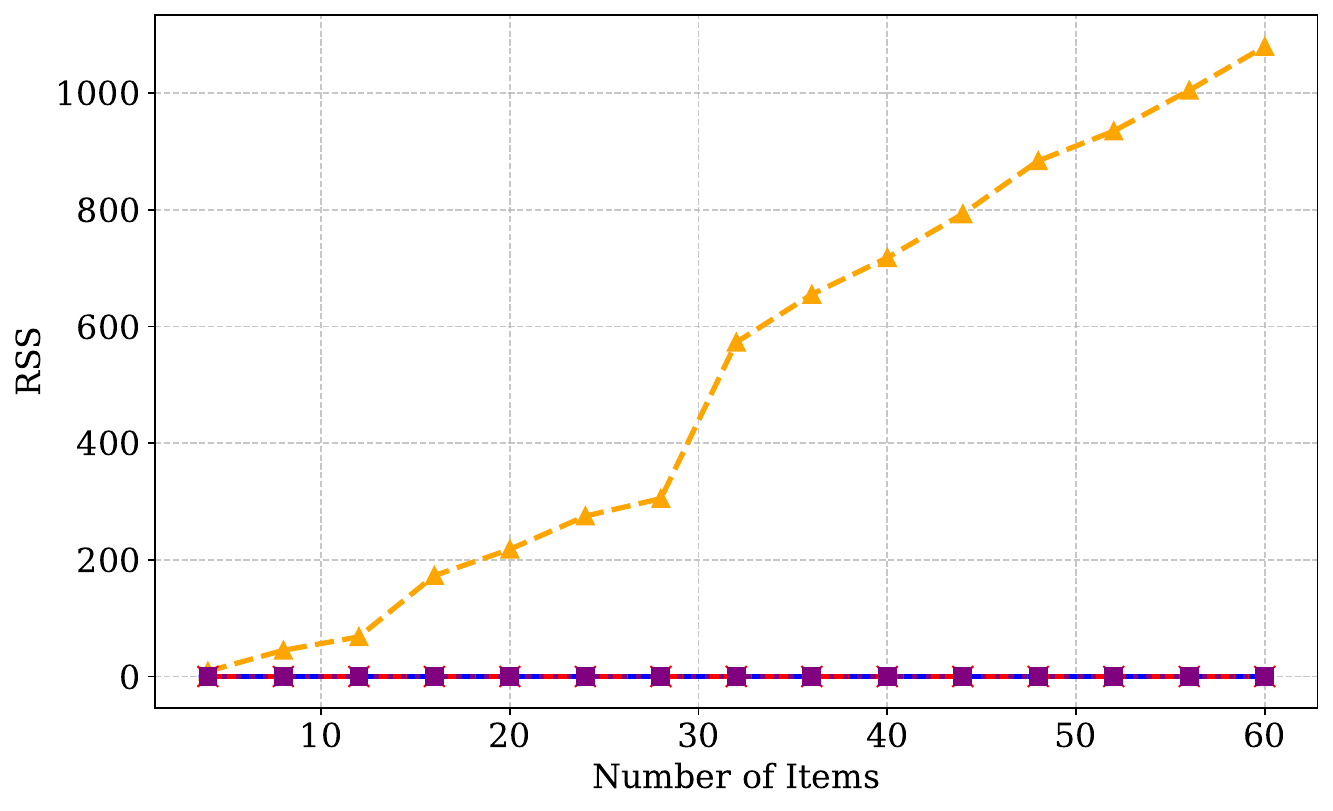}
        \vspace{-2.5em}
        \caption{\small Impact of varying number of items $\ell$ on deviation from demands, {\sf Synthetic}}
        \label{fig:case_study_3}
    \end{minipage}
\end{figure*}
\begin{figure*}[!tb] 
\centering
    \begin{minipage}[t]{0.48\linewidth}
        \centering
        \includegraphics[width=\textwidth]{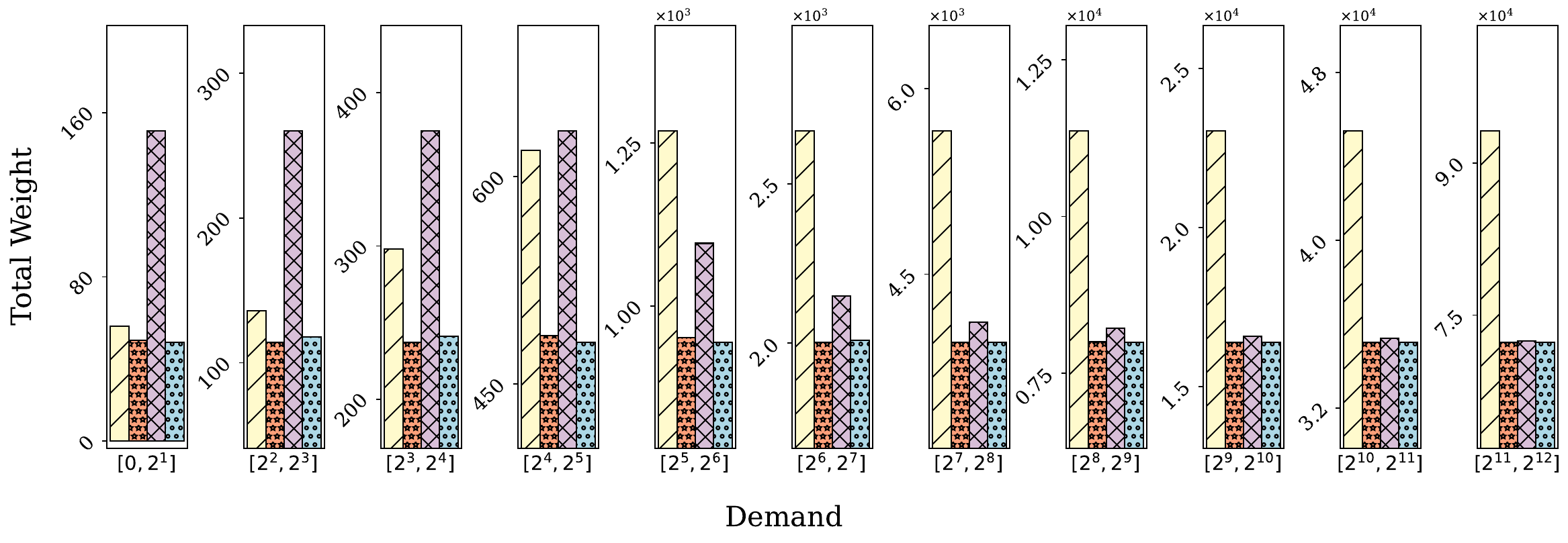}
        \vspace{-2.5em}
        \caption{\revtwo{\small Impact of varying demands on the solution’s total weight, {\sf Stack Overflow}}}
        \label{fig:weight_varying_demands_stack}
    \end{minipage}
    \hfill
    \begin{minipage}[t]{0.48\linewidth}
        \centering
        \includegraphics[width=\textwidth]{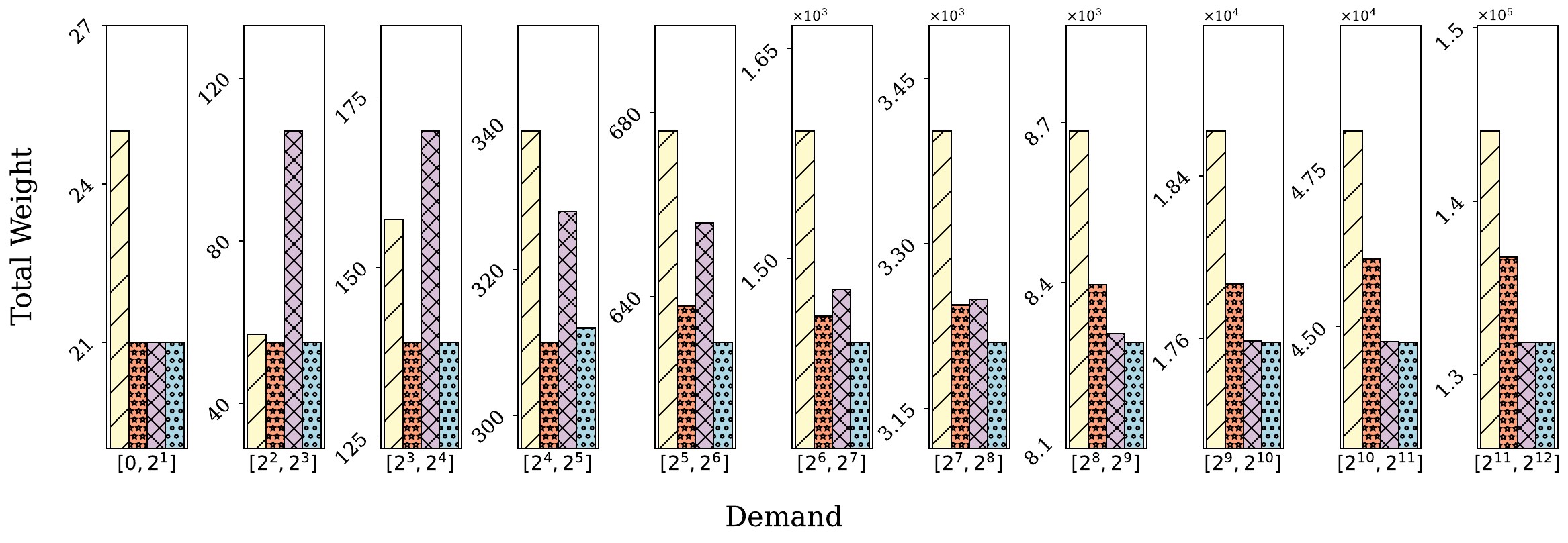}
        \vspace{-2.5em}
        \caption{\revtwo{\small Impact of varying demands on the solution’s total weight, {\sf Yelp}}}
        \label{fig:weight_varying_demands_yelp}
    \end{minipage}
\end{figure*}

\begin{figure*}
    \vspace{-1.5em}
    \begin{minipage}[t]{0.32\linewidth}
        \centering
        \includegraphics[width=\textwidth]{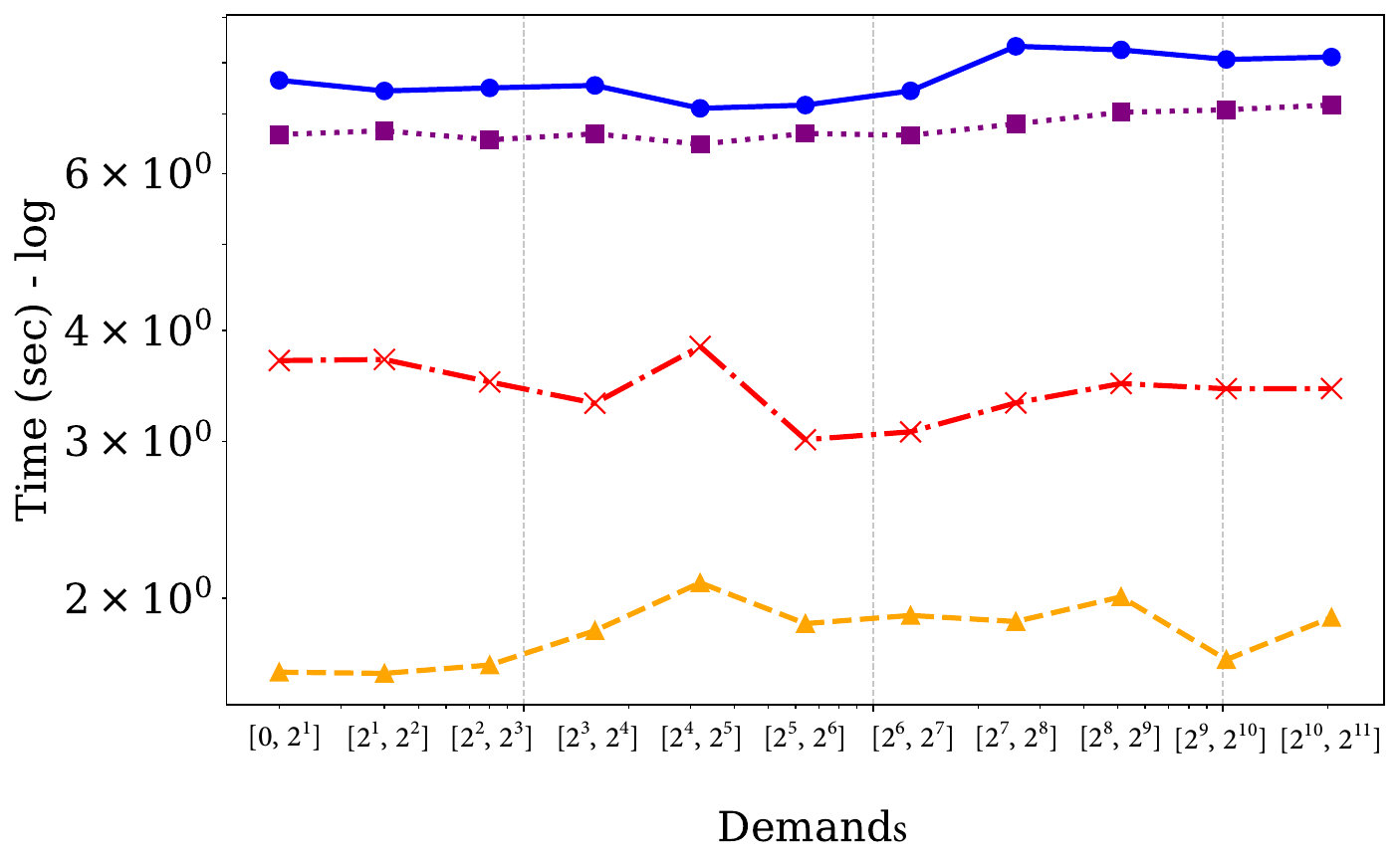}
        \vspace{-2.5em}
        \caption{\revtwo{\small Impact of varying demands on the running time, {\sf Stack Overflow}}}
        \label{fig:time_varying_demands_stack}
    \end{minipage}
    \hfill
    \begin{minipage}[t]{0.32\linewidth}
        \centering
        \includegraphics[width=\textwidth]{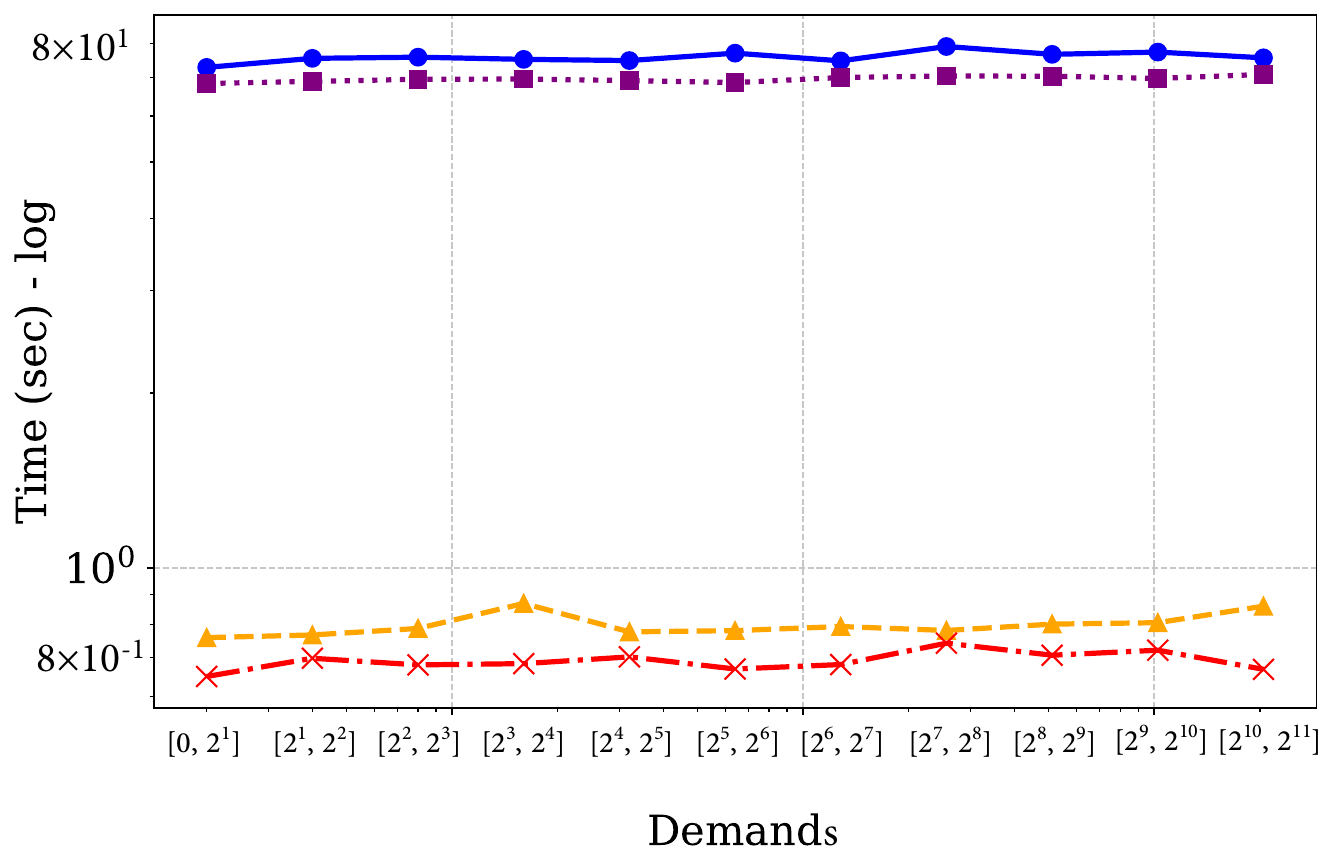}
        \vspace{-2.5em}
        \caption{\revtwo{\small Impact of varying demands on the running time, {\sf Yelp}}}
        \label{fig:time_varying_demands_yelp}
    \end{minipage}
    \hfill
    \begin{minipage}[t]{0.32\linewidth}
        \centering
        \includegraphics[width=\textwidth]{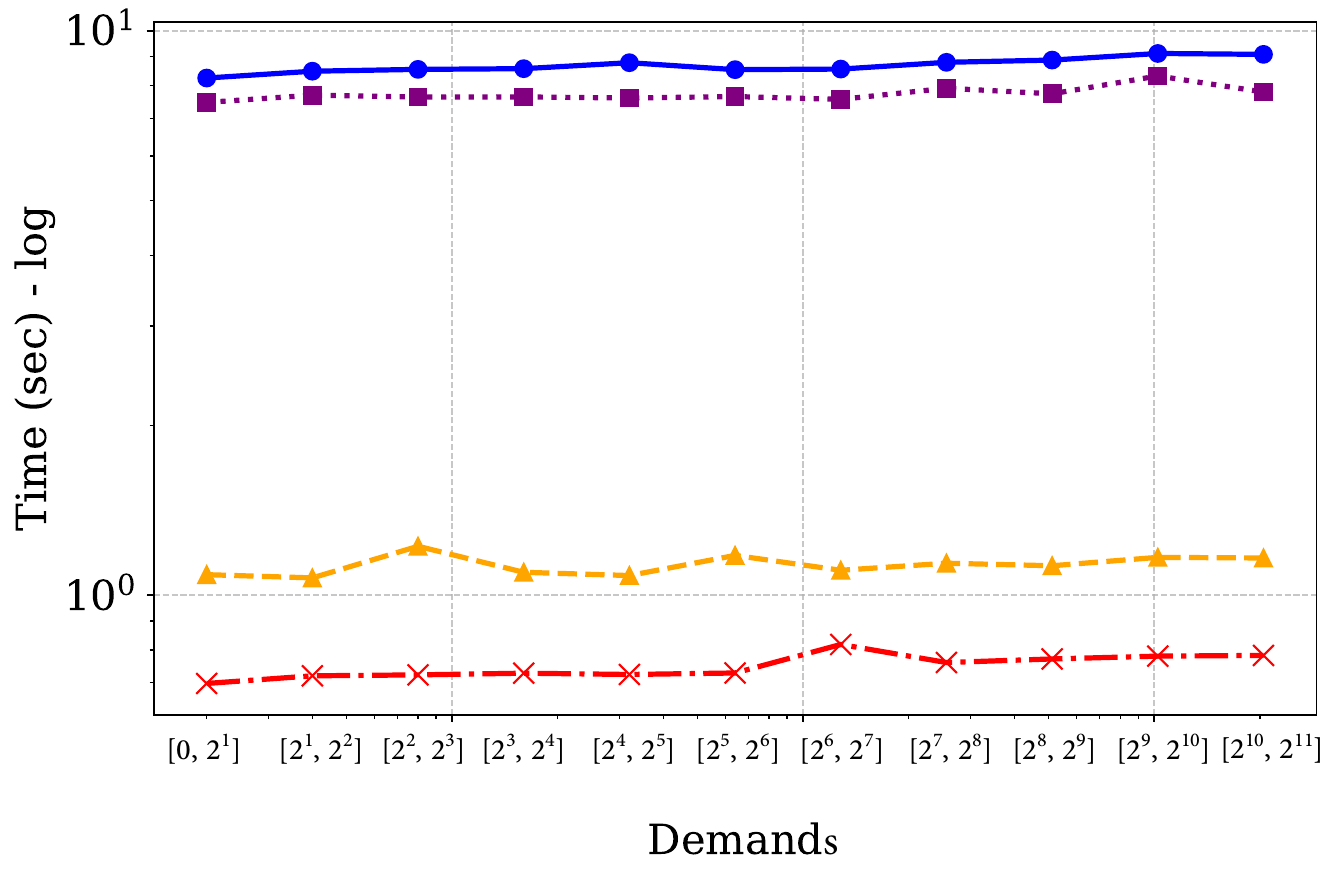}
        \vspace{-2.5em}
        \caption{\revtwo{\small Impact of varying demands on the running time, {\sf Census}}}
        \label{fig:time_varying_demands_census}
    \end{minipage}
    \hfill
    \begin{minipage}[t]{0.32\linewidth}
        \centering
        \includegraphics[width=\textwidth]{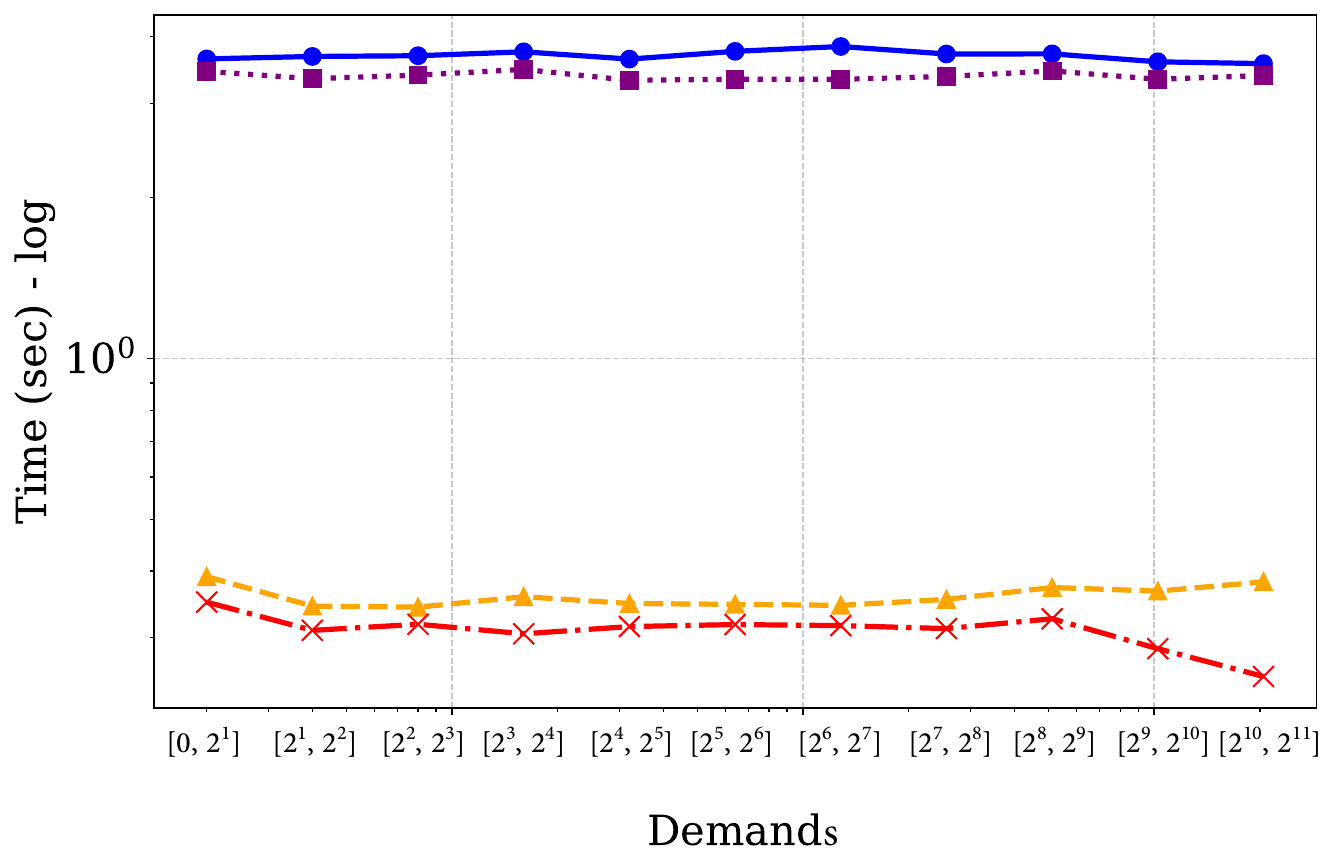}
        \vspace{-2.5em}
        \caption{\revtwo{\small Impact of varying demands on the running time, {\sf Music}}}
        \label{fig:time_varying_demands_music}
    \end{minipage}
\end{figure*}
\vspace{-2mm}
\subsection{Results}
\label{sec:ExpRes}
In this section, we present the results and key findings from our experiments on real datasets, compared to the two baselines. 
\vspace{-0.7em}
\subsubsection{Effectiveness}
We begin by evaluating the effectiveness of the algorithms as the number of items increases. The results for the different datasets are presented in Figures~\ref{fig:weight_varying_l_census}, \ref{fig:weight_varying_l_music}, \ref{fig:weight_varying_l_stack}, and~\ref{fig:weight_varying_l_yelp}. The key observation is that the solution weight increases as the number of items grows. This is intuitive, since a larger number of items requires selecting more sets to satisfy all demands, thereby increasing the total weight. Our algorithms consistently outperform or match the two baselines across all datasets while largely coinciding with each other. \revone{Notably, as shown in Appendix~\ref{appndx:experiments}, our results closely align with those obtained by the exact DP algorithm.}

Then, we evaluate how much each approach deviates from the requested demands as the number of items increases. This deviation is measured using the residual sum of squares (RSS), defined as the sum of squared differences between the original demands and the covered demands across all items. \revtwo{An example demonstrating why this is an important measure for fairness applications of our work is given in Appendix~\ref{appndx:experiments}}.
The results are presented in Figures~\ref{fig:rss_varying_l_census}, ~\ref{fig:rss_varying_l_music}, ~\ref{fig:rss_varying_l_stack}, and ~\ref{fig:rss_varying_l_yelp}. The results indicate that the greedy algorithm consistently exhibits higher RSS, reflecting its tendency to select more sets than needed from each item compared to our algorithms. We also observe that RR-LP sometimes matches our algorithms but significantly underperforms in others. 


In the next experiment, we examine the impact of varying the number of sets $n$ on the total weight of the solution for each algorithm, noting that the objective is to select the solution with the lowest weight while satisfying the demands. The results for this experiment are illustrated in Figures~\ref{fig:weight_varying_n_census},~\ref{fig:weight_varying_n_music}, ~\ref{fig:weight_varying_n_stack} and ~\ref{fig:weight_varying_n_yelp}. The first key observation is that the greedy baseline consistently underperforms compared to our two algorithms.
In some cases, the Greedy baseline returns a solution with total weight $1.3\times$ larger than the total weight returned by our algorithms. RR-LP, as in the previous experiment, can either match or fall short.
Since our two algorithms are essentially the same—one being a faster variant of the other—they produce nearly identical results across most settings. In a few rare instances, the $(2+\epsilon)$ algorithm achieves slightly higher quality solutions, though the improvement is marginal and practically negligible; in other cases, it performs slightly worse, reflecting the influence of $\epsilon$.
The next observation is a consistent decreasing trend in the total weight as $n$ increases across different datasets. This is primarily because additional sets provide more opportunities to satisfy the demands at lower cost, leading to lower-weight solutions. In the case of {\sf Census} shown in Figure~\ref{fig:weight_varying_n_census}, the initial increasing trend arises because the demands cannot be satisfied if all the demands are generated at random, and hence some of them are manipulated as described in Section~\ref{sec:params}, to keep the problem feasible. 

Next, we examine the impact of demand distributions on the effectiveness of our algorithms. Specifically, we generate demands according to six different distributions: Random, Uniform, Normal, Exponential, Poisson, and Zipfian. Owing to space constraints and the similarity of the results across datasets, we only report the results for the {\sf Stack Overflow} dataset. The distribution parameters are defined as follows:
Random: $\Qu \sim \text{Uniform}(1,10)$, Uniform: $\Qu = 5$, Normal: $\Qu \sim \mathcal{N}(\mu=5, \sigma^2=4)$, Exponential: $\Qu \sim \text{Exp}(\lambda=\tfrac{1}{5})$, Poisson: $\Qu \sim \text{Poisson}(\lambda=5)$, Zipfian: $\Qu \sim \text{Zipf}(s=2)$. 
Our findings indicate that the solution weight is largely oblivious to the choice of demand distribution, as shown in Figure~\ref{fig:weight_varying_dist_stack}. \revtwo{Finally, we study the impact of demand magnitude on the effectiveness of our algorithms. To this end, we generate demand values uniformly at random within a specified range and vary this range from small to large values as previously explained. The results for two of the datasets are shown in Figures~\ref{fig:weight_varying_demands_stack} and~\ref{fig:weight_varying_demands_yelp}. Due to space constraints, the results for the other two datasets are shown in Appendix~\ref{appndx:experiments}, and they follow a similar pattern. It can be observed that as demand values increase, the total weight of the solution increases, since the solution needs to choose more sets to satisfy the greater demands. Moreover, the results show that regardless of demand magnitude, our approaches consistently outperform the greedy baseline in terms of solution quality. The comparison between the greedy algorithm and our approaches seems to remain consistent regardless of demand size. In contrast, while the RR-LP baseline generally produces weaker solutions than our algorithms, its solution's quality improves as demands increase. For very large demand values, RR-LP approaches the quality of our methods, but it remains consistently worse than our 2-approximation algorithm. In some instances with large demands, RR-LP outperforms our $(2+\epsilon)$-approximation algorithm; however, RR-LP is always slower by multiple orders of magnitude, as shown in Figures~\ref{fig:time_varying_demands_stack}, \ref{fig:time_varying_demands_yelp}, \ref{fig:time_varying_demands_census}, and \ref{fig:time_varying_demands_music}.
}

Overall, in every experiment, both of our algorithms outperform the baselines in solution quality, with particularly consistent superiority over the widely used greedy approach. Although the differences may not always be large, it is important to note that the weights in these experiments are mainly random, which works against our approach. In some real-world scenarios, where weights are more likely to correlate meaningfully with the number of items a set contains, this performance gap could be significantly larger. An example of such cases is discussed in Section~\ref{sec:case_study}, showing the large gap between the quality of results returned by the greedy algorithm and our algorithms. As a rule of thumb, when solution quality is crucial, our approaches represent the only viable option.

\vspace{-1em}
\subsubsection{Efficiency}
We next turn our attention to the efficiency of our algorithms and compare them with the two baselines. We begin by examining the impact of increasing the number of items on running time. The results are shown in Figures~\ref{fig:time_varying_l_census}, \ref{fig:time_varying_l_music}, \ref{fig:time_varying_l_stack}, and~\ref{fig:time_varying_l_yelp}. The first observation is that the 2-approximation algorithm and the RR-LP baseline are the slowest, with running times roughly an order of magnitude higher than the other two. The $(2+\epsilon)$-approximation algorithm again closely matches the efficiency of the greedy algorithm. 

Next, we examine how increasing the number of sets affects the running time. The results are presented in Figures~\ref{fig:time_varying_n_census}, \ref{fig:time_varying_n_music}, \ref{fig:time_varying_n_stack}, and~\ref{fig:time_varying_n_yelp}. As the number of sets increases, the running time of all algorithms grows. Similar to the previous experiment, our 2-approximation algorithm, followed by RR-LP, are the slowest among the four methods. The $(2+\epsilon)$-approximation algorithm closely matches the efficiency of the greedy; however, the greedy algorithm typically exhibits a steeper growth rate, leading to slightly lower running times for smaller values of $n$ but higher running times as $n$ increases. 

This demonstrates the scalability of our algorithm. In particular, our optimization for the $(2+\epsilon)$-approximation proves highly effective, achieving a substantial reduction in running time—comparable to the greedy approach—while delivering superior solution quality.

\revtwo{Finally, we evaluate the impact of varying the magnitude of demands on the efficiency of the algorithms. The results, shown in Figures~\ref{fig:time_varying_demands_stack}-\ref{fig:time_varying_demands_music}, confirm that the performance of all approaches is almost independent of the demand values over all the datasets.}

\stitle{Summary of results} Our main result—the $(2+\epsilon)$-approximation algorithm—consistently outperforms the baselines across all settings, in terms of effectiveness, efficiency, or both. When effectiveness is comparable to that of the baselines, our approach holds a clear advantage in efficiency. Conversely, when efficiency is similar, our method surpasses the baselines in effectiveness. While the running time of our algorithm is comparable to that of the greedy baseline, it consistently delivers higher solution quality. Conversely, although its solution quality is similar to the RR-LP baseline and the 2-approximation algorithm, our method achieves substantially better running time. In summary, our algorithm offers the best of both worlds, combining high solution quality with efficiency.

\vspace{-1.0em}
\subsection{Case Study: An Adversarial Instance}\label{sec:case_study}
Finally, we highlight a scenario where the weight of the sets is proportional to the number of items they include and demonstrate that our algorithms substantially outperform the greedy baseline in terms of the solution quality. For example, the expected salary of software engineers can be correlated with the number of programming languages they know. \revtwo{The primary goal of these experiments is to highlight the fact that in some scenarios, the quality of the solution produced by the greedy baseline can be significantly worse than what our algorithms achieve.} To this end, we construct a synthetic dataset with $\ell$ items and $n=\ell$ sets. The sets are constructed such that the first set contains only the first item, the second set contains the first and second items, and so forth, until the $\ell$-th set, which includes all items. Additionally, we assign weights to the sets as follows: the first set is assigned a weight of $\tfrac{1}{\ell}$, the second set a weight of $\tfrac{1}{\ell-1}$, the third set a weight of $\tfrac{1}{\ell-2}$, continuing in this manner up to the $\ell - 1$-th set. We set the weight of the $\ell$-th set to $1.01$. We further set the demand of each item to 1. We begin by examining the effect of increasing the number of items (both $\ell$ and $n$ in this case) on the total solution weight. As illustrated in Figure~\ref{fig:case_study_1}, the total weight of the sets selected by the greedy baseline is up to twice that of the solution produced by our algorithms. This is because the greedy algorithm selects sets based on the best cover-to-weight ratio, which implies that the $\ell$-th set is not chosen in the early stages. This not only leads to a suboptimal solution in terms of total weight but also causes the greedy solution to deviate unnecessarily from the required demands by up to a factor of five, as illustrated in Figure~\ref{fig:case_study_2}. In contrast, both of our algorithms and the RR-LP baseline select only the $\ell$-th set as the solution, which in this case is optimal and satisfies all demands without deviation. This further becomes clear by the measured RSS between the required and covered demands, as illustrated in Figure~\ref{fig:case_study_3}.

\footnotetext{Smaller $\ell$ selected for clarity of visualization.}

\section{Related Work}
\label{sec:relwork}
The \emph{set cover} problem is one of the most extensively studied problems in computer science. The classical greedy algorithm achieves an $O(\log \ell)$-approximation~\cite{chvatal1979greedy}, which is asymptotically optimal~\cite{dinur2014analytical}. Numerous variations of set cover have also been explored, including partial set cover~\cite{ran2022approximation}, set multi-cover problem~\cite{dobson1982worst}, and multi-set multi-cover problem~\cite{hua2009exact}. Generally, $O(\log \ell)$-approximation algorithms are known for broad classes of covering and packing problems via integer programming methods~\cite{kolliopoulos2005approximation,chekuri2019approximating,srinivasan1999improved}.

For the \emph{set multi-cover problem} (unbounded universe), which generalizes the \prob\ problem, the greedy algorithm of Dobson~\cite{dobson1982worst} achieves an $O(\log \ell)$-approximation, with the same guarantee holding in the weighted case. In geometric settings, where the universe $\Gee$ consists of points in $\Re^d$ (for constant $d$) and sets correspond to geometric objects such as balls or halfspaces, $O(\log \mathsf{OPT})$- or even constant-factor approximations are known for both the unweighted~\cite{chekuri2012set} and weighted cases~\cite{chekuri2020fast,raman2020improved}.

\newcommand{\poly}{\mathsf{poly}}
Specifically for the \prob\ problem, Bredereck et al.~\cite{bredereck2020mixed} provide an exact algorithm running in $O(h(\ell)\cdot\poly(n))=O(\poly(n))$ time, where $h(\ell)$ is a function exponential on $\ell$, and $\poly(n)$ is a polynomial function with respect to $n$. However, this approach is of limited practical use, as the polynomial factors are not explicitly analyzed and the constants involved appear prohibitively large. On the other hand, the greedy algorithm for weighted set multi-cover (unbounded universe)~\cite{dobson1982worst} directly yields an $O(\log \ell)=O(1)$-approximation for the \prob\ problem in $O(n\log n)$ time.

A closely related problem is \emph{maximum coverage}, where the goal is to select $k$ sets from $\Dee$ that maximize the number of covered elements. The greedy algorithm guarantees a $(1-\tfrac{1}{e})$-approximation~\cite{nemhauser1978analysis}. Improved algorithms exist for set systems of bounded VC-dimension~\cite{badanidiyuru2012approximating}. Several extensions have been investigated, including maximum coverage with constraints~\cite{chekuri2004maximum}, the maximum multi-coverage problem~\cite{barman2022tight}, and the minimum $p$-union problem~\cite{ran2022approximationmin}.

\begin{envrevfour}
Several problems studied in the database community can be naturally modeled as instances of \prob\ and can therefore benefit from our algorithms. Below, we briefly discuss two representative application domains and focus on how prior work differs from our setting.

{\bf Crowd-sourcing and task assignment.}
A large body of work in databases formulates crowd-sourcing and task-assignment problems as selecting a set of workers, each associated with a cost and a set of skills, to satisfy coverage requirements over tasks or skill types~\cite{crowd6, Crowd2, Crowd3, crowd4, crowd5, crowd7}. These formulations correspond to weighted set multi-cover instances, where workers define sets, skills, or task types define universe elements, and demands capture the required number of assignments. In many practical settings, however, the number of skills or task types is small (e.g., domains in Example~\ref{ex:CS} or programming languages in Example~\ref{ex3}), making these problems well-suited for our \prob\ framework.

{\bf Fairness-aware data selection.}
Fairness-aware data management and ML pipelines often require selecting a subset of data that satisfies representation constraints over protected attributes~\cite{kurkure2024faster, asudeh2023maximizing, Nargesian2021fair, stoyanovich2018fair, shahbazi2024fairness, chameleon, fair1, fair2, fair3, fair4, fair5}. 
Such problems can be modeled as weighted set multi-cover instances and variations of it.
The number of protected attributes is usually small, which naturally aligns these problems with the bounded-universe setting of \prob. For example, there are 5 standard categories for gender and 7 standard categories for ethnicity in Example~\ref{ex1}.


We emphasize that these are representative examples. Due to the generality of the set cover framework, \prob\ also applies more broadly to other data-driven decision-making tasks such as constrained recommendation problems~\cite{serbos2017fairness, qi2016recommending}.

Most existing approaches in crowd-sourcing, fairness-aware data selection, and package recommendation employ greedy heuristics or LP-based techniques for covering or packing problems \cite{serbos2017fairness, chameleon, shahbazi2023representation}. Some of these approaches \cite{shahbazi2023representation, chameleon} can be directly formulated as instances of \prob, while others consider closely related variants with different objectives or constraints \cite{crowd5, serbos2017fairness}.
Independent of these modeling choices, none of the existing methods exploits a bounded universe size.
To the best of our knowledge, we are the first to study the \emph{weighted set multi-cover} problem on a bounded universe within the database community. While this problem has been studied in the theory community~\cite{bredereck2020mixed}, no efficient $2$-approximation (or better) algorithm was previously known.
Our approximation algorithms crucially leverage the bounded universe size to achieve strictly better approximation guarantees for \prob while remaining efficient. As a result, our approach is fundamentally different from both the classical greedy algorithm and standard LP-based solutions. For instance, although the LP formulation in LP~\ref{FinalLPobfunc} resembles standard formulations such as~\cite{dobson1982worst}, we introduce a novel rounding technique that is efficient only under the bounded-universe assumption. Furthermore, we establish a non-trivial connection to Problem~\ref{LPobfunc}, which enables us to accelerate the algorithm in Section~\ref{subsec:smalllp} by approximating its objective function with a sum of piecewise-linear functions.
\end{envrevfour}


\vspace{-1em}
\section{Conclusion}
In this paper, we study the \prob\ problem and propose three algorithms: an exact dynamic programming algorithm, a $2$-approximation algorithm, and a more efficient $(2+\eps)$-approximation algorithm. Our methods outperform the Greedy and the standard LP-rounding baselines both theoretically (achieving better approximation guarantees with faster running times) and empirically, as demonstrated through experiments on real and synthetic datasets.
Several interesting directions remain open. A key question is whether a polynomial-time $(1+\eps)$-approximation algorithm exists for the \prob\ problem, for any $\eps>0$. Another interesting problem is to study the \prob\ problem in streaming or dynamic settings. We believe our results provide a solid foundation for exploring these extensions in the future.


\bibliographystyle{ACM-Reference-Format}
\bibliography{ref}
\newpage
\clearpage
\appendix


\begin{envrevone}
\section{Missing details from Section~\ref{sec:opt}}
\label{appndx:DPpseudocode}
\revone{The pseudo-code of the exact DP algorithm is shown in Algorithm~\ref{alg:dp}}.

\begin{algorithm}
    \caption{DP Algorithm for {\sc \prob}}
    \label{alg:dp}
    \begin{algorithmic}[1]
        \Require $\Dee, \Gee, \Qu$
        \Ensure result set $\eh$
        \State $\textsf{DP}[0\ldots n][0\ldots n]\ldots[0\ldots n]\gets \infty$ 
        \State $\textsf{DP}[0\ldots n][0]\ldots[0]\gets 0$
        \For{$i=1$ \textbf{to} $n$}
            \For{$q_1=0$ \textbf{to} $n$}
            \State \vdots
                \For{$q_{\ell}=0$ \textbf{to} $n$}
                    \State $\textsf{DP}[i][q_1]\ldots[q_{\ell}] \gets \min\{\mathsf{DP}[i-1][q_1]\ldots[q_\ell],$
             \Statex \hspace{8em}$w_i\!+\!\mathsf{DP}[i\!-\!1][\max\{0,q_1\!-\!\mathbbm{1}\![g_1\!\in\! t_i]\}]\ldots$
             \Statex \hspace{13.3em}$[\max\{0,q_\ell\!-\!\mathbbm{1}\![g_\ell\!\in\!t_i]\}]\}$       
                \EndFor
            \EndFor
        \EndFor
        \State $(j_1^*,\ldots, j_\ell^*)=\argmin_{j_1\geq \Qu_1, \ldots, j_\ell\geq \Qu_\ell}\mathsf{DP}[n][j_1]\ldots[j_\ell]$
        \If{$\textsf{DP}[n][j_1]\ldots[j_\ell]=\infty$}
            \State \Return $-1$ {\tt \quad //no solution exists}
        \Else
            \State $\eh\gets\textsc{selectOptSets($\min(\textsf{DP}[n][j_1]\ldots[j_\ell])$)}$ {\tt \quad //traces opt path and returns the selected sets.}
            \State \Return $\eh$
        \EndIf
         
    \end{algorithmic}
\end{algorithm}
\end{envrevone}

\section{Missing proofs from Section~\ref{subsec:biglp}}
\label{appndx:sec4a}

\subsection{Proof of Lemma~\ref{lem:1}}
\label{appndx:lem1}
    For (i), notice that $\hat{f}_H(x)=f_H(x)$ when $x\in \mathbb{Z}$, while the function $f_H(x)$  is not defined for $x\notin\mathbb{Z}$. Furthermore, the constraints of IP~\eqref{obfunc} and its relaxed Problem~\eqref{LPobfunc} are the same. For every $H\subseteq \Gee$, let $x_H^*$ be the value of the variable $x_H$ in the optimum solution optimum of IP~\eqref{obfunc}. By definition, $\hat{\mathsf{opt}}\leq \sum_{H\subseteq \Gee}\hat{f}_H(x_H^*)=\sum_{H\subseteq \Gee}f_H(x_H^*)=\mathsf{opt}$.

    We show (ii).
    The function $\hat{f}_H$ is continuous since $\lim_{x\rightarrow z^-}\hat{f}_H(x)=\lim_{x\rightarrow z^+}\hat{f}_H(x)=\hat{f}_H(z)$, for every $z\in \mathbb{Z}$. 
    The function $\hat{f}_H$ is piecewise linear because for every interval $z\leq x<z+1$ for a $z\in \mathbb{Z}$, the function is defined as a linear function with respect to $x$.
    Finally, we show that $\hat{f}_H$ is convex showing that the slopes of the linear functions are non-decreasing. We fix $H\subseteq \Gee$. For every $i\in\{1,\ldots, |B(H)|\}$, let $\alpha_i$ be the slope of $\hat{f}_H$ in the $i$-th interval $z\leq x<z+1$, i.e., $i-1\leq x<i$. By definition $\alpha_i=f_H(i)-f_H(i-1)$. Recall the definition of $f_H(i)$: It is the sum of weights of the $i$ sets in $B(H)$ with the smallest weights. Hence $\alpha_i$ is defined as the $i$-th smallest weight among the weights of the sets in $B(H)$, i.e., $a_i=w_{H,i}$ for every $i\in[|B[H]|]$.
    Thus, $\alpha_1\leq \alpha_2\leq \ldots\leq \alpha_{|B(H)|}$, proving that the function $\hat{h}_H$ is convex. Since $\hat{f}_H$ is piecewise linear, convex and every slope $\alpha_i\geq 0$, it also follows that it is non-decreasing. 
    An example of a function $\hat{f}_H$ is shown in Figure~\ref{fig:pl-table-stacked}.

For iii), we show that LP~\eqref{FinalLPobfunc} and Problem~\eqref{LPobfunc} are equivalent.
For each $H\subseteq \Gee$, let $\hat{x}_H$ be the value of variable $x_H$ on an optimum solution of Problem~\eqref{LPobfunc}, and let $\hat{Y}=\sum_{H\subseteq \Gee}\hat{f}_H(\hat{x}_H)$ be the value of the objective function.
We show that there exists a feasible solution in LP~\eqref{FinalLPobfunc} with an objective value equal to $\hat{Y}$.
For every $H\subseteq \Gee$, let $\rho(H)=\lfloor \hat{x}_H\rfloor$. For every $i\in \{1,2,\ldots, \rho(H)\}$ we set $\hat{x}_{H,i}=1$, for every $i\in\{\rho(H)+2,\ldots, |B(H)|\}$ we set $\hat{x}_{H,i}=0$, while for $i=\rho(H)+1$ we set $\hat{x}_{H,\rho(H)+1}=\hat{x}_H-\rho(H)$.
Notice that, by definition, for every $H\subseteq \Gee$, $\hat{x}_H=\sum_{i\in[|B(H)|]}\hat{x}_{H,i}$. Hence, all constraints in LP~\eqref{FinalLPobfunc} hold. Furthermore, for any $H\subseteq\Gee$, we have $\hat{f}_H(\hat{x}_H)=(f_H(\rho(H)+1)-f_H(\rho(H)))\cdot (\hat{x}_H-\rho(H))+f_H(\rho(H))=w_{H,\rho(H)+1}\cdot \hat{x}_{H,\rho(H)+1}+\sum_{i\in[\rho(H)]}w_{H,i}=\sum_{i\in[|B(H)|]}w_{H,i}\cdot \hat{x}_{H,i}$.
Thus, 
$\sum_{H\subseteq \Gee}\sum_{i\in[|B(H)|]}w_{H,i}\cdot \hat{x}_{H,i}=
\sum_{H\subseteq \Gee}\hat{f}_H(\hat{x}_H)=\hat{Y}$.

Next, we show the other direction:
For each $H\subseteq \Gee$ and $i\in[|B(H)|]$, let $\hat{x}_{H,i}$ be the value of variable $x_{H,i}$ \revtwo{an optimum} solution of LP~\eqref{FinalLPobfunc}, and let $\hat{Y}=\sum_{H\subseteq \Gee}\sum_{i\in[|B(H)|]}w_{H,i}\cdot \hat{x}_{H,i}$ be the value of the objective function.
We show that there exists a feasible solution in Problem~\eqref{LPobfunc} with an objective value equal to $\hat{Y}$.
We make the following observation. For every $H\subseteq \Gee$, and for any $i\in[|B(H)|]]$, if $0<\hat{x}_{H,i}<1$ then $\hat{x}_{H,j}=1$ for every $0<j<i$, and $\hat{x}_{H,j}=0$ for every $|B(H)|\geq j>i$. We show it by contradiction. Assume that there exists an index $j'<j$ such that $\hat{x}_{H,j'}<1$. Let $\mu=\min\{1-\hat{x}_{H,j'}, \hat{x}_{H,j}\}$. We set the values $\bar{x}_{H,j'}=\hat{x}_{H,j'}+\mu$, $\bar{x}_{H,j}=\hat{x}_{H,j}-\mu$, and for all the other values $\bar{x}_{H,s}=\hat{x}_{H,s}$ for $s\in[|B(H)|]\setminus\{j',j\}$. We notice that $\vv{\bar{x}}$ satisfies all constrains in LP~\eqref{FinalLPobfunc} since $\sum_{i\in[|B(H)|]}\hat{x}_{H,i}=\sum_{i\in[|B(H)|]}\bar{x}_{H,i}$. Furthermore, $\sum_{i\in [|B(H)|]}w_{H,i}\bar{x}_{H,i}\leq \sum_{i\in [|B(H)|]}w_{H,i}\hat{x}_{H,i}=\hat{Y}$ because $w_{H,j'}\leq w_{H,j}$ and $w_{H,j'}\bar{x}_{H,j'}+w_{H,j}\bar{x}_{H,j}\leq w_{H,j'}\hat{x}_{H,j'}+w_{H,j}\hat{x}_{H,j}$. We conclude that if $w_{H,j'}<w_{H,j}$, then this is a contradiction because $\hat{Y}$ is the value of the optimum solution. If $w_{H,j'}=w_{H,j}$ we can always construct in $O(n)$ time a solution with the same value $\hat{Y}$ satisfying the observation.
Hence from now on, we assume that for every $H\subseteq \Gee$, there exists an index $\rho(H)\in[|B(H)|]$ such that $\hat{x}_{H,i}=1$ for $i\leq \rho(H)$, and $\hat{x}_{H,i}=0$ for $i\geq \rho(H)+2$.
For every $H\subseteq \Gee$, we define $\hat{x}_H=\sum_{i\in[|B(H)|]}\hat{x}_{H,i}$, so by definition, all constrains of Problem~\eqref{LPobfunc} are satisfied.
Furthermore, $\sum_{i\in [|B(H)|]}w_{H,i}\hat{x}_{H,i}=\sum_{i\leq \rho(H)}w_{H,i} + w_{H,\rho(H)+1}\cdot\hat{x}_{H,\rho(H)+1}=\hat{f}_H(\hat{x}_H)$. We conclude that $\sum_{H\subseteq\Gee}\hat{f}_H(\hat{x}_H)=\sum_{H\subseteq \Gee}\sum_{i\in [|B(H)|]}w_{H,i}\cdot\hat{x}_{H,i}=\hat{Y}$. The result follows. 

\subsection{\textbf{Proof of Theorem~\ref{thm:1}}}
\label{appndx:thm2}

We first show that 
   Algorithm~\ref{alg:lprounding} is a 2-approximation algorithm for the \prob\ problem.

    For all $H \subseteq \Gee$, let $\opt_H$ denote the number of sets selected from $B(H)$ in the optimal solution of the \prob\ problem over $\dee$, and let $p_H$ denote the the number of the sets we choose from $B(H)$, in the returned solution $\sol$. The goal is to show that $\sum_{H \in \Gee}f_H(p_H) \leq 2\sum_{H \in \Gee}f_H(\opt_H)$, to prove that Algorithm~\ref{alg:lprounding} returns a $2$-approximation solution. For all $H \subseteq \Gee$, we define $e_H = \max\{\bar{x}_H, \opt_H\}$. It is easy to verify that if we choose $e_H$ sets (with the smallest weight) from every bucket $B(H)$, all the demands will get satisfied, i.e., for all items $g \in \Gee$, we have $\sum_{H \subseteq \Gee, g \in H}e_H \geq Q_g$. This is because for all $H \subseteq \Gee$, we have that $e_H \geq \opt_H$, by definition. Moreover, we have
    \begin{align}\label{algn:ieq}
    \sum_{H \subseteq \Gee}f_H(e_H) \leq \sum_{H \subseteq \Gee}f_H(\bar{x}_H) + \sum_{H \subseteq \Gee}f_H(\opt_H)
    \leq 2\sum_{H \in \Gee}f_H(\opt_H).
    \end{align}
    Recall that $r = \left\lceil\sum_{H \subseteq \Gee}(\hat{x}_H - \bar{x}_H)\right\rceil$. For all $H \subseteq \Gee$, we define $e'_H = \min\{e_H, \bar{x}_H + r\}$. By definition, we have $e'_H \leq e_H$, for all $H \subseteq \Gee$, and hence $\sum_{H \subseteq \Gee}f_H(e'_H) \leq \sum_{H \subseteq \Gee}f_H(e_H)$. Therefore, by (\ref{algn:ieq}), we get 
    \begin{align}\label{algn:ieq2}
        \sum_{H \subseteq \Gee}f_H(e'_H) \leq 2\sum_{H \subseteq \Gee}f_H(\opt_H).
    \end{align}
    Next, we show that for all $H \subseteq \Gee$, we have that $\bar{x}_H$ and $e'_H$ are close to each other and hence the described rounding scheme Algorithm~\ref{alg:lprounding}, will consider the vector $\vv{e'}_{H\subseteq \Gee}$ as a candidate solution. 
    \begin{lemma}\label{lem:helper1}
        For all $H \subseteq \Gee$, it holds that $0 \leq e'_H - \bar{x}_H \leq r$.
    \end{lemma}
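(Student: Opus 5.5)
The plan is to read the inequality straight off the definitions $e_H=\max\{\bar{x}_H,\opt_H\}$ and $e'_H=\min\{e_H,\bar{x}_H+r\}$. I will first record that $r\ge 0$, and then bound $e'_H$ from below and from above, one term of the $\min$ at a time.

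First, $r\geq 0$. For every $H\subseteq\Gee$ we have $\bar{x}_H=\lfloor\hat{x}_H\rfloor\leq\hat{x}_H$, so every summand $\hat{x}_H-\bar{x}_H$ is nonnegative. Hence their sum, and its ceiling $r$, is a nonnegative integer.

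Lower bound. Since $e_H=\max\{\bar{x}_H,\opt_H\}$, we have $e_H\geq\bar{x}_H$. We also have $\bar{x}_H+r\geq\bar{x}_H$ because $r\geq 0$. Both arguments of the $\min$ defining $e'_H$ are therefore at least $\bar{x}_H$, which gives $e'_H-\bar{x}_H\geq 0$.

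Upper bound. By definition $e'_H\leq\bar{x}_H+r$, so $e'_H-\bar{x}_H\leq r$.

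I expect no real obstacle here; the only point needing care is $r\ge 0$, which comes from the floor. For the later use in Algorithm~\ref{alg:lprounding} I would also add that the candidate $y_H=e'_H-\bar{x}_H$ is an integer and satisfies $y_H\leq|\bar{B}(H)|=|B(H)|-\bar{x}_H$. This holds because $e'_H\leq e_H\leq|B(H)|$: indeed $\bar{x}_H\leq\hat{x}_H\leq|B(H)|$ by the constraints of Problem~\eqref{LPobfunc}, and $\opt_H\leq|B(H)|$ by feasibility of the optimum. Consequently $0\leq y_H\leq\min\{r,|\bar{B}(H)|\}$, so the enumeration loop of Algorithm~\ref{alg:lprounding} considers the vector $\vv{y}_{H\subseteq\Gee}$ with $y_H=e'_H-\bar{x}_H$.
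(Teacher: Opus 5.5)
Your proof is correct and follows the paper's argument, which likewise reads the bounds directly off $e'_H=\min\{\max\{\bar{x}_H,\opt_H\},\bar{x}_H+r\}$. Your explicit check that $r\ge 0$, and your remark that $y_H=e'_H-\bar{x}_H$ is an integer with $y_H\le|\bar{B}(H)|$, supply details the paper leaves implicit when it asserts that Algorithm~\ref{alg:lprounding} considers the vector $\vv{e'}_{H\subseteq\Gee}$.
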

    \begin{proof}
        Recall that by definition, $e_H = \max\{\bar{x}_H, \opt_H\}$ and $e'_H = \min\{e_H, \bar{x}_H + r\}$. Thus, we have $e'_H = \min\{\max\{\bar{x}_H, \opt_H\}, \bar{x}_H + r\}$. Hence, we have $e'_H \geq \bar{x}_H$ and $e'_H \leq \bar{x}_H + r$, for all $H \subseteq \Gee$. The lemma follows.
    \end{proof}
    
    The following lemma shows that if we choose $e'_H$ sets from the bucket $B(H)$, for all $H \subseteq \Gee$, then all the demands are satisfied.
    \begin{lemma}\label{lem:helper2}
        For all items $g \in \Gee$, it holds that $\sum_{H \subseteq \Gee, g \in H}e'_H \geq Q_g$.
    \end{lemma}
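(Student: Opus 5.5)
The plan is a two-case argument on a fixed item $g\in\Gee$, depending on whether the truncation in $e'_H=\min\{e_H,\bar{x}_H+r\}$ is active for some bucket containing $g$. Throughout I use Lemma~\ref{lem:helper1}, which gives $e'_H\geq \bar{x}_H$ for every $H\subseteq\Gee$. I also use the feasibility of the fractional optimum $\vv{\hat{x}}_{H\subseteq\Gee}$ for Problem~\eqref{LPobfunc}, i.e., $\sum_{H\ni g}\hat{x}_H\geq \Qu_g$. The truncated case needs the slightly delicate step, while the other case is immediate.

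\emph{Case 1: no truncation.} Suppose $e'_H=e_H$ for every $H\subseteq\Gee$ with $g\in H$. Then, since $e_H=\max\{\bar{x}_H,\opt_H\}\geq \opt_H$ and the optimal integral solution is feasible,
$$\sum_{H\ni g}e'_H=\sum_{H\ni g}e_H\geq \sum_{H\ni g}\opt_H\geq \Qu_g.$$

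\emph{Case 2: truncation.} Suppose some $H_0\ni g$ has $e'_{H_0}=\bar{x}_{H_0}+r<e_{H_0}$. I bound the $H_0$ term by $\bar{x}_{H_0}+r$ and every other term by $e'_H\geq\bar{x}_H$. This gives
$$\sum_{H\ni g}e'_H\geq \sum_{H\ni g}\bar{x}_H+r.$$
By the definition of $r$ and the nonnegativity of each $\hat{x}_H-\bar{x}_H$ (as $\bar{x}_H=\lfloor\hat{x}_H\rfloor$), we have
$$r\geq \sum_{H\subseteq\Gee}(\hat{x}_H-\bar{x}_H)\geq \sum_{H\ni g}(\hat{x}_H-\bar{x}_H).$$
Hence the right-hand side is at least $\sum_{H\ni g}\hat{x}_H\geq \Qu_g$ by LP feasibility.

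The only point needing care is Case 2. There the key is to discard the uncovered buckets' fractional parts using nonnegativity, so that a single budget $r$ absorbs all of the rounding loss relevant to $g$. This combines the two cases and proves the lemma.
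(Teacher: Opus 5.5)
Your proof is correct and follows the paper's argument. The paper splits on whether some $H\ni g$ has $\opt_H > \bar{x}_H + r$, which is exactly your condition that the truncation in $e'_H$ is active, and it treats the two cases the same way you do. Your use of the nonnegativity of the fractional parts $\hat{x}_H - \bar{x}_H$ to justify $r + \sum_{H\ni g}\bar{x}_H \geq Q_g$ spells out a step that the paper only states.
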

    \begin{proof}
        By definition, both the fractional vector $\vv{\hat{x}}_{H \subseteq \Gee}$ and the optimal solution vector $\vv{\opt}_{H \subseteq \Gee}$ satisfy all the demands. More formally, for any $g \in \Gee$ we have 
            $\sum_{H \subseteq \Gee, g \in H}\hat{x}_H \geq Q_g$,
        and 
\begin{align}\label{algn:opt}
            \sum_{H \subseteq \Gee, g \in H}\opt_H \geq \revtwo{Q_g}.
        \end{align}
        Recall that $r = \left\lceil\sum_{H \subseteq \Gee}(\hat{x}_H - \bar{x}_H)\right\rceil$, and since $\vv{\hat{x}}_{H \subseteq \Gee}$ covers all the demands, for any $g \in \Gee$ we have
            $r + \sum_{H \subseteq \Gee, g \in H}\bar{x}_H \geq Q_g$.
        Intuitively, if we take $\bar{x}_H$ sets from $B(H)$ for all $H \subseteq \Gee$, then for any item $g \in \Gee$, we need at most $r$ additional sets to satisfy its demand.
        By definition, $e'_H = \min\{\max\{\bar{x}_H, \opt_H\}, \bar{x}_H + r)\}$. Thus, for all $H \subseteq \Gee$, we have
        \begin{align}\label{algn:egeqx}
            e'_H \geq \bar{x}_H.
        \end{align}
        Moreover, by definition, we have 
        \begin{align}\label{algn:cases}
        \forall H \subseteq \Gee:
            \begin{cases}
                e'_H = \bar{x}_H + r, & \text{if } \opt_H > \bar{x}_H + r, \\[4pt]
                e'_H \geq \opt_H, & \text{if } \opt_H \leq \bar{x}_H + r.
            \end{cases}
        \end{align}
        Let $g \in \Gee$, be any arbitrary item. There are 2 different cases. 
        
        (i) For all sets $H \subseteq \Gee$, such that $g \in H$, it holds that $\opt_H \leq \bar{x}_H + r$. In this case, for all $H \subseteq \Gee$ such that $g \in H$, we have that $e'_H \geq \opt_H$, by the second condition of (\ref{algn:cases}). Therefore, we get
        \begin{align*}
            \sum_{H \subseteq \Gee, g \in H}e'_H \geq \sum_{H \subseteq \Gee, g \in H}\opt_H \geq Q_g,
        \end{align*}
        as desired in this case. The last inequality holds by (\ref{algn:opt}).
        
        (ii) There exists a set $H' \subseteq \Gee$, such that $g \in H'$ and $\opt_{H'} \geq \bar{x}_{H'} + r$. In this case, by the first condition of (\ref{algn:cases}), we have $e'_{H'} = \bar{x}_{H'} + r$. Moreover, for all $H \subseteq \Gee$, it holds that $e'_H \geq \bar{x}_H$, by (\ref{algn:egeqx}). Thus, we get 
            $\sum_{H \subseteq \Gee, g \in H}e'_H \geq r + \sum_{H \subseteq \Gee, g \in H}\bar{x}_H \geq Q_g$,
        as desired. Thus, the lemma follows in both cases. 
    \end{proof}
    
    For all $H \subseteq \Gee$, by Lemma~\ref{lem:helper1} we have that $\bar{x}_H \leq e'_H \leq \bar{x}_H + r$. Therefore, the described rounding scheme Algorithm~\ref{alg:lprounding}, will consider the vector $\vv{e'}_{H\subseteq \Gee}$ as a candidate. By Lemma~\ref{lem:helper2}, we have that the vector $\vv{e'}_{H\subseteq \Gee}$ satisfies all the demands, and thus the condition in line 19 of Algorithm~\ref{alg:lprounding} is true for this vector, and it is qualified as a potential solution.
    Hence, we have $\sum_{H \subseteq \Gee}f_H(p_H) \leq \sum_{H \subseteq \Gee}f_H(e'_H)$. Putting everything together, by (\ref{algn:ieq2}), we get
        $\sum_{H \subseteq \Gee}f_H(p_H) \leq 2\sum_{H \subseteq \Gee}f_H(\opt_H)$,
    as desired.

Next, we bound the running time of Algorithm~\ref{alg:lprounding}.
\begin{lemma}
    Algorithm~\ref{alg:lprounding} runs in $O(\mathcal{L}(n))$ time.
\end{lemma}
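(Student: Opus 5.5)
We bound the running time of Algorithm~\ref{alg:lprounding} line by line, showing that every step except solving LP~\eqref{FinalLPobfunc} costs $O(n\log n)$ time or less, and that this is dominated by $\mathcal{L}(n)$. Since any LP solver must at least read the $n$ variables and the constraint matrix, $\mathcal{L}(n)=\Omega(n)$. For the $n\log n$ preprocessing term, we use that all known bounds give $\mathcal{L}(n)=\Omega(n^2)$; alternatively, we may assume $\mathcal{L}(n)=\Omega(n\log n)$, which holds for every practical solver.

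\textbf{Preprocessing.} Partitioning $\Dee$ into the at most $2^\ell=O(1)$ buckets $B(H)$ takes $O(n\ell)=O(n)$ time, using a hash or array indexed by the bitmask of each set. Sorting each bucket by weight takes $O(n\log n)$ time in total. Writing LP~\eqref{FinalLPobfunc} takes $O(n)$ time, since it has $n$ variables, $O(1)$ non-trivial constraints of types~\eqref{LPC1} and~\eqref{LPC2}, each with at most $n$ nonzeros, and $O(n)$ box constraints. Solving the LP costs $\mathcal{L}(n)$.

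\textbf{Converting and rounding.} By Lemma~\ref{lem:1}(iii) and its proof, the LP solution is converted to $\hat{x}_H=\sum_i \hat{x}_{H,i}$ in $O(n)$ time. A possible tie-adjustment, which moves mass to lower-indexed variables with equal weight, is also $O(n)$ per bucket, since it is a single sweep. Computing $\bar{x}_H$ and adding the lightest $\bar{x}_H$ sets of every bucket takes $O(n)$ time, because the buckets are already sorted.

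\textbf{Enumeration, the main obstacle.} The crux is showing that the enumeration loop is cheap. Since $\hat{x}_H-\bar{x}_H<1$ for each of the at most $2^\ell$ subsets, we have $r\le 2^\ell=O(1)$. Hence the number of candidate vectors $\vv{y}_{H\subseteq\Gee}$ is at most $(r+1)^{2^\ell}\le(2^\ell+1)^{2^\ell}=O(1)$. For each candidate, checking the demands takes $O(\ell 2^\ell)=O(1)$ time. Computing the weight of the $y_H$ lightest remaining sets in each $\bar{B}(H)$ takes $O(r2^\ell)=O(1)$ time, because $\bar{B}(H)$ is a suffix of the sorted bucket; prefix sums could be precomputed in $O(n)$ instead. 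Materializing the chosen family $\sol'$ costs $O(1)$, and returning $\sol$ costs $O(n)$.

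Summing gives $O(n\log n+\mathcal{L}(n))=O(\mathcal{L}(n))$, which proves the lemma.
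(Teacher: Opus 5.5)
Your proof is correct and follows the paper's argument: the LP solve costs $\mathcal{L}(n)$, and since $\hat{x}_H-\bar{x}_H<1$ for each of the $2^\ell$ buckets, $r\le 2^\ell$, so the enumeration over candidate vectors takes $O(1)$ time. You are more careful than the paper about the $O(n\log n)$ sorting and the $O(n)$ bookkeeping, which the paper leaves implicit. You can drop the assumption $\mathcal{L}(n)=\Omega(n\log n)$ entirely (your remark that ``known bounds give $\Omega(n^2)$'' conflates upper and lower bounds) by using linear-time selection instead of sorting: the LP does not depend on the order of variables within a bucket, and rounding only needs the $\bar{x}_H$ lightest sets plus the next $r=O(1)$ ones, so everything outside the LP is $O(n)=O(\mathcal{L}(n))$.
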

\begin{proof}
The LP~\eqref{FinalLPobfunc} has $O(n)$ variables and constraints (Lemma~\ref{lem:1}), so its optimum solution is computed in $O(\mathcal{L}(n))$ time. Then we bound the time for the rounding procedure.
For all $H \subseteq \Gee$, we have $\hat{x}_H - \bar{x}_H < 1$, and hence $r = \left\lceil\sum_{H \subseteq \Gee}(\hat{x}_H - \bar{x}_H)\right\rceil \leq 2^\ell$. Since there are $2^\ell$ different buckets and from each bucket we consider at most $r$ sets (with the smallest weight), the rounding step takes $O(r^{2^\ell}) = O((2^\ell)^{(2^\ell)}) = O(1)$ time. 
Although one might argue that $O(r^{2^\ell})$ represents a large constant that could slow down the rounding procedure in practice, in Section~\ref{sec:exp} we show that $r$ is typically quite small in practice.
Overall, the algorithm runs in $O(\mathcal{L}(n))$ time.
\end{proof}

\section{Missing proofs from Section~\ref{subsec:smalllp}}
\label{appndx:sec4b}
\begin{proofoflemma}{lem:approx}
    Let $x \in [\alpha, \beta]$, be an arbitrary value. Let $i \in [s-1]$, be an index such that $x_i \leq x \leq x_{i+1}$. By convexity of $\ef(\cdot)$, we have that 
\begin{align}\label{algn:base}
    \ef(x) \leq \gi(x).
\end{align}
By definition of $x_{i+1}$, we have that
\begin{align}\label{algn:defx}
    \ef(x_{i+1}) \leq (1 + \epsilon)\ef(x_i).
\end{align}
By definition of $\gi(\cdot)$, we have that $\ef(x_{i}) = \gi(x_{i})$ and $\ef(x_{i+1}) = \gi(x_{i+1})$. By linearity of $\gi(\cdot)$ between $x_i$ and $x_{i+1}$ and since $\ef(\cdot)$ is non-decreasing, we have $\gi(x) \leq \gi(x_{i+1})$. Thus,
\begin{align}\label{algn:gileqef}
    \gi(x) \leq \gi(x_{i+1}) = \ef(x_{i+1}).
\end{align}
Since $\ef(\cdot)$ is non-decreasing, we have
\begin{align}\label{algn:efleqef}
    \ef(x_i) \leq \ef(x).
\end{align}
By (\ref{algn:defx}), (\ref{algn:gileqef}), and (\ref{algn:efleqef}), we get
\begin{align}\label{algn:main}
    \gi(x) \leq \ef(x_{i+1}) \leq (1+\epsilon)\ef(x_i) \leq (1+\epsilon) \ef(x).
\end{align}
From (\ref{algn:base}) and (\ref{algn:main}), we get
    $\ef(x) \leq \gi(x) \leq (1+\epsilon) \ef(x)$,
for all values of $x \in [\alpha, \beta]$.

By construction, it is straightforward that $\gi(\cdot)$ is a piecewise linear function.
Since $\ef(\cdot)$ is convex and non-decreasing, the slope of $\gi$ in $[x_i,x_{i+1}]$ is greater than the slope of $\gi$ in $[x_{i-1},x_{i}]$ for every $i\in\{2,3,\ldots, s\}$. 
We show it formally as follows.
The slope of $\gi$ in $[x_{i-1},x_i]$ is defined as $\frac{\gi(x_i)-\gi(x_{i-1})}{x_i-x_{i-1}}$ and the slope of $\gi$ in $[x_{i},x_{i+1}]$ is defined as $\frac{\ef(x_{i+1})-\ef(x_{i})}{x_{i+1}-x_{i}}$. We show, $\frac{\ef(x_i)-\ef(x_{i-1})}{x_i-x_{i-1}}\leq \frac{\ef(x_{i+1})-\ef(x_{i})}{x_{i+1}-x_{i}}$.
We write $x_i$ as a linear combination of $x_{i-1}$ and $x_{i+1}$, i.e., $x_i=\lambda x_{i+1}+(1-\lambda)x_{i-1}$, so $\lambda=\frac{x_i-x_{i-1}}{x_{i+1}-x_{i-1}}\in (0,1)$.
By convexity, $\ef(x_i)\leq \lambda\cdot \ef(x_{i+1})+(1-\lambda)\ef(x_{i-1})=\frac{x_i-x_{i-1}}{x_{i+1}-x_{i-1}}\ef(x_{i+1})+\frac{x_{i+1}-x_i}{x_{i+1}-x_i}\ef(x_{i-1})\Leftrightarrow \frac{\ef(x_i)-\ef(x_{i-1})}{x_i-x_{i-1}}\leq \frac{\ef(x_{i+1})-\ef(x_{i})}{x_{i+1}-x_{i}}$.
Hence, the function $\gi(\cdot)$ is also convex and non-decreasing.

Next, we analyze the number of linear pieces in $\gi(\cdot)$ and the running time of the procedure. Recall that $s$ denotes the number of steps the algorithm takes and hence $\gi(\cdot)$ consists of $s-1$ different pieces. Moreover, by definition, we have $\ef(x_1) = \gamma$ and $\ef(x_s) = \delta$.
Assume the following equality with respect to an unknown $y$, $\ef(\alpha)(1+\eps)^y=\ef(\beta)\Leftrightarrow\gamma(1+\eps)^y=\delta\Leftrightarrow y=\log(\frac{\delta}{\gamma})\cdot \log^{-1}(1+\eps)$.
By the construction of $\gi(\cdot)$, it follows that $s=\left\lceil \log(\frac{\delta}{\gamma})\cdot \log^{-1}(1+\eps) \right\rceil$, so $s=O\left(\log(\frac{\delta}{\gamma})\cdot \log^{-1}(1+\eps)\right)$.
Next, we analyze the running time of the algorithm to construct $\gi(\cdot)$. Let $T_{\ef}^>$ be the running time of the $\mathsf{LargestVal_\ef}(\cdot,\cdot)$ procedure and let $T_{\ef}^=$ is the time needed to calculate $\ef(x)$.
The algorithm takes $O(\log(\frac{\delta}{\gamma}))$ steps. In each step $i\in[s-1]$, we execute the $\mathsf{LargestVal_\ef}(\cdot,\cdot)$ procedure and we evaluate $\ef(x_{i+1})$. Thus, the algorithm runs in $O(\log(\frac{\delta}{\gamma})\cdot \log^{-1}(1+\eps)\cdot 
T_{\ef}^>\cdot T_{\ef}^=)$ time.
\end{proofoflemma}

\begin{proofoflemma}{lem:ApproxFactor}
Let $x_H^*$ be the value of the variable $x_H$ in the optimum solution of Problem~\eqref{smallProb}, that we obtain from solving its corresponding LP. Recall that $\hat{\mathsf{opt}}$ is the value of the optimum solution in Problem~\ref{LPobfunc}. From Lemma~\ref{lem:approx} it follows that $\sum_{H\subseteq \Gee}\hat{f}_H(x_H^*)\leq (1+\eps)\hat{\mathsf{opt}}$. In the proof of Theorem~\ref{thm:1} we showed that the rounding procedure applied on  the optimum solution of Problem~\eqref{LPobfunc} returns a $2$-approximation solution for the \prob\ problem. This proof can be extended straightforwardly arguing that the rounding procedure applied on a $\xi$-approximation solution of Problem~\eqref{LPobfunc} returns, a $(2\xi)$-approximation solution for the \prob\ problem, for any $\xi>1$. Hence we conclude that $\sum_{H\subseteq \Gee}f_H(\tilde{x}_H)\leq 2(1+\eps)\opt$.
\end{proofoflemma}

\begin{proofoflemma}{lem:runtimefaster}
    A direct implementation of the algorithm in Lemma~\ref{lem:approx} for $\ef=\hat{f}_H$ , for all $H\subseteq \Gee$, would lead to the construction of function $\hat{g}_H$ with $O(\log W)$ linear pieces (recall that we assumed $\eps$ to be a constant) in $O(\log (W)\cdot T_{\hat{f}_H}^>\cdot T_{\hat{f}_H}^=)$ time. We show that in our case, where each function $\hat{f}_H$ is a piecewise linear function, we can spend $O(n\log n)$ preprocessing time to execute the algorithm in Lemma~\ref{lem:approx} in only $O(\log (W))$ time. For every $H\subseteq \Gee$, we sort all sets in $B(H)$ in ascending order of their weights. This step takes $O(|B(H)|\log |B(H)|)$ time for each $H\subseteq \Gee$, so overall it takes $O(n\log n)$ time to sort all sets over all buckets $B(H)$ and construct the functions $\hat{f}_H$. For every step of the algorithm in Lemma~\ref{lem:approx} we execute $\mathsf{LargestVal}_{\hat{f}_H}(x_i,(1+\eps)\hat{f}_H(x_i))$ as follows. Starting from $x_i$ we traverse the linear pieces of $\hat{f}_H$ one by one at the right side of $x_i$ until we find the piece defined on the interval $[x_{j-1}, x_j]$ such that $\hat{f}_H(x_j)>(1+\eps)\hat{f}_H(x_i)\geq \hat{f}_H(x_{j-1})$ or until we reach to the last interval. Notice that because of the preprocessing step, for each $x_\zeta$ we check, we evaluate $\hat{f}_H(x_\zeta)$ in $O(1)$ time.
    If we reach the last interval we straightforwardly construct the last linear piece of function $\hat{g}_H$. In the former more interesting case, we define the line that passes between $(x_{j-1}, \hat{f}_H(x_{j-1}))$ and $(x_{j}, \hat{f}_H(x_{j}))$, and find the exact $x'_i$ such that $\hat{f}_H(x'_i)=(1+\eps)\hat{f}_H(x_i)$ in $O(1)$ time. Notice that the algorithm traverses the points $x_\zeta$ in increasing order so for each $H\subseteq\Gee$, the function $\hat{g}_H$ is constructed in $O(\log (W)+|B(H)|)$ time. There are $O(2^\ell)$ subsets in $\Gee$, so the running time to define Problem~\eqref{smallProb} and its corresponding LP is $O(\sum_{H\subseteq \Gee} (\log(W)+|B(H)|))=O(2^\ell\log(W)+n)=O(\log(W) +n)$. The LP has $O(\sum_{H\subseteq \Gee}\log W)=O(2^\ell\log W)=O(\log W)$ variables and constraints so it is solved in $O(\mathcal{L}(\log W))$ time. Finally, the rounding step is identical to Subsection~\ref{subsec:biglp} and runs in $O(1)$.
\end{proofoflemma}

\begin{envrevone}
\section{Additional Experiment Results}\label{appndx:experiments}

In this section, we discuss the additional experiments regarding the evaluation of the DP algorithm and the impact of varying demands over the remaining datasets. At last, we discuss an example to highlight the importance of avoiding over-satisfaction of demands in fairness applications. 
\paragraph{Exact DP algorithm}
Figures~\ref{fig:dp_weight_varying_n_census}–\ref{fig:dp_time_varying_group_yelp} present a comparison of our main algorithms and baseline methods against the exact DP algorithm. In these experiments, we restricted the ranges of $n$ and $\ell$ so that the DP algorithm remains scalable. 
Specifically, we fix the default value of $n$ to $4096$ while varying $\ell$ from 2 to 9 groups, and fix $\ell$ to 7 while varying $n$ from 16 to 32{,}768. These experiments are basically a smaller version of the experiments shown in Section~\ref{sec:exp}, so that we can compare the algorithms against the exact optimal solution. In this smaller setting, the general pattern of results, when varying the number of sets $n$ and the number of items $\ell$, remains similar to what we discussed in Section~\ref{sec:exp}. The results of all the algorithms are closer to each other than what we discussed in Section~\ref{sec:exp} because of the smaller settings. Interestingly, almost always (with a few exceptions), the total weight of the solution obtained by both of our $2$-approx and $(2+\epsilon)$-approx algorithms matches the solution of the exact DP algorithm (which we know is optimum), and in all the settings, at least one of our approximation algorithms exactly matches the optimum solution. Although this is a rather strict and smaller setting, this observation is a good sign for the performance of our approximation algorithm compared to the exact optimum solution. The greedy baseline usually performs slightly worse, and the difference can get rather large in some settings, as can be seen in Figure~\ref{fig:dp_weight_varying_n_stackoverflow}.
In terms of the running time and efficiency, the results are shown in Figures~\ref{fig:dp_time_varying_n_census}-\ref{fig:dp_time_varying_group_yelp}.
The DP algorithm has the slowest performance among all the other methods, as expected. Moreover, as shown in Figures~\ref{fig:dp_time_varying_group_census}-\ref{fig:dp_time_varying_group_yelp}, the exact DP algorithm quickly becomes intractable as the number of groups increases, since its runtime grows exponentially with $\ell$.
\end{envrevone}

\begin{envrevtwo}

\paragraph{Varying demands}
Figures~\ref{fig:weight_varying_demands_census} and \ref{fig:weight_varying_demands_music} report further results on the impact of demand variation on the other two remaining datasets. The results follow a similar pattern to that discussed for the other datasets in Section~\ref{sec:exp}.
    \paragraph{Oversatisfaction of demands} Recall Example~\ref{ex1}, where a dataset is given and the goal is to select a minimum number of data points subject to demographic demand constraints. In this example, let the set of demographic groups to be covered be denoted by 
    $$\Gee = \{\text{male}, \text{female}, \text{young}\},$$
    and the collection of images be given by
\[
\Dee = \big\{\{\text{male}, \text{young}\}, \{\text{female}, \text{young}\}, \{\text{male}\}, \{\text{female}\}\big\}.
\]
Each image represents the set of demographic attributes present in that image. Moreover, let the demand requirements be specified as
\[
Q_{\text{male}} = 1, \quad Q_{\text{female}} = 2, \quad Q_{\text{young}} = 1,
\]
and let all data points have unit weight.
Therefore, the objective is to select a minimum-cardinality subset of images such that all demand constraints are satisfied, that is, the selected images collectively contain at least one male individual, at least two female individuals, and at least one young individual. Two feasible solutions are
\begin{align*}
S_1 &= \big\{ \{\text{female}, \text{young}\},\,
               \{\text{male}\},\,
               \{\text{female}\} \big\}, \\
S_2 &= \big\{ \{\text{male}, \text{young}\},\,
               \{\text{female}\},\,
               \{\text{female}, \text{young}\} \big\}.
\end{align*}
Both $S_1$ and $S_2$ are valid solutions since $S_1, S_2 \subseteq \Dee$ and each satisfies all demand constraints. However, solution $S_2$ covers two young individuals, exceeding the required demand for the \emph{young} demographic. As a result, we have $RSS(S_1) = 0$ and $RSS(S_2) = 1$.

Selecting $S_2$ therefore leads to an \emph{over-representation} of the \emph{young} demographic. This over-representation is quantified by the function $RSS(\cdot)$, which measures the sum of squared deviations between the required demands and the actual coverage induced by a solution. Although minimizing $RSS(\cdot)$ is not explicitly part of the objective function of our problem, the experimental results presented in Section~\ref{sec:exp} demonstrate that our approach consistently yields solutions with substantially lower RSS values compared to baseline methods. This behavior is primarily due to the fact that our algorithms tend to find higher-quality solutions with smaller total weight. In contrast, solutions with larger weights often require selecting additional images beyond what is strictly necessary, which in turn increases over-coverage and leads to higher RSS values. This issue becomes important in the fairness applications.

In the above example, consider the behavior of the greedy algorithm. In the first iteration, it may select the image $\{\text{male}, \text{young}\}$ since it simultaneously covers two demographic groups. After this choice, in order to satisfy the demand $Q_{\text{female}} = 2$, the algorithm is forced to select both $\{\text{female}\}$ and $\{\text{female}, \text{young}\}$ images. This sequence of selections results in the solution $S_2$, which unnecessarily over-represents the \emph{young} demographic.

\end{envrevtwo}

\begin{figure*}[!tb] 
\centering
\includegraphics[width=0.65\textwidth]{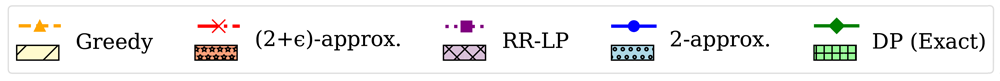}\\
    \begin{minipage}[t]{0.24\linewidth}
        \centering
        \includegraphics[width=\textwidth]{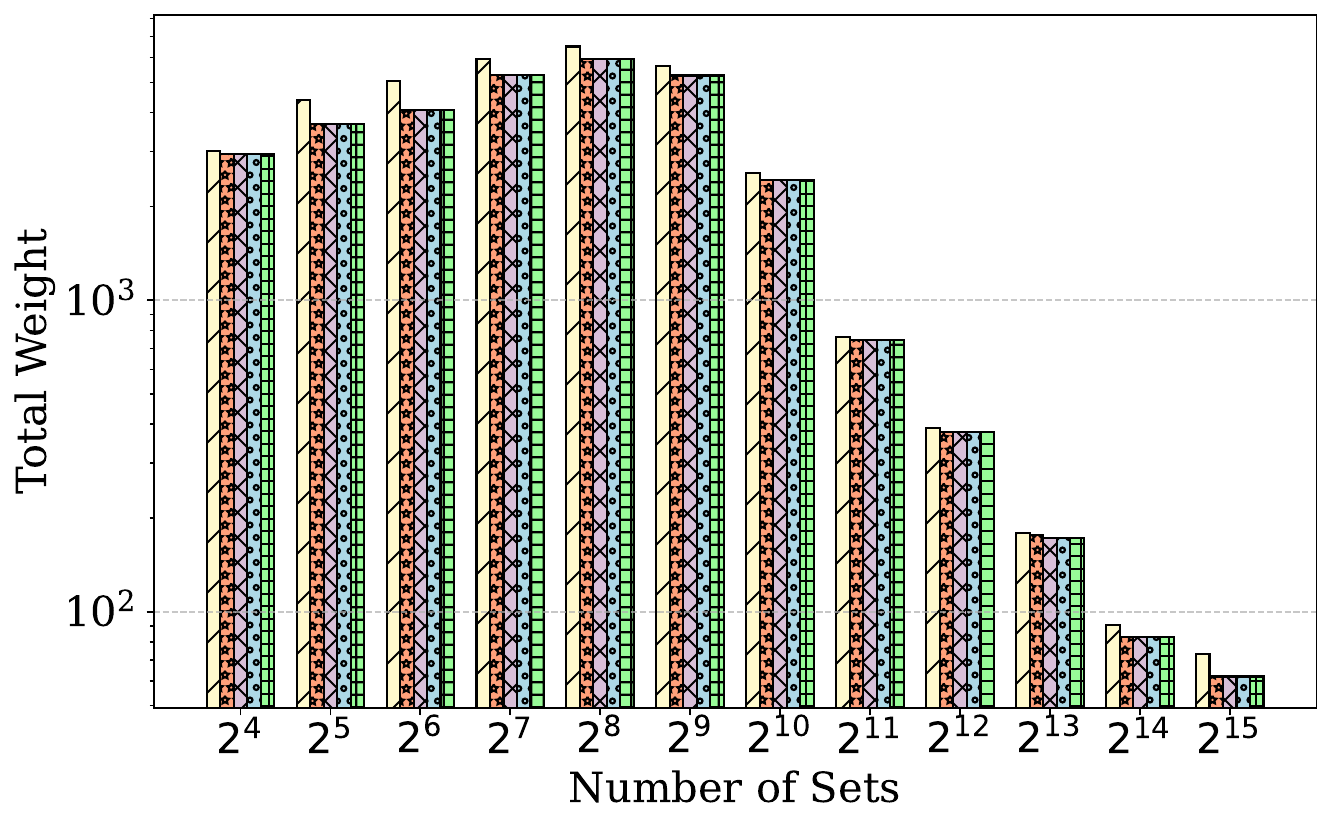}
        \vspace{-2.5em}
        \caption{\revone{\small Impact of varying number of sets $n$ on the solution’s total weight, {\sf Census}}}
        \label{fig:dp_weight_varying_n_census}
    \end{minipage}
    \hfill
    \begin{minipage}[t]{0.24\linewidth}
        \centering
        \includegraphics[width=\textwidth]{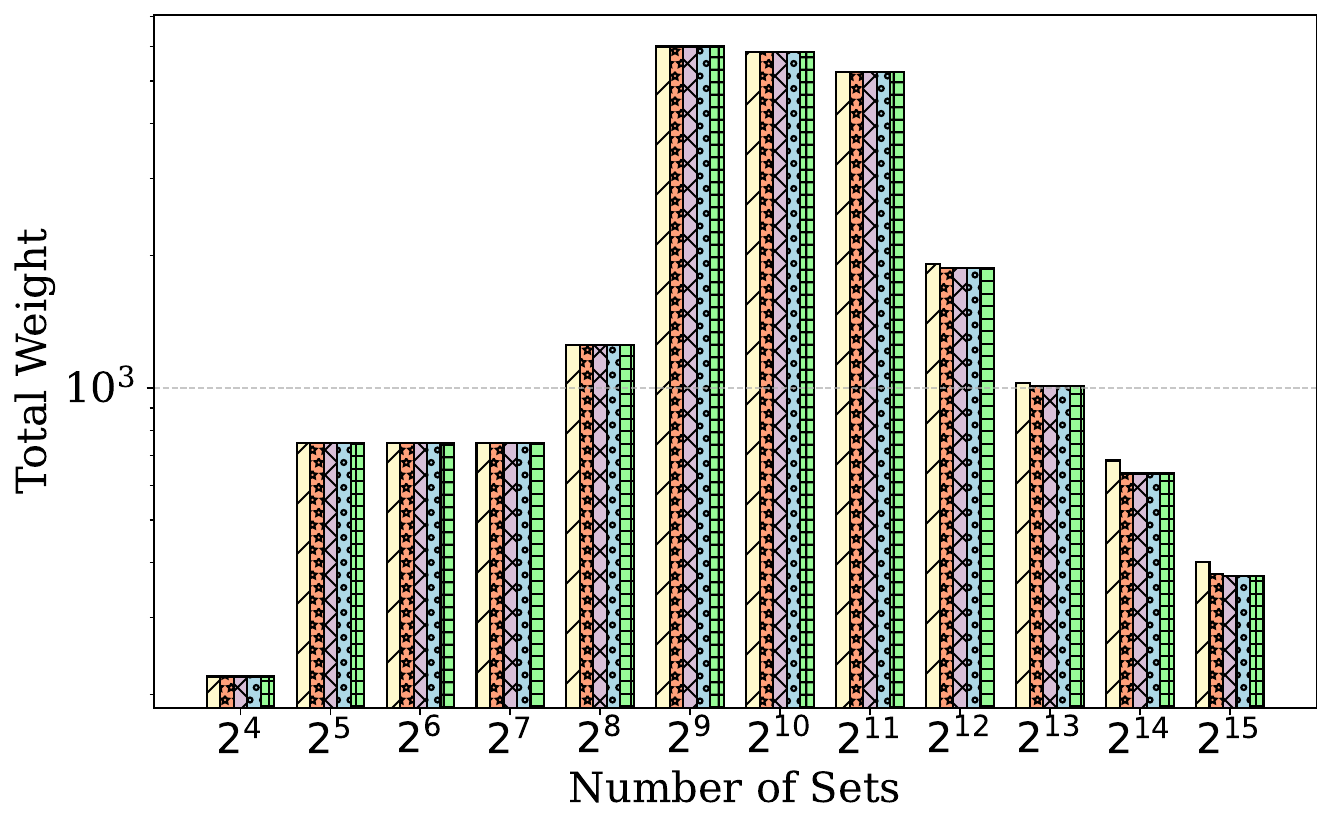}
        \vspace{-2.5em}
        \caption{\revone{\small Impact of varying number of sets $n$ on the solution’s total weight, {\sf Music}}}
        \label{fig:dp_weight_varying_n_music}
    \end{minipage}
    \hfill
    \begin{minipage}[t]{0.24\linewidth}
        \centering
        \includegraphics[width=\textwidth]{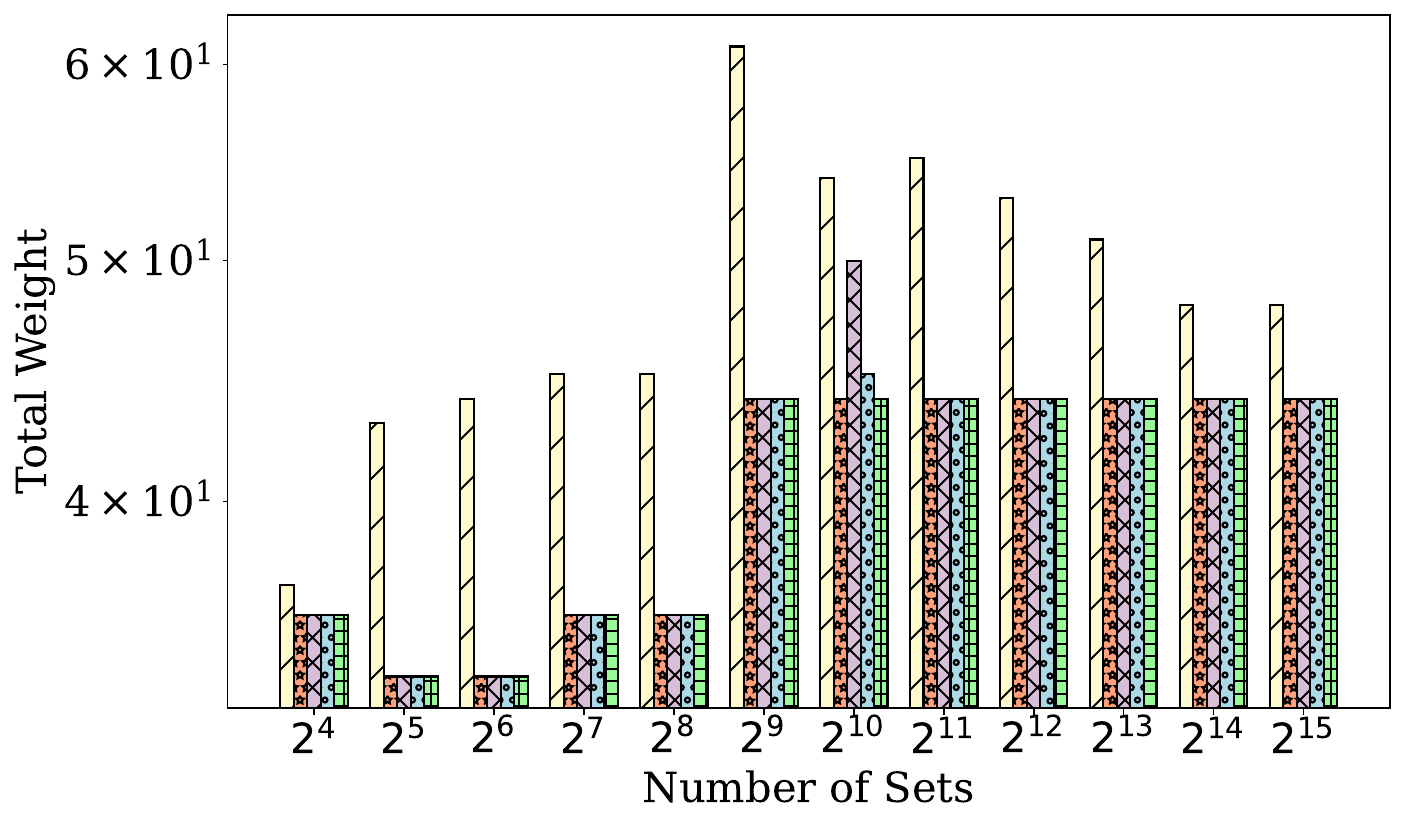}
        \vspace{-2.5em}
        \caption{\revone{\small Impact of varying number of sets $n$ on the solution’s total weight, {\sf Stack Overflow}}}
        \label{fig:dp_weight_varying_n_stackoverflow}
    \end{minipage}
    \hfill
    \begin{minipage}[t]{0.24\linewidth}
        \centering
        \includegraphics[width=\textwidth]{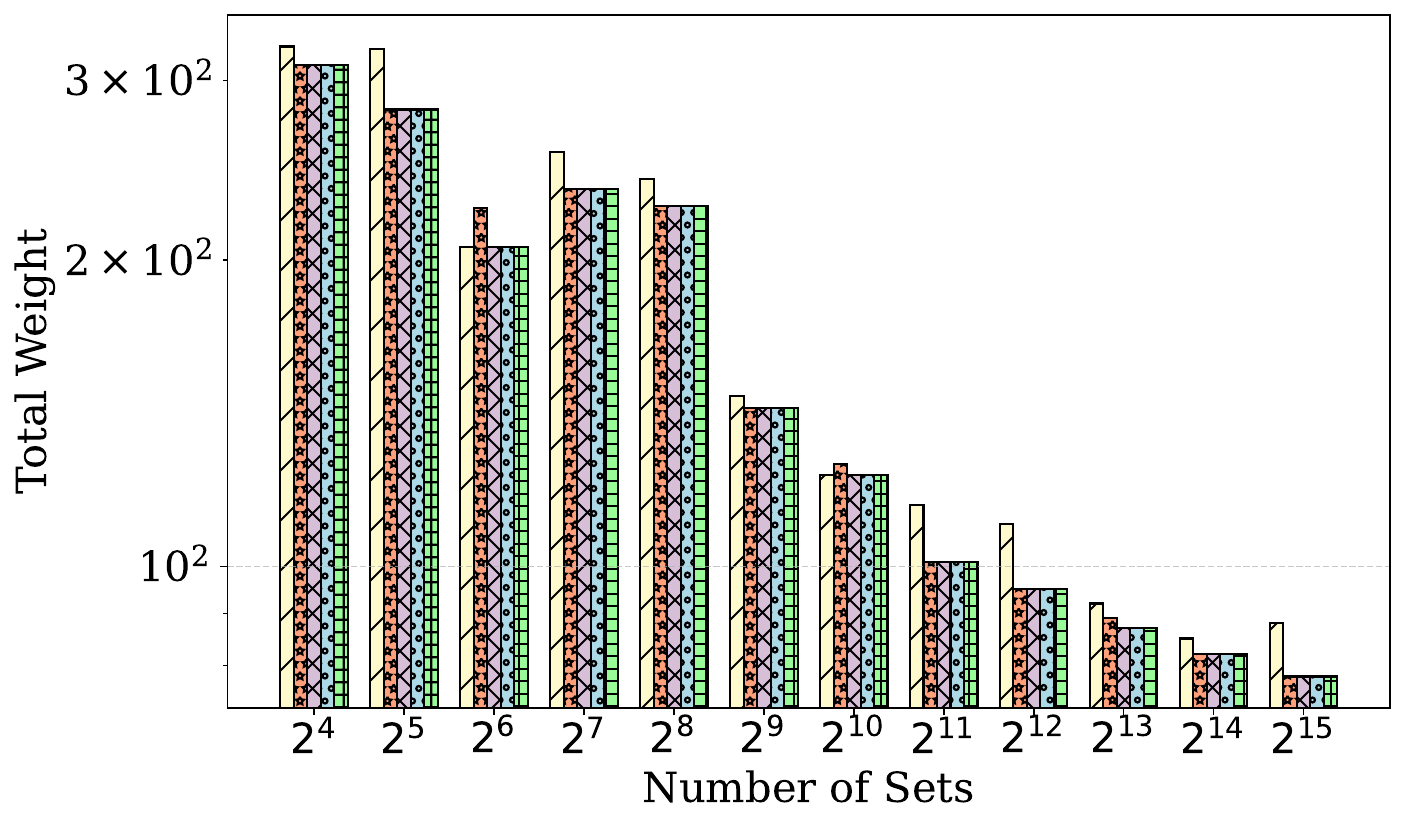}
        \vspace{-2.5em}
        \caption{\revone{\small Impact of varying number of sets $n$ on the solution’s total weight, {\sf Yelp}}}
        \label{fig:dp_weight_varying_n_yelp}
    \end{minipage}
    \hfill
    \begin{minipage}[t]{0.24\linewidth}
        \centering
        \includegraphics[width=\textwidth]{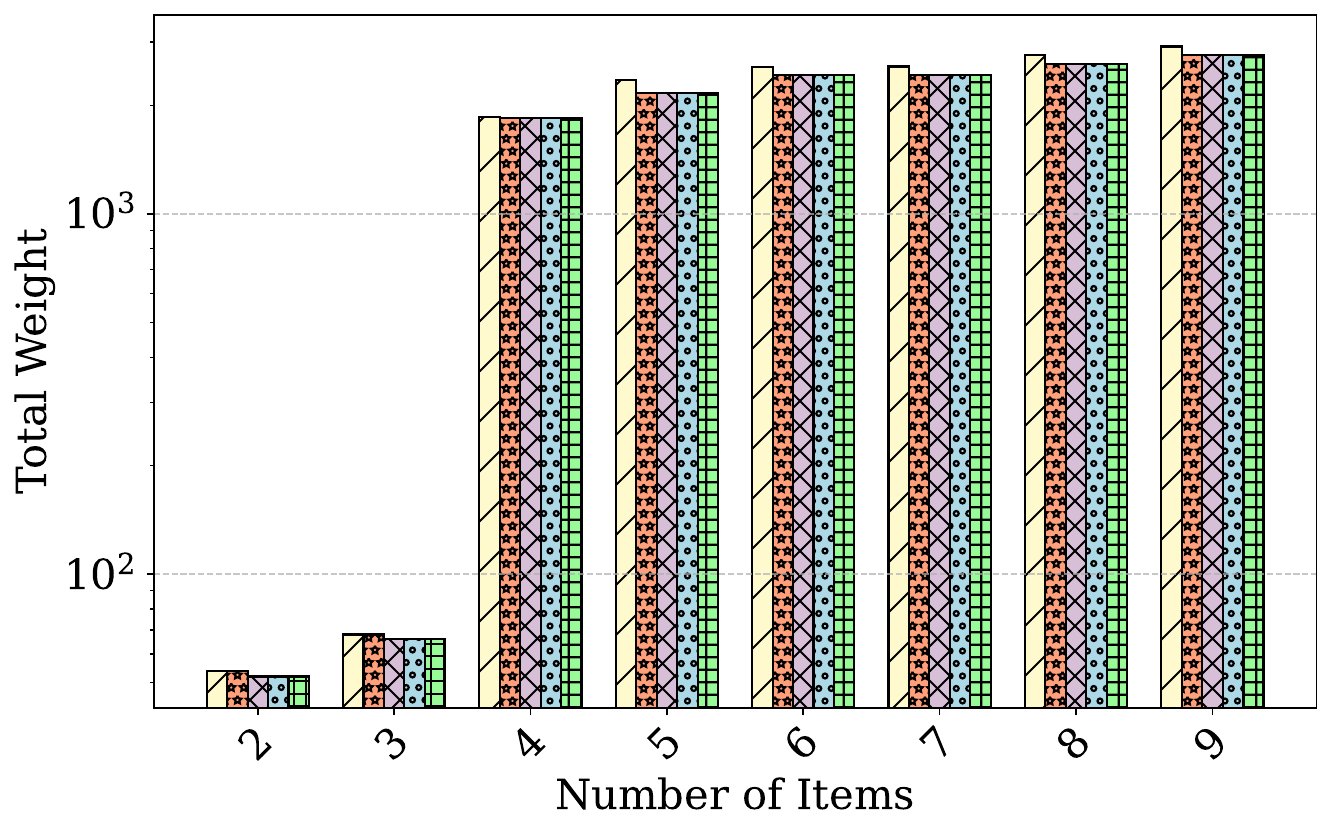}
        \vspace{-2.5em}
        \caption{\revone{\small Impact of varying number of items $\ell$ on the solution’s total weight, {\sf Census}}}
        \label{fig:dp_weight_varying_group_yelp}
    \end{minipage}
    \hfill
    \begin{minipage}[t]{0.24\linewidth}
        \centering
        \includegraphics[width=\textwidth]{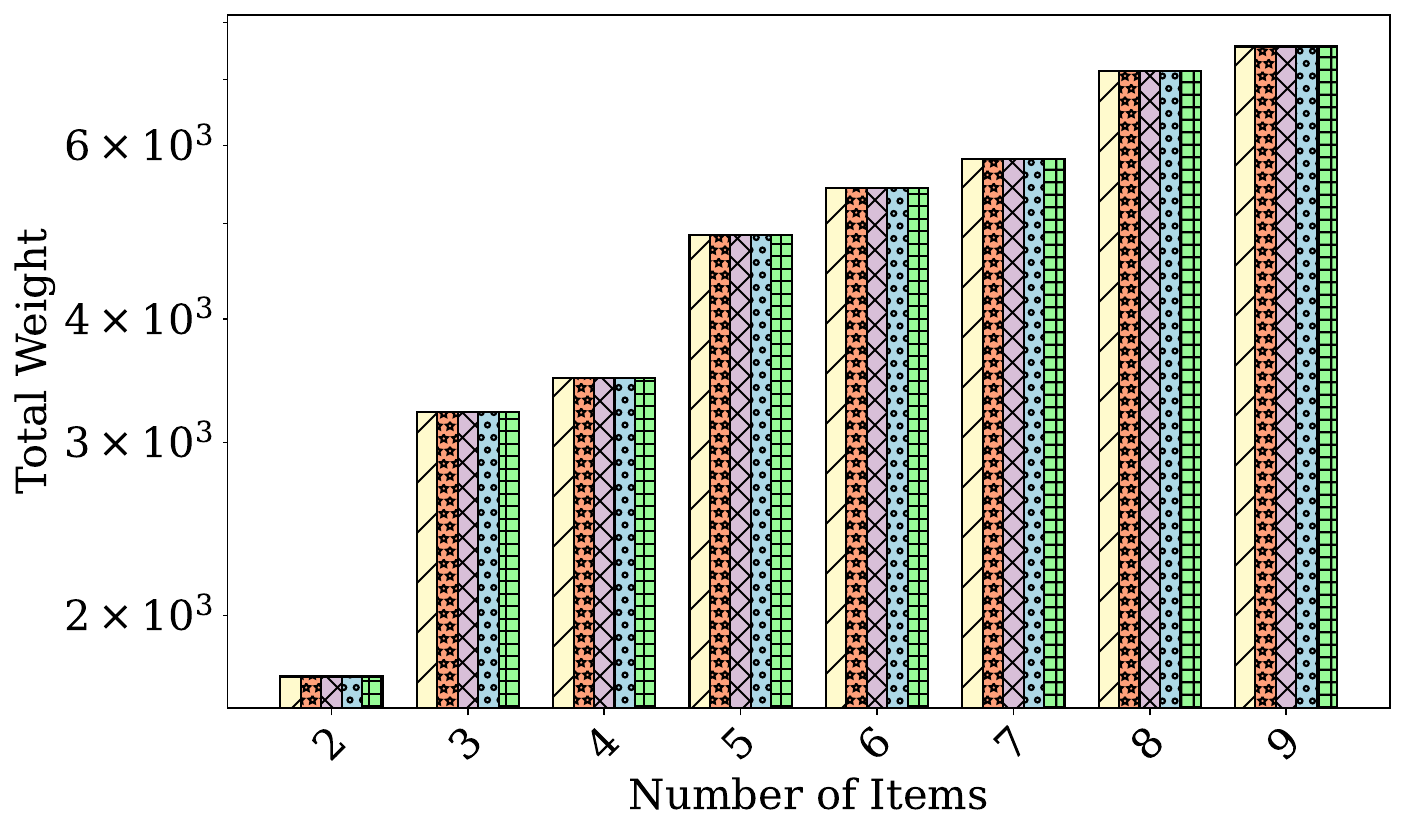}
        \vspace{-2.5em}
        \caption{\revone{\small Impact of varying number of items $\ell$ on the solution’s total weight, {\sf Music}}}
        \label{fig:dp_weight_varying_group_music}
    \end{minipage}
    \hfill
    \begin{minipage}[t]{0.24\linewidth}
        \centering
        \includegraphics[width=\textwidth]{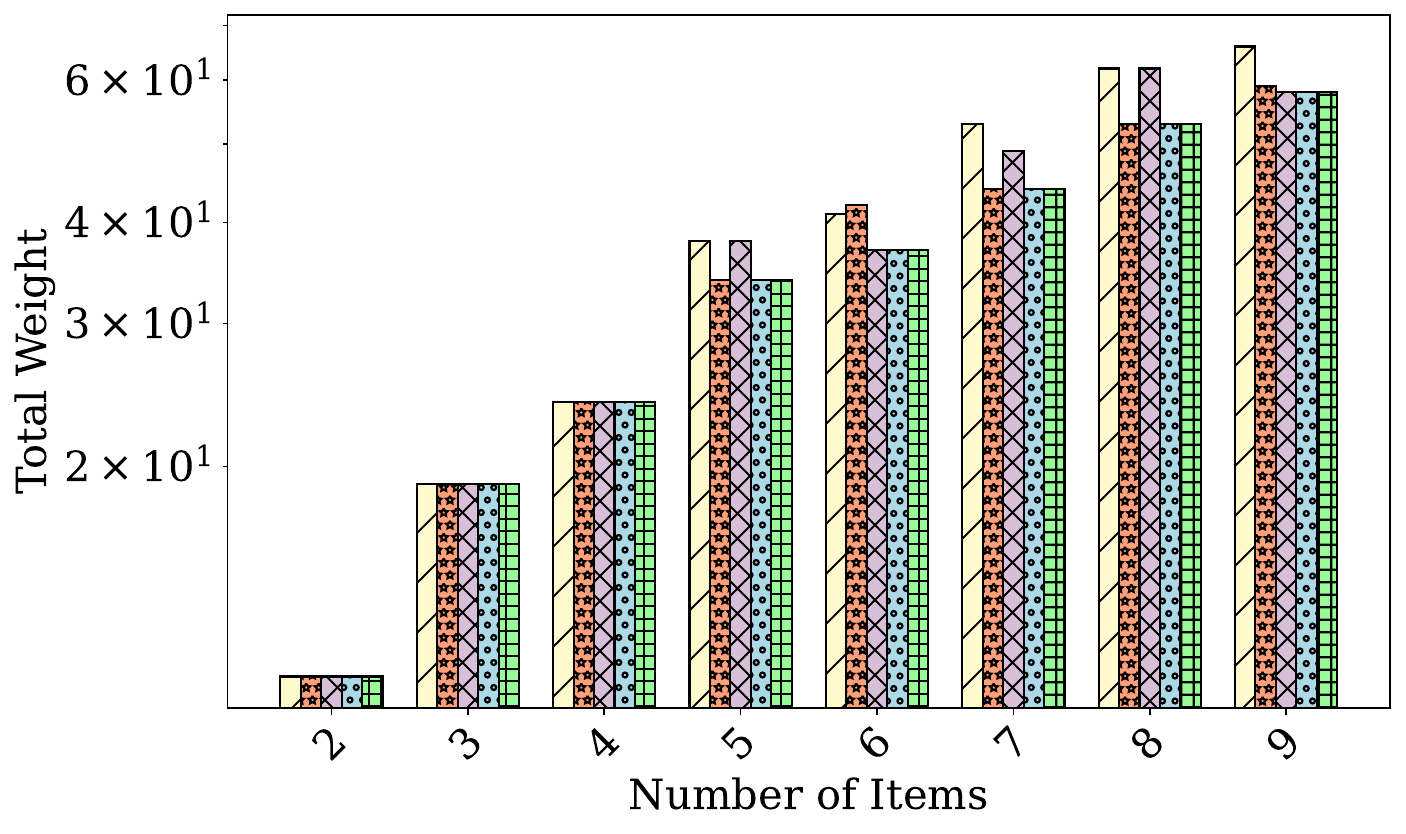}
        \vspace{-2.5em}
        \caption{\revone{\small Impact of varying number of items $\ell$ on the solution’s total weight, {\sf Stack Overflow}}}
        \label{fig:dp_weight_varying_group_stackoverflow}
    \end{minipage}
    \hfill
    \begin{minipage}[t]{0.24\linewidth}
        \centering
        \includegraphics[width=\textwidth]{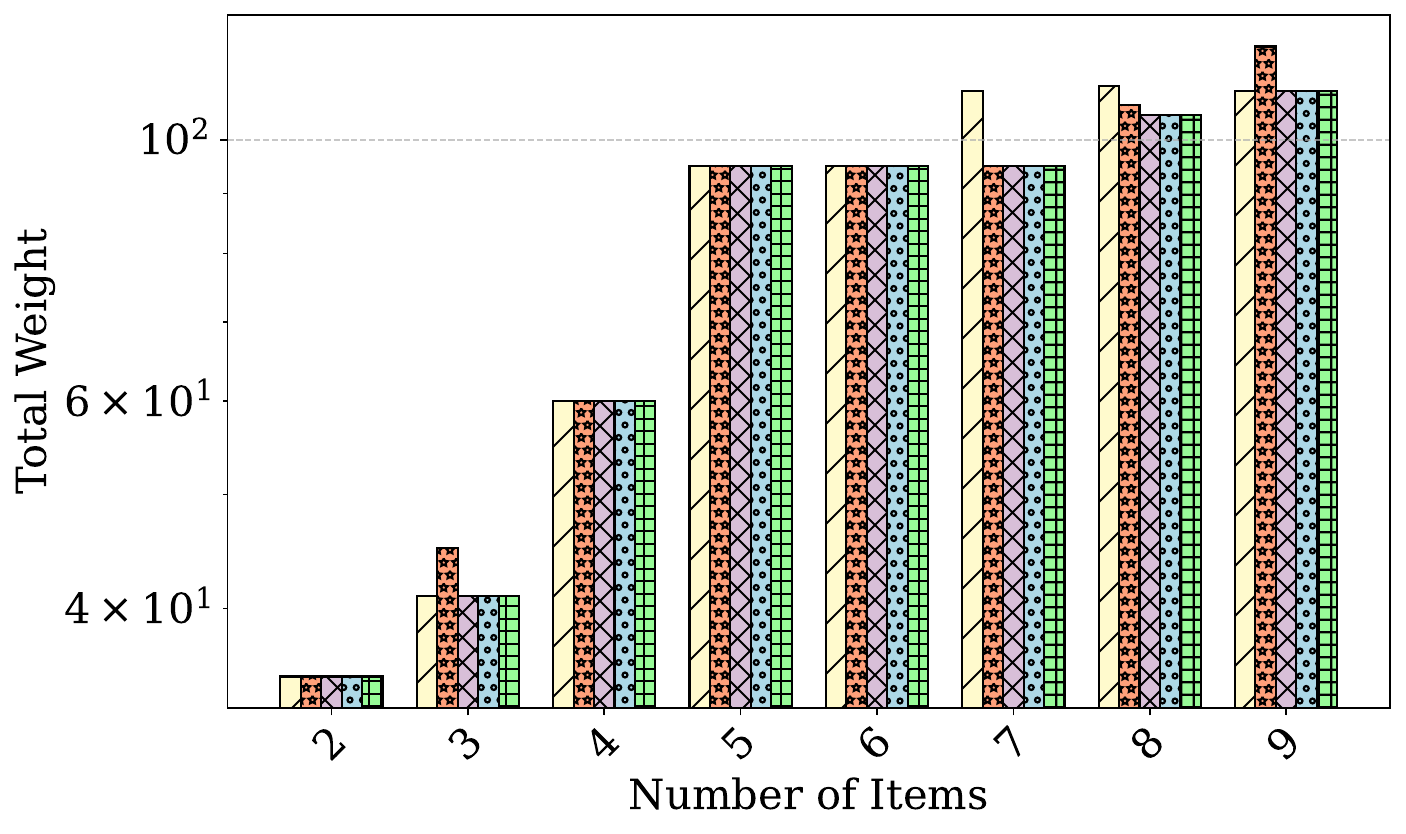}
        \vspace{-2.5em}
        \caption{\revone{\small Impact of varying number of items $\ell$ on the solution’s total weight, {\sf Yelp}}}
        \label{fig:dp_weight_varying_group_yelp}
    \end{minipage}
    \hfill
    \begin{minipage}[t]{0.24\linewidth}
        \centering
        \includegraphics[width=\textwidth]{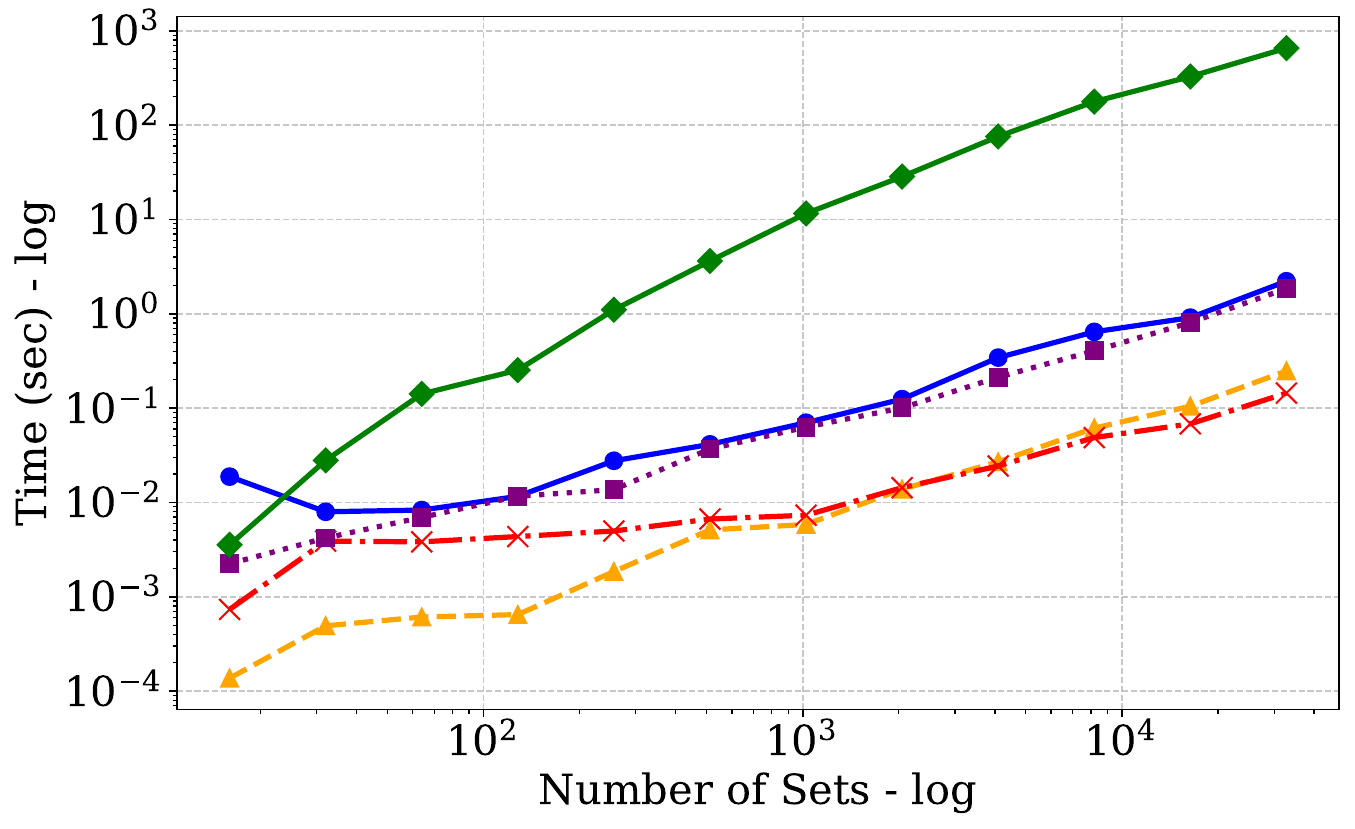}
        \vspace{-2.5em}
        \caption{\revone{\small Impact of varying number of sets $n$ on the running time, {\sf Census}}}
        \label{fig:dp_time_varying_n_census}
    \end{minipage}
    \hfill
    \begin{minipage}[t]{0.24\linewidth}
        \centering
        \includegraphics[width=\textwidth]{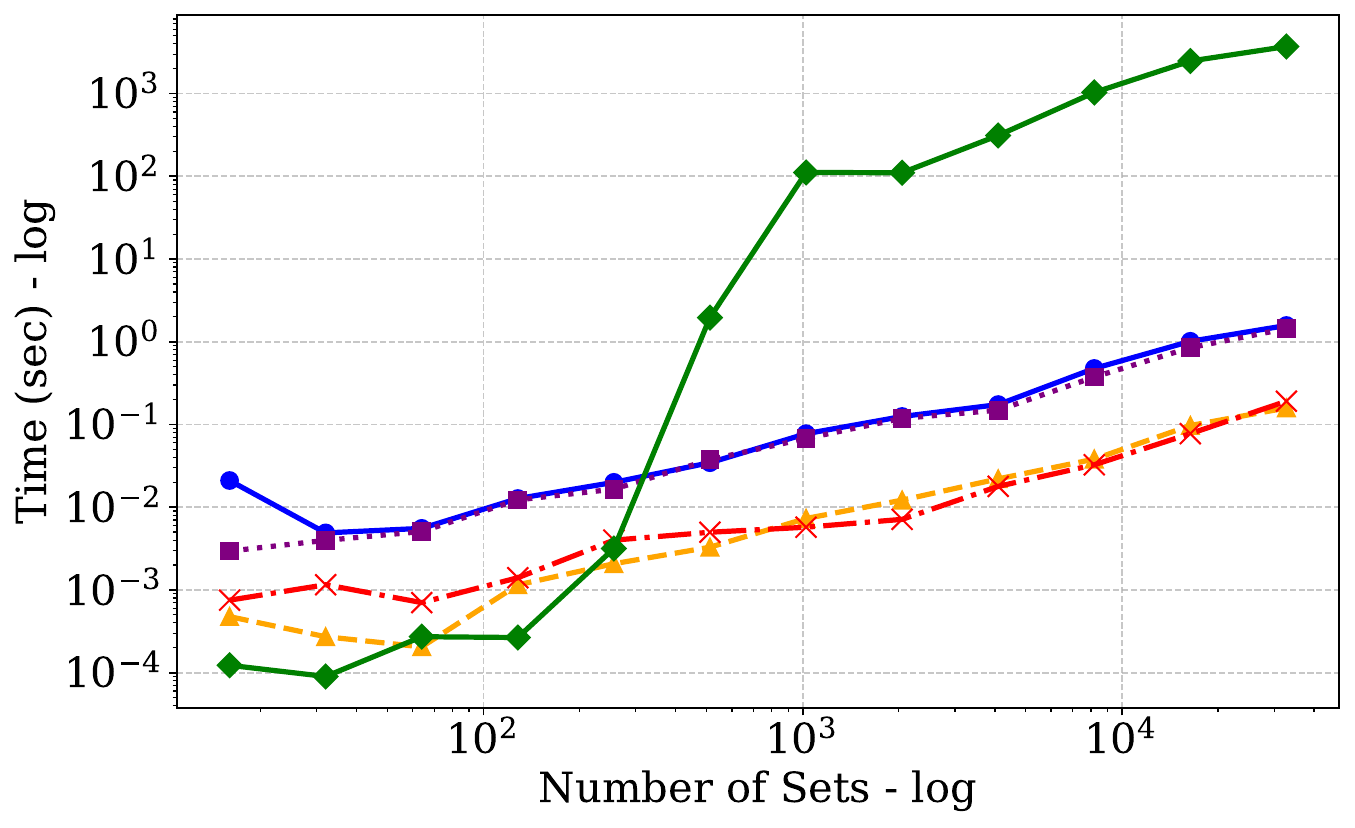}
        \vspace{-2.5em}
        \caption{\revone{\small Impact of varying number of sets $n$ on the running time, {\sf Music}}}
        \label{fig:dp_time_varying_n_music}
    \end{minipage}
    \hfill
    \begin{minipage}[t]{0.24\linewidth}
        \centering
        \includegraphics[width=\textwidth]{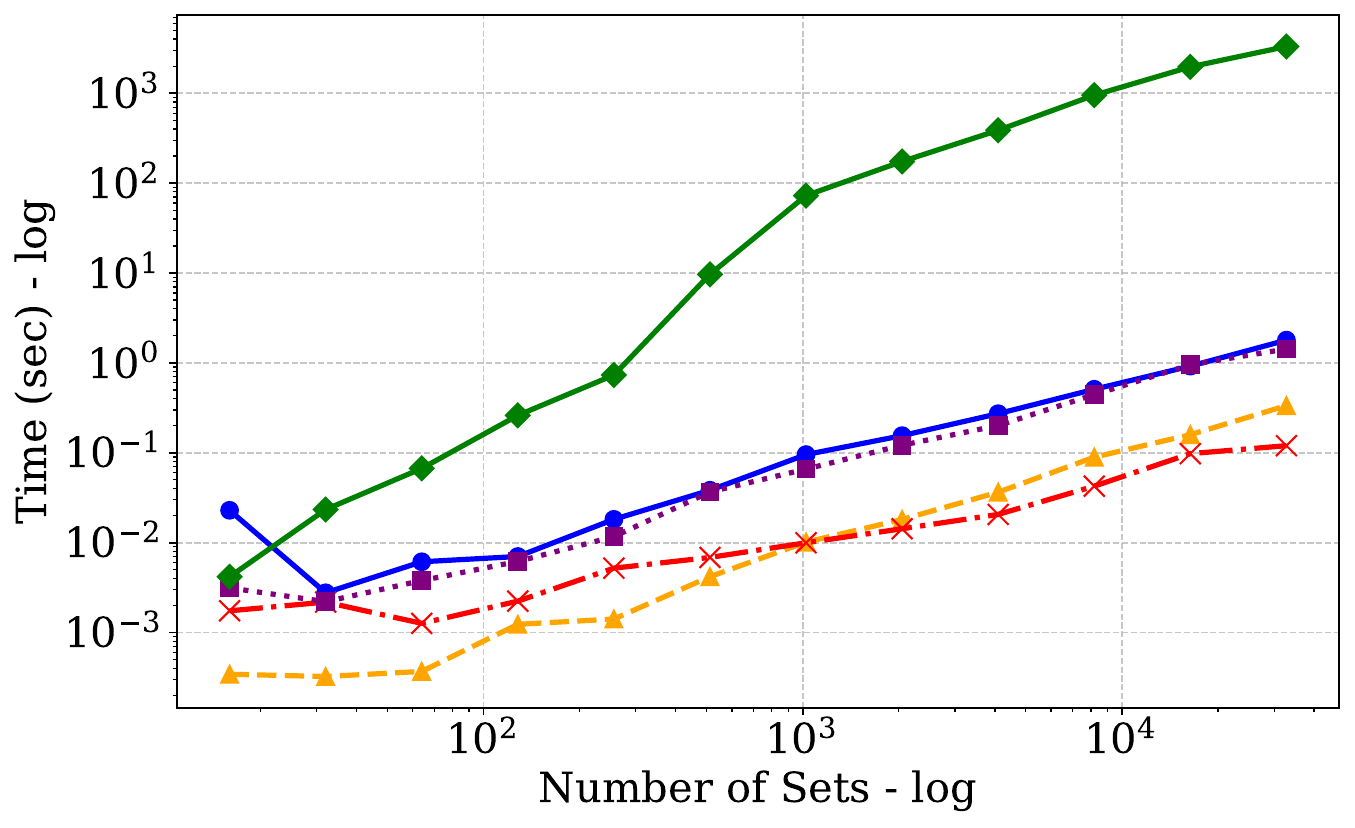}
        \vspace{-2.5em}
        \caption{\revone{\small Impact of varying number of sets $n$ on the running time, {\sf Stack Overflow}}}
        \label{fig:dp_time_varying_n_stackoverflow}
    \end{minipage}
    \hfill
    \begin{minipage}[t]{0.24\linewidth}
        \centering
        \includegraphics[width=\textwidth]{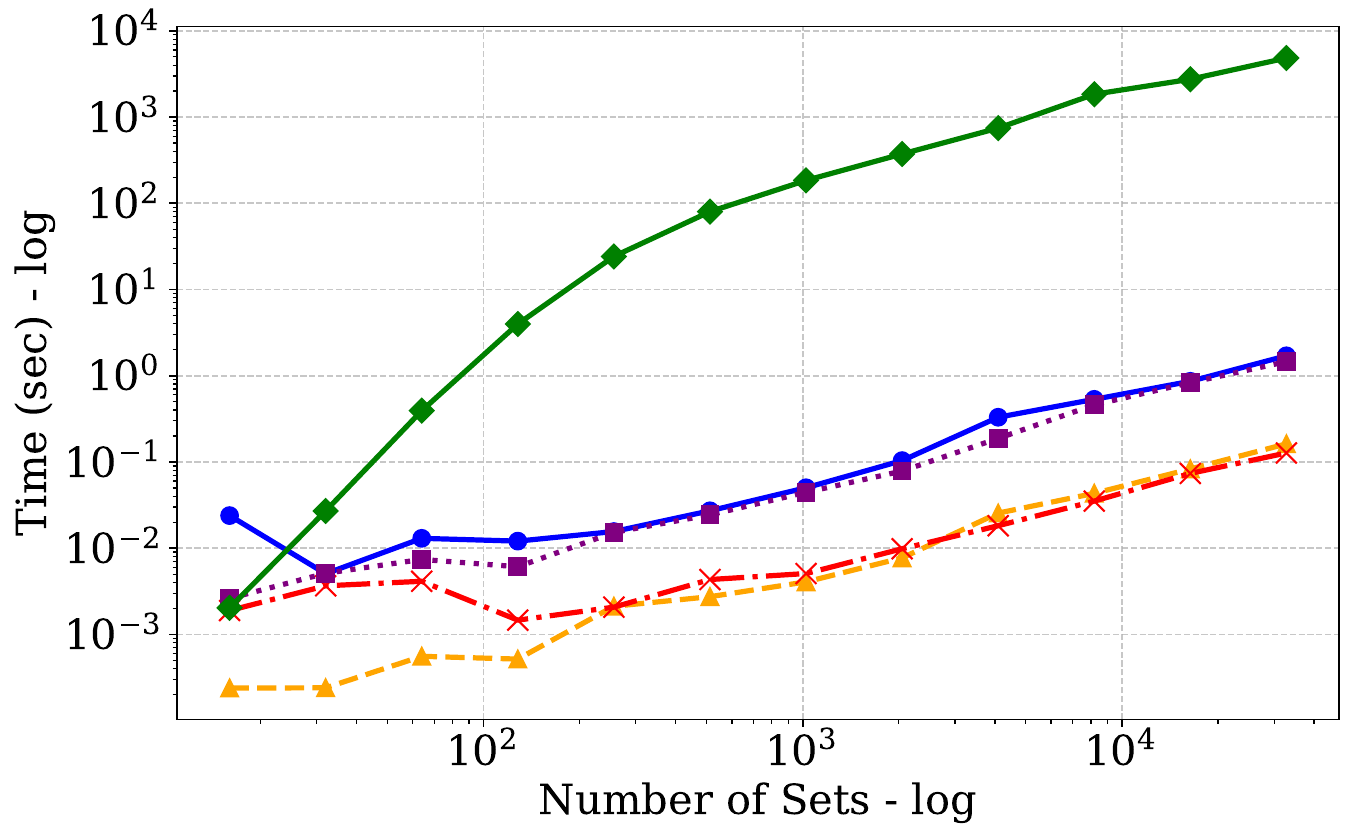}
        \vspace{-2.5em}
        \caption{\revone{\small Impact of varying number of sets $n$ on the running time, {\sf Yelp}}}
        \label{fig:dp_time_varying_n_yelp}
    \end{minipage}
    \hfill
    \begin{minipage}[t]{0.24\linewidth}
        \centering
        \includegraphics[width=\textwidth]{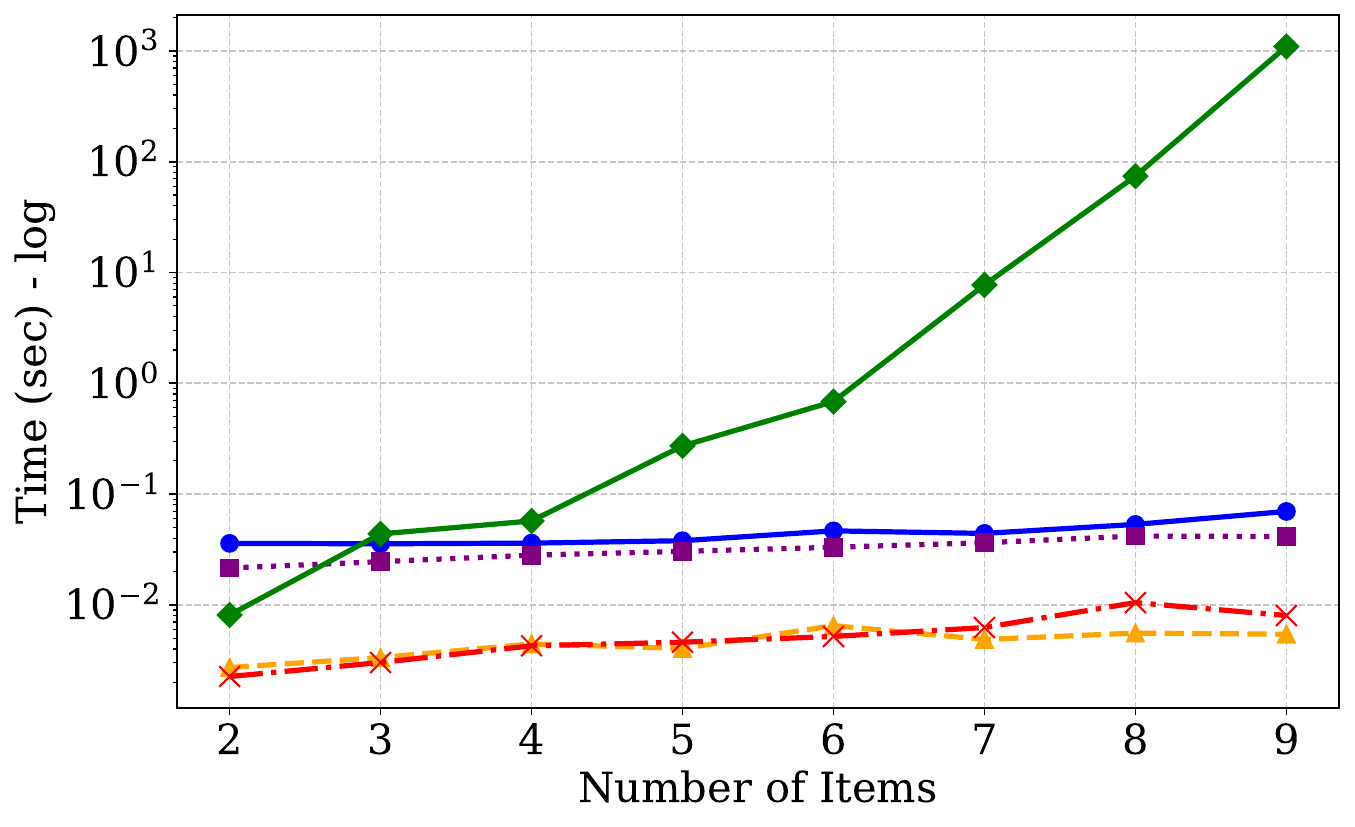}
        \vspace{-2.5em}
        \caption{\revone{\small Impact of varying number of items $\ell$ on the running time, {\sf Census}}}
        \label{fig:dp_time_varying_group_census}
    \end{minipage}
    \hfill
    \begin{minipage}[t]{0.24\linewidth}
        \centering
        \includegraphics[width=\textwidth]{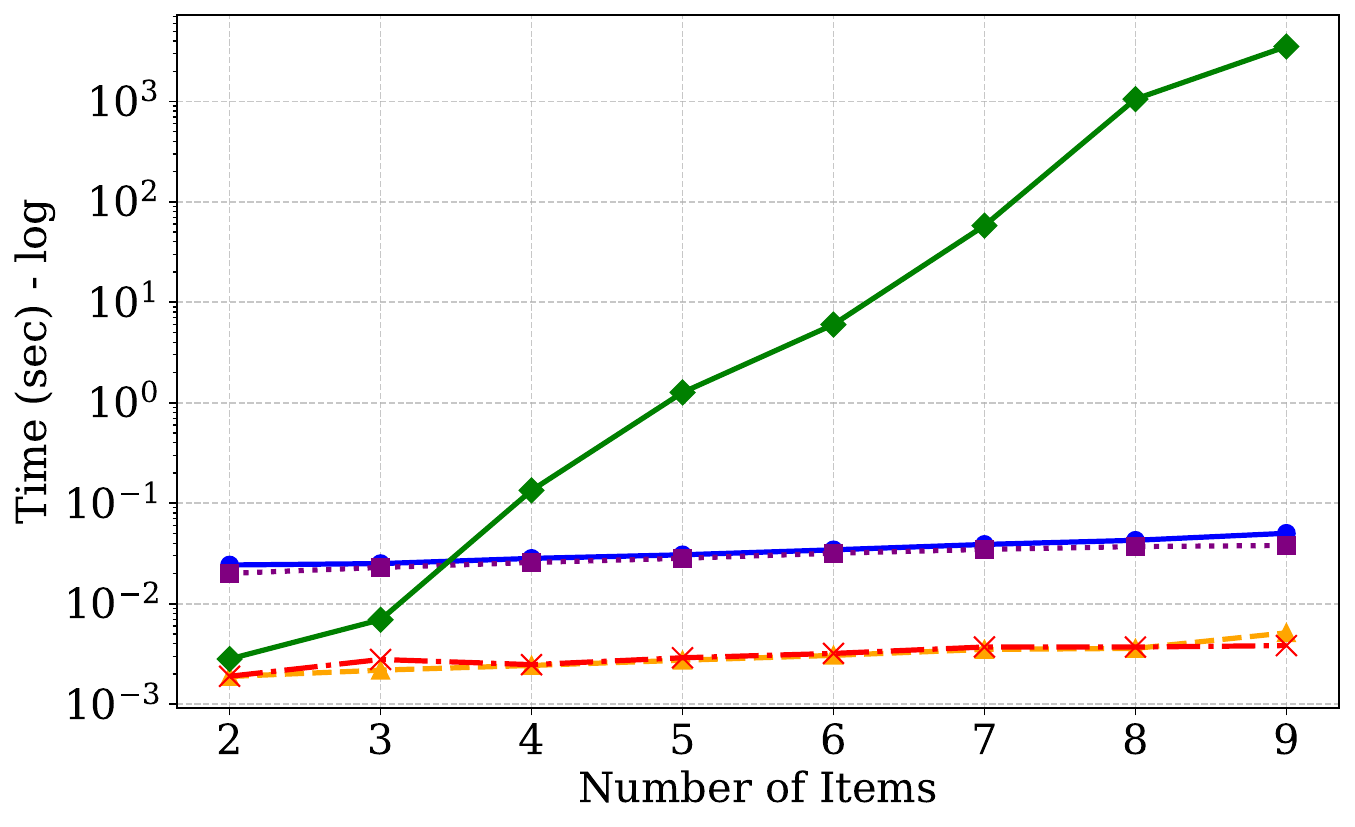}
        \vspace{-2.5em}
        \caption{\revone{\small Impact of varying number of items $\ell$ on the running time, {\sf Music}}}
        \label{fig:dp_time_varying_group_music}
    \end{minipage}
    \hfill
    \begin{minipage}[t]{0.24\linewidth}
        \centering
        \includegraphics[width=\textwidth]{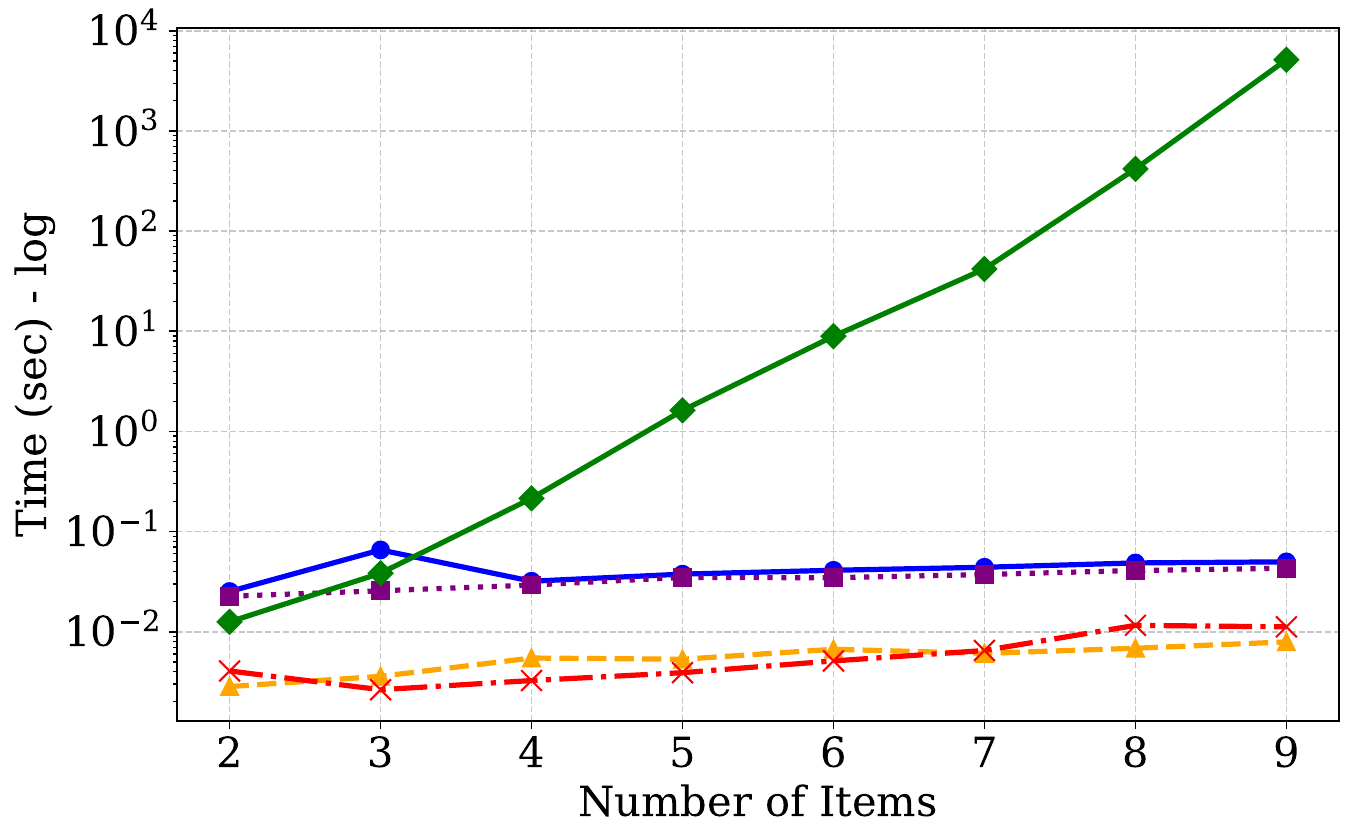}
        \vspace{-2.5em}
        \caption{\revone{\small Impact of varying number of items $\ell$ on the running time, {\sf Stack Overflow}}}
        \label{fig:dp_time_varying_group_stackoverflow}
    \end{minipage}
    \hfill
    \begin{minipage}[t]{0.24\linewidth}
        \centering
        \includegraphics[width=\textwidth]{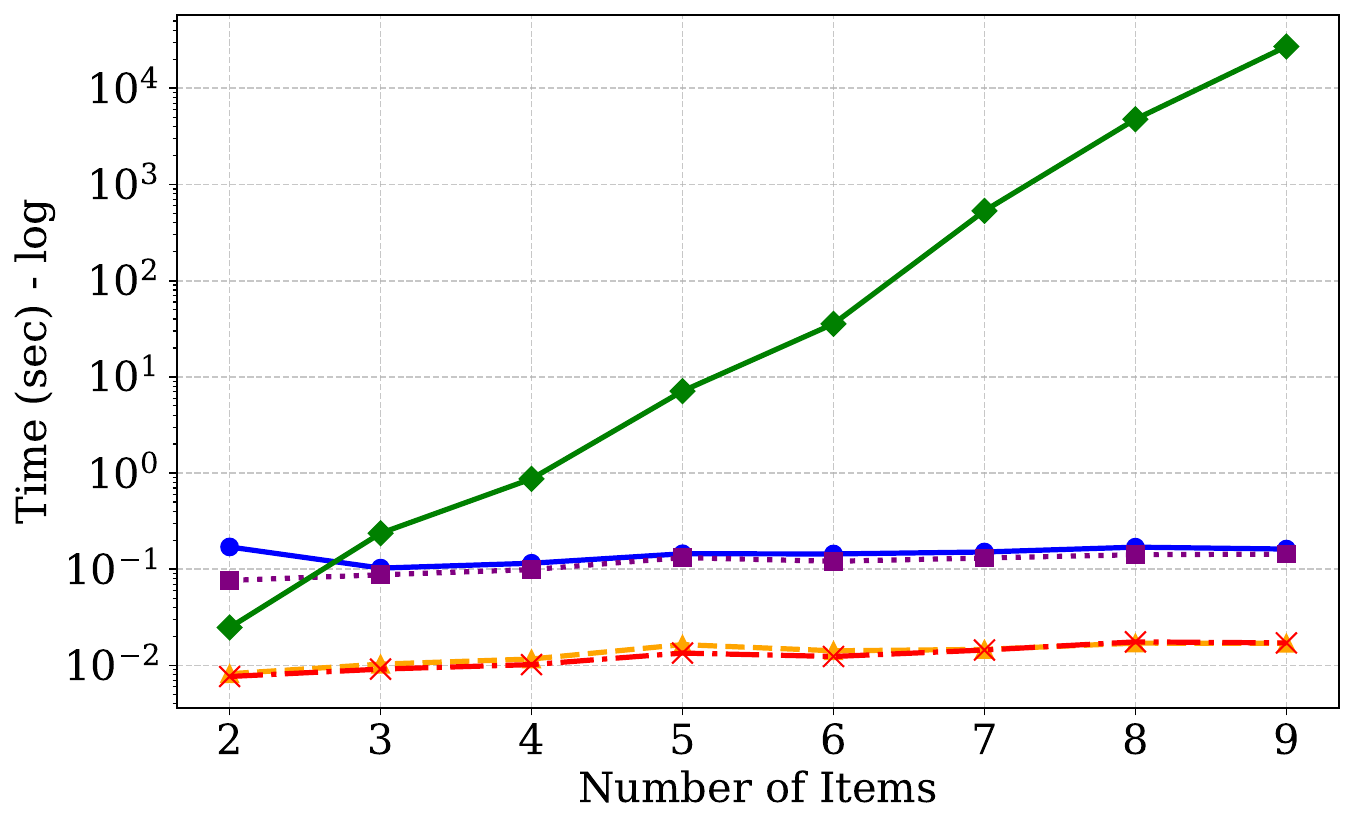}
        \vspace{-2.5em}
        \caption{\revone{\small Impact of varying number of items $\ell$ on the running time, {\sf Yelp}}}
        \label{fig:dp_time_varying_group_yelp}
    \end{minipage}
    \hfill
    \begin{minipage}[t]{0.48\linewidth}
        \centering
        \includegraphics[width=\textwidth]{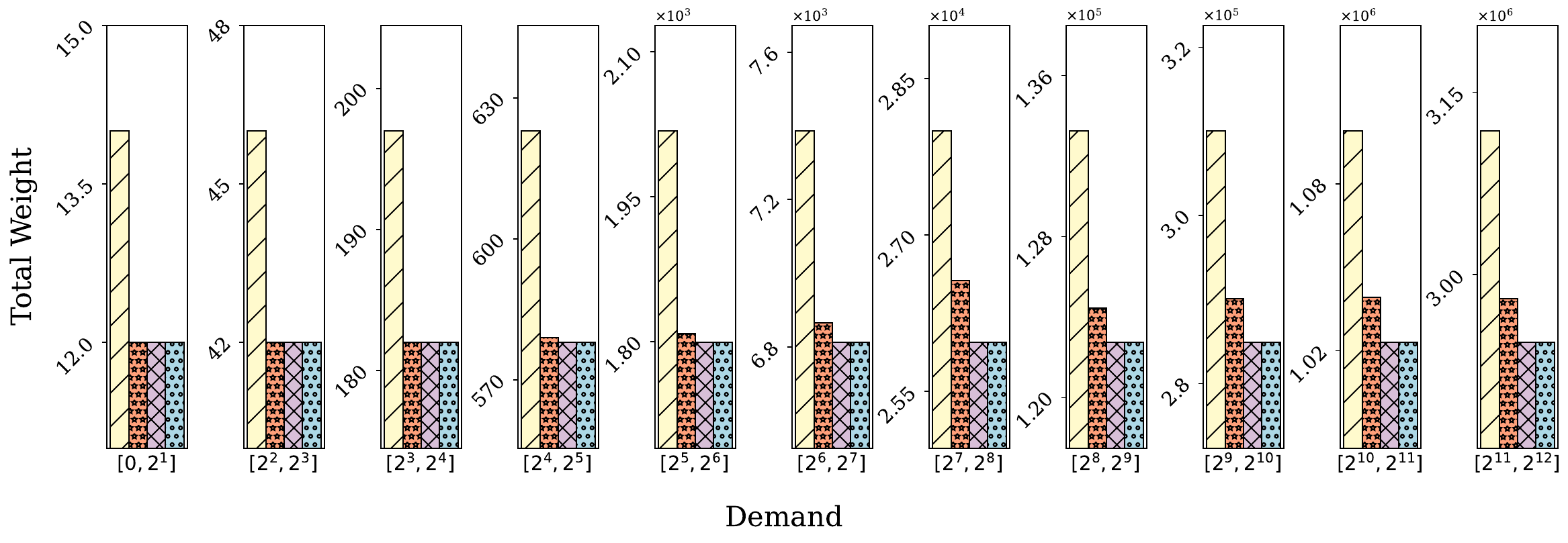}
        \vspace{-2.5em}
        \caption{\revtwo{\small Impact of varying demands on the solution’s total weight, {\sf Census}}}
        \label{fig:weight_varying_demands_census}
    \end{minipage}
    \hfill
    \begin{minipage}[t]{0.48\linewidth}
        \centering
        \includegraphics[width=\textwidth]{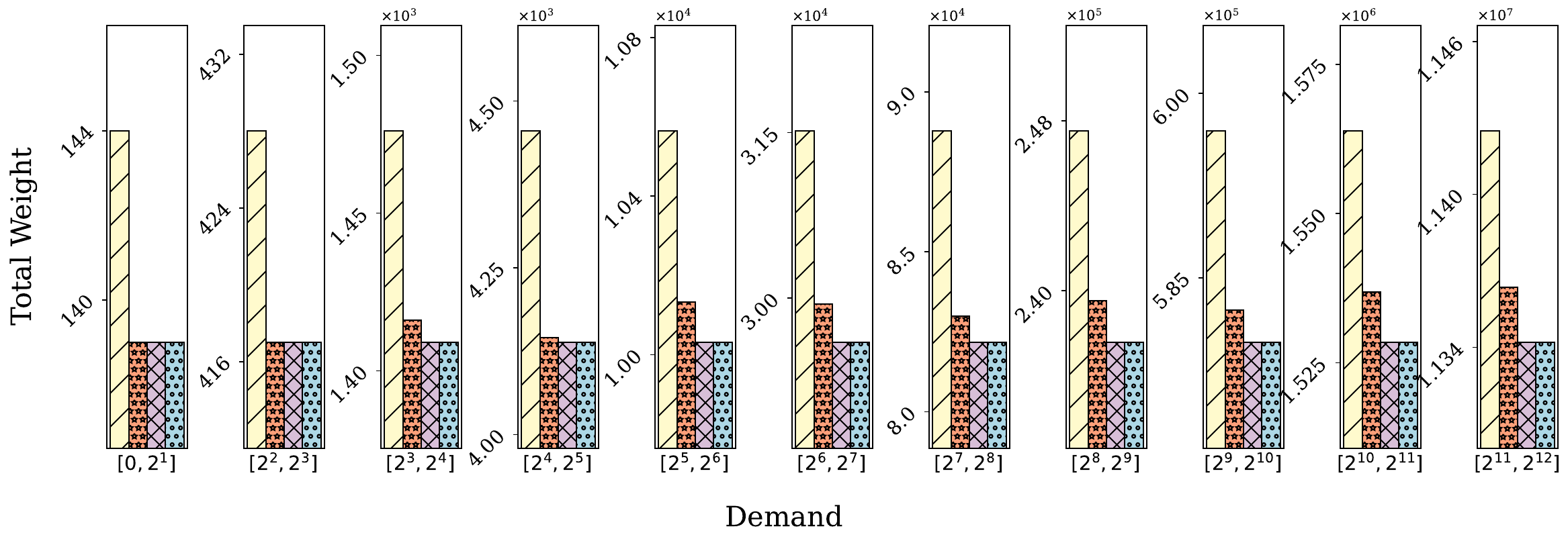}
        \vspace{-2.5em}
        \caption{\revtwo{\small Impact of varying demands on the solution’s total weight, {\sf Music}}}
        \label{fig:weight_varying_demands_music}
    \end{minipage}
\end{figure*}


\end{document}